\documentclass[letterpaper]{article} 
\usepackage[preprint]{aaai2027}  
\usepackage[hyphens]{url}  
\usepackage{graphicx} 
\usepackage{natbib}  
\usepackage{caption} 
\usepackage{amsmath}
\usepackage{amssymb}
\usepackage{hyperref}

\usepackage{amsthm}

\newtheorem{theorem}{Theorem} 
\newtheorem{lemma}[theorem]{Lemma}

\newtheorem{definition}{Definition}
\newtheorem{assumption}{Assumption}

\newcommand{\norm}[1]{\lVert #1 \rVert}

\usepackage{algpseudocode}
\usepackage{algorithm}

\usepackage{newfloat}
\usepackage{listings}
\DeclareCaptionStyle{ruled}{labelfont=normalfont,labelsep=colon,strut=off} 
\floatstyle{ruled}
\newfloat{listing}{tb}{lst}{}
\floatname{listing}{Listing}

\usepackage{booktabs}

\title{Learning to Persuade Privately Informed Receivers}
\author{
    I. Arda Vurankaya,
    Ufuk Topcu
}
\affiliations{
    The University of Texas at Austin
}

\hypersetup{
    pdfauthor={I. Arda Vurankaya, Ufuk Topcu},
    pdftitle={Learning to Persuade Privately Informed Receivers},
    pdfsubject={},
    pdfkeywords={},
    hidelinks
}

\begin{document}

\maketitle

\begin{abstract}
Bayesian persuasion studies how an informed sender can influence the behavior of a receiver through strategic information disclosure. Standard models assume the sender is the receiver's only source of information, yet in many applications receivers also consult external sources the sender can neither observe nor control. We study an online Bayesian persuasion problem in which a binary-action receiver has access to a \emph{fixed} signaling scheme that is \emph{unknown} to the sender. Over $T$ rounds, the sender commits to a signaling scheme and sends a signal; the receiver combines it with its private signal and acts, while the sender observes only the action. We design a learning algorithm that achieves regret $\widetilde{O}(T^{3/4})$ relative to the optimal scheme of a sender who knows the private signaling scheme of the receiver, with polynomial dependence on the sizes of the state space and the receiver's signal alphabet. Our key insight is reducing the problem of learning the exponentially large belief-space partitioning induced by the private scheme to a one-dimensional change-point detection problem.
\end{abstract}


\section{Introduction}

Bayesian persuasion, introduced by \citet{kamenica2011bayesian}, is a model of how an informed sender can strategically disclose information to influence the behavior of decision makers. Since its introduction, it has found applications in many fields, such as online advertising \citep{emek2014signaling}, security \citep{rabinovich2015information}, traffic routing \citep{das2017reducing}, and incentivizing exploration \citep{mansour2016bayesian}. 

In its simplest form, Bayesian persuasion involves two agents: a sender who can observe the state of the world, and a receiver who has to take an action. Both of the agents' utilities depend on the state of the world and the action taken by the receiver. However, the agents' utilities are not perfectly aligned. The sender's goal is to selectively disclose information by publicly committing to a \emph{signaling scheme}, so as to influence the receiver to take a favorable action. 

In many applications of Bayesian persuasion, the receiver does not rely on the sender alone: before acting, it also observes signals about the state of the world from external sources that the sender can neither see nor control. The sender does not know what these sources reveal, and can only detect their effect indirectly, through the actions the receiver takes. This stands in contrast to the most commonly studied model, in which the sender is the receiver's sole information provider. As an example, consider an online marketing platform interacting with a stream of customers. In each interaction the platform observes the product's quality and chooses what to reveal, but customers also consult a common third-party review site before purchasing. The platform never sees what the customer read; it observes only the final decision.

\paragraph{Contributions.} We study an online variant of Bayesian persuasion with a binary-action receiver, where the receiver has access to an \emph{unknown} but \emph{fixed} signaling scheme. To the best of our knowledge, learning to persuade in the presence of an external, unknown signaling scheme has not been studied before.  The interaction between the sender and the receiver proceeds for multiple rounds. At each round, the sender commits to a signaling scheme, observes the state of the world and sends a signal to the receiver accordingly. The receiver observes an additional, conditionally independent signal from a signaling scheme unknown to the sender,  and forms its final belief based on the two signals. After each round, the sender only observes the action taken by the receiver. The sender's goal is to minimize its \emph{regret}, which is defined as the difference between the cumulative expected payoff the sender actually achieves and the payoff it could have achieved if it had full knowledge of the unknown signaling scheme.

We provide a learning algorithm that achieves, by using an explore-then-commit strategy, an expected
regret $\widetilde O(T^{3/4})$ with polynomial dependence on the sizes of the state space and the receiver's signaling alphabet, 
under regularity assumptions on the unknown signaling scheme and the receiver's utility. Our regret bound is information-theoretic: the number of exploration rounds is polynomial in problem parameters, but computing the final commitment takes exponential time.  The main idea behind the algorithm is observing how the \emph{state-conditional}  action distribution of the receiver changes when the sender's signal induces different posteriors. Intuitively, this conditional action distribution captures the additional information available to the receiver beyond what the sender reveals. We show that the problem reduces to learning a partitioning of the simplex by hyperplanes, and this learning problem can be reduced to identifying the jump points of a monotone, piecewise constant function, which allows us to use a recursive binary search procedure. Once the unknown signal likelihoods and hyperplanes are learned, the sender commits to a robust signaling scheme. Our analysis carefully controls the suboptimality in the final commitment in terms of the cost of robustness and other estimation errors.   

\paragraph{Related Work.}
Existing work on persuasion with privately informed receivers largely assumes that the receiver's information structure is known. When private information corresponds to the receiver's utility, \citet{kolotilin2017persuasion,kolotilin2018optimal} characterize optimal signaling schemes for a binary-action receiver. Closer to our model, \citet{hossain2024multi} study receivers who observe signals correlated with the state; a sender's best-response problem in their multi-sender model coincides with the offline, full-information version of ours. In contrast, we study how the sender can learn to persuade when the receiver's signaling scheme is initially unknown.

A separate literature studies online persuasion under uncertainty about the receiver's type. \citet{castiglioni2020online} assume that the finite set of possible types and their associated utility functions are known, but the sender does not know which adversarially selected type it faces in each round; \citet{bernasconi2023optimal} establish optimal regret rates for this setting, and \citet{castiglioni2021multi} extend it to multiple receivers. Other works instead consider unknown fixed primitives: \citet{bacchiocchi2024online} study unknown priors and receiver utilities, while \citet{li2025information} focus on an unknown prior. In all these settings, however, the sender remains the receiver's only source of information about the state. In our setting, an unobserved exogenous signal selects a latent best-response rule in each round, while the unknown private signaling scheme determines both the collection of such rules and their state-dependent occurrence probabilities. Hence, unlike the type-uncertainty setting, neither the active rule nor the underlying information structure is known to the sender.

In terms of what the learner initially knows and the feedback structure, our model is also related to Bayesian Stackelberg games with unknown follower types. \citet{bollini2026learning} show that no-regret learning is generally impossible under action feedback. Their impossibility result does not directly cover our information-design setting: although the sender does not observe the receiver's private signal, it observes the state, which is correlated with that signal. We exploit this additional structure, together with the geometry of binary-action persuasion, to obtain no-regret learning.

\section{Preliminaries}

\paragraph{Notation.} For a set $\mathcal{S} $, $\Delta(\mathcal{S}) $ denotes the set of all probability distributions defined on $\mathcal{S} $. When $\mathcal{S} $ is finite, we define $\Delta_{\geq x}(\mathcal{S}) := \{\mathbb P \in \Delta(\mathcal{S}): \mathbb P(s) \geq x \text{ } \forall s \in \mathcal{S}\}$. We denote by $\text{relint}(\mathcal{S}) $ the relative interior of $\mathcal{S} $. The cardinality of $\mathcal{S} $ is  $|\mathcal{S} | $. We denote by $[k] $ the set of integers $\{1, \ldots, k \} $.

\subsection{Standard Bayesian Persuasion}

A Bayesian persuasion instance with a binary action receiver consists of a finite state space \(\Theta \), an action set \(A = \{a_0, a_1\} \), a prior \(\mu_0 \in \text{relint}( \Delta(\Theta))\), and utility functions $u,u^R : A \times \Theta \to [0,1]$, where $u $ and $u^R $ are the sender's and the receiver's utilities, respectively. Before the state is realized, the sender commits to a signaling scheme $\phi:\Theta \to \Delta(S)$, where \(S\) is a finite signal alphabet.

The interaction proceeds as follows. First, the sender commits to a signaling scheme \(\phi\). Then a state \(\theta \sim \mu_0(\cdot)\) is drawn and observed by the sender. The sender draws a signal \(s \sim \phi(\cdot \mid \theta)\) and sends it to the receiver. Upon observing \(s\), the receiver forms the posterior $\mu_s $ by performing a Bayesian update and chooses an action
\[
a \in  \arg\max_{a' \in A}
\sum_{\theta' \in \Theta} \mu_s(\theta') u^R(a',\theta').
\]
The sender and the receiver obtain payoffs $u(a,\theta) $ and $u^R(a,\theta) $, respectively. We assume that the receiver breaks ties in favor of the sender, which is a standard assumption in the literature \citep{dughmi2016algorithmic}, and we denote the action chosen by the receiver at a posterior $\mu $ as $a^*(\mu) $. We denote by $\Theta_i $ the set of states at which the receiver weakly prefers $a_i $ for $i \in \{0,1\} $. 

The utility of a signaling scheme $\phi $ to the sender is  
\begin{equation}
    V(\phi) = \sum_{\theta \in \Theta} \sum_{s \in S}\mu_0(\theta) \phi(s|\theta)u(a^*(\mu_s), \theta),
\end{equation}
and the sender's problem is computing a $\phi^* \in \arg \max_{\phi} V(\phi) $. We finally remark that a signaling scheme $\phi $ can be equivalently expressed in terms of the set of posteriors induced by it as  $ \{\pi_s, \mu_s\}_{s \in S} $, where $\pi_s := \sum_{\theta}\mu_0(\theta)\phi(s|\theta) $ is the probability that $s $ is sent, and we have $\mu_0 = \sum_s \pi_s \mu_s $, which is the so-called Bayes-plausibility condition. 

\paragraph{Note.} We use $u_{\theta} (a) $ and $\phi_{\theta} (s) $ interchangeably with $u(a, \theta) $ and $\phi(s \mid \theta) $, respectively, in the rest of the paper. 

\subsection{Persuasion with a Privately Informed Receiver}

In contrast to the standard Bayesian persuasion in which the sender has full control over the receiver's information structure, we consider a receiver who has access to another signaling scheme, denoted by $\phi^R: \Theta \rightarrow \Delta(S^R)  $. After receiving signals $s \sim \phi(\cdot|\theta) $ and $r \sim \phi^R(\cdot |\theta) $, the receiver performs a Bayesian update as follows:
\begin{equation} \label{eq: Bayes}
    \mu_{s,r}(\theta) = \frac{\phi(s|\theta)\phi^R(r|\theta)\mu_0(\theta)}{\sum_{\theta'} \phi(s|\theta')\phi^R(r|\theta')\mu_0(\theta')},
\end{equation}
and takes the action $ a^*(\mu_{s,r}) $. For ease of notation, we denote $a^*(\mu_{s,r}) $ as $a^*(s,r) $ in the rest of the paper. Note that we assume $s $ and $r $ are conditionally independent given $\theta $. 

When the sender commits to $\phi $ and the receiver has access to $\phi^R $, the receiver can be thought to have access to a combined signaling scheme, denoted $(\phi, \phi^R) $, with signal alphabet $S \times S^R $ and likelihood values $\phi(s|\theta)\phi^R(r|\theta) $. In this case, the sender's problem is to compute a $\phi^* \in \arg \max_{\phi} V(\left (\phi, \phi^R)\right) $. An optimal $\phi $ can be computed as a solution to the following mixed-integer program \citep{hossain2024multi}:
\begin{subequations}
\label{eq:persuasion}
\begin{align}
\max_{\phi,y}\quad
& \sum_{\theta,s,r}
  \mu_0(\theta)\phi_\theta(s)
  \phi^R_\theta(r)
  u_\theta(a_{y_{s,r}} )
\label{eq:compute-obj}
\\
\text{s.t.}\quad
& \phi_\theta \in \Delta(S),
\quad  \theta\in\Theta,
\label{eq:compute-simplex}
\\
& y_{s,r} \in \{0,1\} ,
\quad (s,r)\in S\times S^R,
\label{eq:compute-one-action}
\\
& \sum_{\theta \in \Theta}
   \mu_0(\theta)\,
   \phi_{\theta} (s )\, 
   n_r(\theta)(-1)^{y_{s,r}} \geq 0, \,
   \label{eq:incentive}
\end{align}
\end{subequations}
where $n_r \in \mathbb{R}^{|\Theta|} $ is the unit vector in the same direction as the vector $\{ \phi^R_{\theta}(r)\left (u^R(a_0, \theta) - u^R(a_1, \theta) \right ) \}_{\theta} $. In the above program, $y_{s,r} $ specifies the action the receiver takes when it observes signals $s $ and $r $ from the sender and the private source, respectively. The constraint \eqref{eq:incentive} is an incentive-compatibility condition, making sure the receiver indeed takes the action specified by $y_{s,r} $. We finally remark that, by the standard concavification argument \citep{kamenica2011bayesian}, there exists an optimal scheme using at most $|\Theta|$ signals.

\subsection{Learning Problem}

In this paper, we consider how a sender can learn to optimally persuade a privately informed binary-action receiver through repeated interactions, when the sender does not have knowledge of $\phi^R $ initially. Hence there is uncertainty about the objective function \eqref{eq:compute-obj} and the feasible set \eqref{eq:incentive}. The sender, however, knows the prior $\mu_0 $, utility functions $u$ and $u^R$, and the size of $S^R $.  

The sender and the receiver interact for $T $ rounds. At each round $t \leq T $, the interaction protocol is as follows:
\begin{enumerate}
    \item The sender publicly commits to a signaling scheme $\phi_t $. 
    \item The state of the world $\theta_t $ is drawn from $\mu_0(\cdot) $.

    \item The sender observes $\theta_t $ and sends $s_t \sim \phi_t(\cdot |\theta_t). $

    \item The receiver observes $s_t  $ and $r_t \sim\phi^R(\cdot|\theta_t) $.

    \item The receiver performs a Bayesian update as in \eqref{eq: Bayes}, and takes the action $a_t = a^*(s_t, r_t) $.   
\end{enumerate}
Let $V^* = \max_{\phi}V((\phi, \phi^R)) $. We define the regret as
\begin{equation}
    R_T = V^*T- \mathbb{E} \left [ \sum_{t =1}^T V(\phi_t, \phi^R)  \right ], 
\end{equation}
where the expectation is with respect to the randomness of the algorithm. Our goal is to design a learning algorithm for the sender such that the regret is sublinear, i.e., $R_T = o(T) $.

\section{The Geometry of Persuading a Privately Informed Receiver}

In this section, we discuss the geometric properties of the problem of persuading a privately informed receiver, which allows us to design a sample-efficient no-regret learning algorithm. 

When the sender induces a posterior $\mu_s \in \Delta(\Theta) $ by sending a signal $s $, the action taken by the receiver depends on which signal $r \in S^R $ the receiver observes. Since $r $ and $s $ are independent given the state $\theta ,$ the receiver takes the action 
\begin{equation}
    a^*(s,r) \in \arg \max_a \sum_{\theta} \mu_s(\theta)\phi^R(r|\theta)u^R(a, \theta),
\end{equation}
This corresponds to a partitioning of $\Delta(\Theta) $ into two regions by a hyperplane $H_r $, with normal vector given by $n_r $.

For all $\mu_s \in \Delta(\Theta) $
 we define the corresponding \emph{state-conditional action distribution}
\begin{equation}
    p_{\mu}(a_0|\theta) = \sum_{r: a^*(s,r)=a_0}\phi^R(r|\theta).
\end{equation}
We define the vector $p_{\mu} \in [0,1]^{|\Theta|} $ such that $p_{\mu}(\theta) = p_{\mu}(a_0|\theta) $. It follows that the set of hyperplanes $\{H_r\}_{r \in S^R} $ forms a partitioning of $\Delta(\Theta) $ into polyhedral cells such that, within each region, $p_{\mu} $ stays constant. Equivalently, each cell $C $ can be associated with a mapping $w_C:S^R \rightarrow A $, where $w_C(r) $ is the action taken by the receiver when the sender's signal induces a posterior $\mu \in C $ and the receiver observes $r \in S^R $. We show in Figure \ref{fig:geometry} a comparison between standard persuasion and our setting in an example with $|\Theta |=3 $ and $|S^R|=3 $.

\begin{figure}[t]
    \centering
    \includegraphics[width=\linewidth]{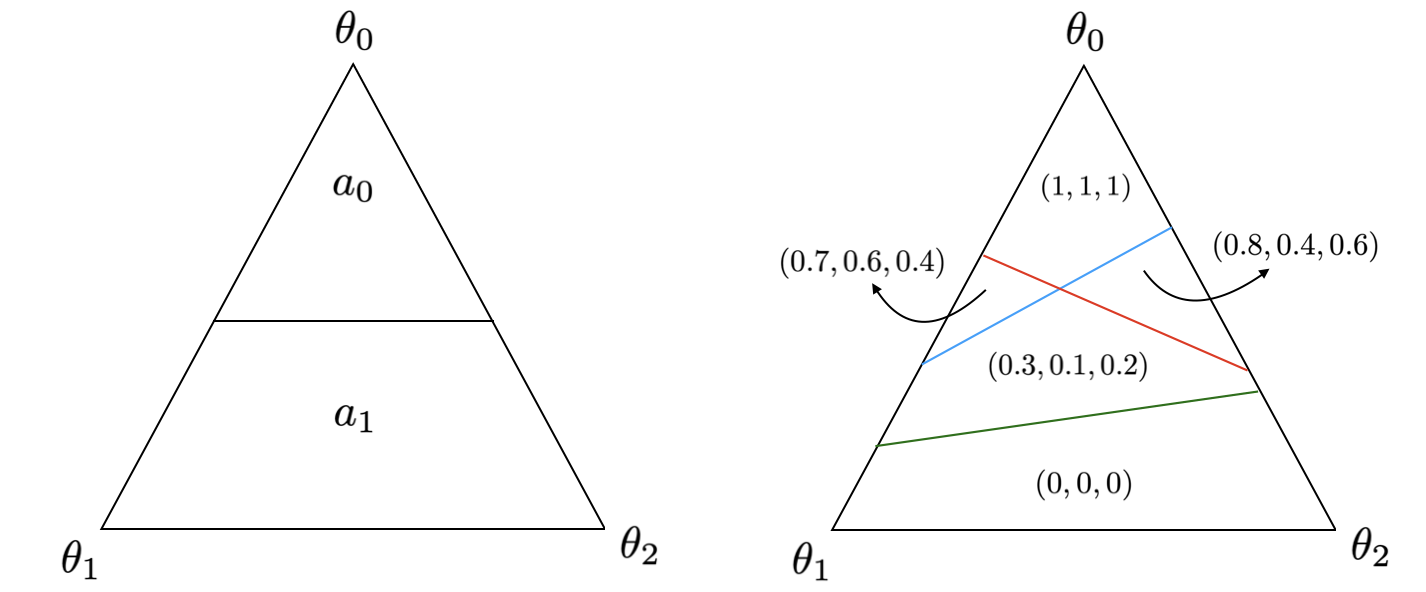}
    \caption{Comparison between the geometry of standard binary-action Bayesian persuasion and Bayesian persuasion with a privately informed receiver, where $|S^R|=3 $. The numerical values shown correspond to $p_{\mu}(a_0|\theta) $. Each hyperplane $H_r $ corresponding to an $r \in S^R $ results in a jump in $p_{\mu} $, where the jump magnitude is given by $\phi^R(r|\theta) $. }
    \label{fig:geometry}
\end{figure}

\section{Learning to Persuade a Privately Informed Receiver}

In this section, we design our learning algorithm. As mentioned in the previous section, the receiver's private signaling scheme $\phi^R $ partitions $\Delta(\Theta) $ based on $p_{\mu} $. Our algorithm learns this partitioning by identifying how $p_{\mu} $ changes across $\Delta(\Theta) $. Efficiently implementing this approach presents several challenges.

The \textbf{first} challenge the algorithm needs to overcome is the fact that the number of polyhedral regions in the partitioning can be as large as $O(|S^R|^{|\Theta|}) $ \citep{xu2020tractability}. A naive approach to the learning problem we study is to learn all such polyhedral regions in isolation, by adopting previous approaches to learning in standard Bayesian persuasion \citep{bacchiocchi2024online}. We instead show that, using the geometry of Bayesian updating and binary action structure, it is enough to learn at most $|S^R| $ hyperplanes, and that this can be done in a number of exploration rounds that is polynomial in $|S^R| $ and $|\Theta| $. The high-level idea is that, on an appropriately chosen line segment, each signal $r \in S^R $ corresponds to a jump in $p_{\mu} $, and the jump magnitude reveals the likelihoods $\{\phi^R(r|\theta)\}_{\theta} $.

The \textbf{second} challenge is that the central quantity of interest, that is $p_{\mu} $, cannot be known exactly, but can only be estimated imperfectly by repeatedly inducing posterior $\mu $ and observing the receiver's action $ a$ and the world state $\theta $. Moreover, it becomes prohibitively expensive to accurately estimate $p_{\mu}(a_0|\theta) $ near the simplex boundary, as some states have very low probability of being observed. 

The \textbf{third} challenge is that some private signals are too costly to learn in isolation. This can be due to two reasons. First, if a signal $r $ has low likelihood values, then the change in $p_{\mu} $ caused by crossing $H_r $ can be too low to detect in a reasonable number of rounds.  Second, if a signal $r $ puts too much likelihood mass on $\Theta_0 $ or $\Theta_1 $, then the corresponding hyperplane $H_r $ is close to the simplex boundary, making it hard to detect $H_r $. We introduce the notion of a $z- $balanced signal to deal with such signals:

\begin{definition}
    We call a signal $r \in S^R $ $z -$balanced if
\begin{equation}
    z \leq\frac{\sum_{\theta \in \Theta_0}\phi^R(r|\theta)}{\sum_{\theta \in \Theta}\phi^R(r|\theta)} \leq 1-z.
\end{equation}
\end{definition}
Informally, being $z- $balanced means that observing the signal $r $ still leaves the receiver uncertain about whether $\theta \in \Theta_0 $ or $\theta \in \Theta_1 $.

\textbf{Finally}, after the sender learns an imperfect estimate $\hat \phi^R $ of $\phi^R $, computing a near-optimal signaling scheme requires carefully accounting for robustness issues due to the estimation errors in the hyperplanes. Sample-efficient control over robustness requires separating learning signal likelihoods $\phi^R(r|\theta) $ from learning the hyperplane normals $n_r $.

As the sender is not able to learn the hyperplanes  exactly, we introduce the following assumption to avoid the possibility that the sender misses any of the existing polyhedral regions:

\begin{assumption}\label{as:gamma}
    There is a $\gamma_{\min} > 0 $ such that, for any cell $C $ associated with a mapping $w_C $, there exists  $\mu_C \in C $ such that, for all $r \in S^R $, we have
\begin{equation}
    |n_r^T\mu_C| \geq \gamma_{\min},
\end{equation}
where $ n_r $ is the unit normal of $H_r $. We assume that $\gamma_{\min} $ is known to the sender.
\end{assumption}

We also introduce the following assumption on the receiver's utility:

\begin{assumption}\label{as:dmin}
    The receiver's utility function $u^R $ satisfies
\begin{equation}
    d_{\min}:= \min_{\theta \in \Theta} \left | u^R(a_0, \theta) - u^R(a_1, \theta) \right | > 0.
    \end{equation}
\end{assumption}

We remark that under Assumption \ref{as:dmin}, $\Theta_0 $ and $\Theta_1$ are disjoint. We also define $d_{\max} := \max_{\theta} \left |u^R(a_0,\theta) - u^R(a_1,\theta) \right | $. Finally, we introduce the following assumption:

\begin{assumption}\label{as:distinct}
    For all $r \neq r'$, the hyperplanes $H_r$ and $H_{r'}$ are distinct.
\end{assumption}
Assumption~\ref{as:distinct} is without loss of generality. Under
Assumption \ref{as:dmin}, $H_r = H_{r'}$ precisely when the likelihood vectors
$\phi^R(r|\cdot)$ and $\phi^R(r'|\cdot)$ are proportional, in which
case the two signals induce identical posteriors in \eqref{eq: Bayes}
and can be merged into a single signal without affecting the
receiver's behavior.

Our proposed learning algorithm is presented in Algorithm \ref{alg:main}. The algorithm employs an explore-then-commit strategy. During the exploration phase, the sender constructs an estimate of $\phi^R $ by learning likelihood values $\hat \phi^R(r|\theta) $ and corresponding hyperplane normals $\hat n_r $. This is accomplished in three phases. First, the algorithm constructs a suitable line segment $l $ through the subroutine $\textsc{Construct-Search-Line} $. Then, the algorithm looks for points on $l $ where there are jumps in $p_{\mu} $, through the subroutine $ \textsc{Find-Crossings}$. Finally, the subroutine $\textsc{Learn-Signals} $ learns the unknown signal likelihoods using the output from $\textsc{Find-Crossings} $. We remark that each subroutine in Algorithm \ref{alg:main} has access to the global counter indicating the current round $t $. After the initial exploration phase, the sender computes a near-optimal signaling scheme $\hat \phi$  and commits to it for the rest of the rounds. 

Our main result is a bound on the regret suffered by Algorithm \ref{alg:main}:

\begin{theorem}\label{theorem:main}
    The regret $R_T $ achieved by Algorithm \ref{alg:main} satisfies
    \begin{equation}
          R_T \leq \tilde O \left ( \left (\frac{d_{\max}^2|S^R|^3 |\Theta|^7}{d_{\min}^2\min_{\theta}\mu_0(\theta)} \right )^{1/4}  T^{3/4} \right). 
    \end{equation}
  
\end{theorem}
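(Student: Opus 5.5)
The plan is the standard explore-then-commit decomposition
\[
R_T \le N_{\mathrm{exp}} + T\,\bigl(V^* - V(\hat\phi,\phi^R)\bigr),
\]
where $N_{\mathrm{exp}}$ is the total number of exploration rounds and we use $u\in[0,1]$. We then tune an accuracy parameter $\epsilon$ so that both terms are balanced. The second term is a \emph{robustness} term. I would show that an $\epsilon$-accurate estimate of the likelihoods and normals yields a commitment whose loss is $O(\mathrm{poly}(|\Theta|,|S^R|,d_{\max}/d_{\min})\,\epsilon)$ plus a contribution from ignoring non-$z$-balanced or low-likelihood signals. The first term should scale as $\tilde O(\mathrm{poly}/(\epsilon^2 \min_\theta\mu_0(\theta)))$, since each estimate of $p_\mu(a_0\mid\theta)$ to accuracy $\epsilon$ costs about $1/(\epsilon^2 \mu(\theta))$ rounds. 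Balancing $T\epsilon$ against $1/\epsilon^2$ would give only $T^{2/3}$. The stated $T^{3/4}$ therefore suggests a less favorable tradeoff: errors in the hyperplane normals enter the value linearly through a margin, and that margin must be made small relative to $\gamma_{\min}$. This suggests $\mathrm{loss} \approx \sqrt{\epsilon}$-type behaviour, or equivalently a cost of $1/\epsilon^4$ samples per unit of accuracy in value. Balancing $T\sqrt{\epsilon}$ against $\epsilon^{-2}$ gives $\epsilon\approx T^{-2/5}$, which does not match. Instead I would aim for exploration cost $\tilde O(C/\eta^{3})$ with value loss $T\eta$, giving $T^{3/4}C^{1/4}$. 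This matches the quartic root of $|S^R|^3|\Theta|^7 d_{\max}^2/(d_{\min}^2\min\mu_0)$.

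\textbf{Exploration.} For \textsc{Construct-Search-Line}, I would pick a segment $l$ joining a point near the $\Theta_0$ face to a point near the $\Theta_1$ face, kept inside $\Delta_{\ge x}(\Theta)$. Along it the scalar $\sum_{\theta\in\Theta_0}p_\mu(\theta)$ (suitably weighted) is monotone and piecewise constant. Every $z$-balanced $H_r$ crosses $l$ at a distinct point (Assumption~\ref{as:distinct}), and Assumption~\ref{as:dmin} bounds the slopes so that crossings are separated from the boundary by a distance of order $z\,d_{\min}/d_{\max}$. \textsc{Find-Crossings} then runs a recursive binary search for the change points of this monotone step function. Each query estimates $p_\mu$ at a point by repeatedly inducing $\mu$ via a two-signal scheme and averaging. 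There are at most $|S^R|$ jumps, each located to resolution $\delta$ after $O(\log(1/\delta))$ queries, so the total is $\tilde O(|S^R|\,\cdot\,\text{(queries per estimate)})$. The jump vector on either side of a crossing gives $\hat\phi^R(r\mid\cdot)$, or a sum over signals sharing a crossing region, to accuracy $\epsilon$. \textsc{Learn-Signals} then recovers $\hat n_r$ from these likelihoods and the known $u^R$; normalizing by $d_{\min}$ contributes the $d_{\max}/d_{\min}$ factor. Signals that are not $z$-balanced or have small mass are merged or dropped, and their total effect on $V$ is bounded by $O(|S^R|(z+\epsilon))$.

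\textbf{Commitment and the main obstacle.} I would solve program~\eqref{eq:persuasion} with $\hat\phi^R$ and tightened incentive constraints $\ge \gamma$ margins. Assumption~\ref{as:gamma} guarantees that every cell contains a point $\gamma_{\min}$-deep, so I would compare against the true optimum by moving each optimal posterior toward $\mu_C$ by a factor $\lambda\approx(\text{normal error})/\gamma_{\min}$. This loses $O(\lambda)$ in value, via the standard convex-combination-with-prior argument, while ensuring the estimated and true best responses agree. I expect this robustness step to be the main obstacle: showing that the deformation keeps Bayes-plausibility with at most $|\Theta|$ signals, and that likelihood errors in the objective and normal errors in the constraints combine to give a loss linear in $\epsilon\,\mathrm{poly}(|\Theta|)$ uniformly over exponentially many cells. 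Once this is done, plugging in the exploration count and optimizing $\epsilon$ and $z$ yields the stated bound.
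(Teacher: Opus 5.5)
You never actually derive the $T^{3/4}$ rate. You observe that your own accounting gives $T^{2/3}$: exploration $\propto 1/(\epsilon^2\min_\theta\mu_0(\theta))$ and commitment loss $\propto\epsilon$. You then posit an exploration cost $\tilde O(C/\eta^{3})$ and assert that $C$ matches. The missing idea is where the third power comes from. Lemma~\ref{lem:estimate} charges order $|\Theta|^2/(\min_\theta\mu(\theta)\,\min_\theta\mu_0(\theta)\,\epsilon^2)$ rounds per profile estimate. The posteriors near the ends of the search line, which you must query to locate barely $z$-balanced hyperplanes, have smallest coordinate of order $t_{\min}x_{\min}\asymp z\,d_{\min}/(|\Theta|^2 d_{\max})$. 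You note this boundary distance yourself but never feed it into the query cost. So each query costs order $1/(z\epsilon^2)$. Meanwhile the non-$z$-balanced signals cost $O((1+d_{\max}/d_{\min})|\Theta|z)$ in value, and only because they are merged into $r^0,r^1$ whose aggregate likelihoods are read off the profiles at $\mu_\pm$. Taking $z$, $p_{\min}$ and $\epsilon$ proportional to a common $\rho$ pits an exploration cost of order $\rho^{-3}$ against a commitment loss $T\rho$; that is the source of $T^{3/4}$. With your formulas, ``optimizing $z$'' would send $z\to0$ at no cost and wrongly give $T^{2/3}$. The factor $|S^R|^3|\Theta|^7 d_{\max}^2/d_{\min}^2$ must likewise be counted, not matched. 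Up to logarithms it is $O(|S^R||\Theta|^2)$ queries times a per-query cost of order $(d_{\max}/d_{\min})^2|S^R|^2|\Theta|^5/(\min_\theta\mu_0(\theta)\rho^3)$. The query count comes from bisection depth $\log(1/\eta_{\mathrm{sep}})=O(|\Theta|\log(\cdot))$ per signal plus $O(|\Theta|^2)$ pivot queries per hyperplane.

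Two concrete steps would also fail as written.

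\textbf{Normals from likelihoods.} You obtain $\hat n_r$ by normalizing $(\hat\phi^R(r\mid\theta)(u^R(a_0,\theta)-u^R(a_1,\theta)))_\theta$. This amplifies a likelihood error $\epsilon$ by a factor of order $(d_{\max}/d_{\min})\sqrt{|\Theta|}/m(r)$, i.e.\ up to order $1/p_{\min}$ for the lightest detected signals. The robust commitment tilts by $\lambda\asymp\epsilon_h/\gamma_{\min}$ toward $\mu_C$ and restores Bayes-plausibility with an extra signal of probability $\asymp\lambda/\min_\theta\mu_0(\theta)$. So it loses $O(\lambda/\min_\theta\mu_0(\theta))$, not $O(\lambda)$, i.e.\ $\epsilon_h/(\gamma_{\min}\min_\theta\mu_0(\theta))$. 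Keeping this below $\rho$ forces $\epsilon$ of order $\rho^2\gamma_{\min}\min_\theta\mu_0(\theta)$, up to polynomial factors. Against the $1/(z\epsilon^2)$ exploration cost this yields only $T^{5/6}$, plus a polynomial dependence on $1/\gamma_{\min}$ that the theorem does not have. The paper decouples the two: \textsc{Learn-Hyperplane} learns $n_r$ by active halfspace learning. Each side-membership query needs only accuracy $p_{\min}/48$, and $O(|\Theta|^2+|\Theta|\log(|\Theta|/\epsilon_h))$ queries suffice, so normal accuracy costs only logarithmically.

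\textbf{Isolating crossings.} Assumption~\ref{as:distinct} does not make the crossings of a given line distinct: two distinct hyperplanes can meet $l$ at the same point. Your fallback, a jump that is ``a sum over signals sharing a crossing region'', yields a pseudo-signal with no single hyperplane, so the commitment program would be misspecified. The paper samples $x_0,x_1$ uniformly and uses Assumption~\ref{as:gamma} to obtain, with high probability, the separation $\eta_{\mathrm{sep}}$ of Lemma~\ref{lem:separation}. This sets the bisection resolution and guarantees that each output interval contains exactly one $H_r$, so that $p_u-p_v=\phi^R(r\mid\cdot)$.

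Finally, the Bayes-plausibility issue you flag as the main obstacle is easy: the paper handles it with one extra low-probability signal $s^b$, so there is no need to stay within $|\Theta|$ signals.
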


Algorithm \ref{alg:main} achieves polynomial dependence on $|S^R|$ and $|\Theta|$ despite the exponentially many cells: along the randomized search line, $p_{\mu}$ is monotone and piecewise constant, reducing the problem to locating at most $|S^R|$ change points. Separately learning signal likelihoods and hyperplane normals yields logarithmic dependence on $1/\gamma_{\min}$. The $T^{3/4}$ rate combines the explore-then-commit tradeoff with the cost of estimating state-conditional action distributions near the simplex boundary.

\begin{algorithm}[t]
\caption{\textsc{Learn-to-Persuade-Binary-Receiver}}
\label{alg:main}
\begin{algorithmic}[1]
\Require  Time Horizon $T$

\State $t \gets 1 $
\State $\delta \gets 1/T  \quad $
\State Set $z_{\min}, p_{\min}, \epsilon_m, \epsilon_h   $ according to \eqref{eq:assignments}.

\Statex
\State $(\mu_-, \mu_+) \gets \textsc{Construct-Search-Line}(z_{\min}) $

\State $\eta \gets \eta_{\mathrm{sep}}(\gamma_{\min}, |\Theta|, |S^R|, z_{\min}, \delta/5) $

\Statex
\State $P \gets \Call{Find-Crossings}{\mu_-, \mu_+, \eta/4 , p_{\min}, \delta/5}$

\State $\hat \phi^R \gets \textsc{Learn-Signals}(P, \epsilon_h, \epsilon_m, p_{\min} , \delta/5) $
\State $r^0 \gets \Call{Estimate-Profile}{\mu_+, \epsilon_m, \delta/5}$
\State $r^1 \gets \mathbf{1} - \Call{Estimate-Profile}{\mu_-, \epsilon_m, \delta/5}$

\Statex
\State $\widehat \phi \gets \Call{Compute-Signaling}{\hat \phi^R ,r^0,r^1, \epsilon_h}$

\While{$t \leq T$}
    \State $\phi_t \gets \widehat \phi$
    \Comment{Commit to $\widehat \phi$ at round $t$}
    \State $t \gets t+1$
\EndWhile

\end{algorithmic}
\end{algorithm}

\subsection{$\textsc{Construct-Search-Line}$}

The algorithm starts by constructing a line segment $l $ to subsequently analyze how $p_{\mu} $ changes on $l $. This is an important step of the algorithm, as it reduces the problem of learning $\phi^R $ to identifying change-points of $p_{\mu} $ along $l $, hence helping us avoid learning each cell in isolation. We provide the pseudocode for \textsc{Construct-Search-Line} in Algorithm \ref{alg:line}. 

The construction of $l $ is simple; we uniformly sample $x_0 \sim \Delta_{\geq x_{\min}}(\Theta_0) $ and $x_1 \sim \Delta_{\geq x_{\min}}(\Theta_1) $, where $x_{\min} = \frac{1}{2|\Theta|} $, and let $l = \{(1-t)x_0 + tx_1:t \in [t_{\min}(z_{\min} ), 1-t_{\min}(z_{\min})] \} $ . Such a line segment $l $ has a number of important properties: $(i) $ for all $z_{\min} -$balanced $r \in S^R $, $l $ intersects $H_r $, $(ii)$ when restricted to $l $, $p_{\mu}(a_0|\theta) $ is monotone non-increasing and piecewise constant for all $\theta $, $(iii)$ for all $r $ such that $l \cap H_r \neq \emptyset $, $l $ intersects $H_r $ at a point that is, with high probability, away from other $H_{r'} $ for $r' \neq r $.

Since the algorithm later estimates $p_{\mu} $ along the line segment $l $, we construct $l $ such that for all $\mu \in l $, $\mu(\theta) $ is not arbitrarily small for any $\theta \in \Theta $. To this end, we sample $x_0 $ and $x_1 $ so that they put at least $x_{\min} $ probability to every state in $\Theta_0 $ and $\Theta_1 $, respectively. Similarly, the algorithm does not search over the whole segment between $x_0 $ and $x_1 $, but instead searches over a truncated version with endpoints parameterized by $t_{\min}=d_{\min}x_{\min} z_{\min}/2d_{\max} $. Although this means the algorithm may fail to catch $r \in S^R $ that are not $z_{\min}- $ balanced, we later show in Lemma \ref{lem:subopt} that the loss in utility caused by this is $O(|\Theta|z_{\min}) $.

The following lemma formalizes properties $(i)-(iii) $ by giving a high probability lower bound on the separation between $l \cap H_r $ and $H_{r'} $ for all $r,r' \in S^R $, measured within the affine hull of $\Delta(\Theta) $ :
\begin{lemma} \label{lem:separation}
    Under Assumptions \ref{as:gamma}, \ref{as:dmin} and \ref{as:distinct}, the line segment $l $ with endpoints given by $\textsc{Construct-Search-Line} $ satisfies properties $(i) $ and $(ii) $. Furthermore, for all $\delta > 0$, there is an $\eta_{\mathrm{sep}}( \gamma_{\min}, |\Theta|, |S^R|, z_{\min}, \delta) >0$ such that, with probability at least $1- \delta$, for all $H_r $ that intersects $l $, 
    \begin{equation}
        \mathrm{dist}(l \cap H_r, H_{r'}) \geq \eta_{\mathrm{sep}},  \quad \forall r' \in S^R, r' \neq r ,
    \end{equation}
    where
    \begin{equation}
        \log \left (\frac{1}{\eta_{\mathrm{sep}} } \right) = {O}\left(|\Theta|\log\left (\frac{|\Theta|d_{\max}}{d_{\min}z_{\min}\gamma_{\min}}\right )  + \log \frac{|S^R|}{ \delta}\right) .
    \end{equation}
\end{lemma}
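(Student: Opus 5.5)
We prove properties (i) and (ii) deterministically, using the sign structure of the normals, and then prove the separation bound with an anti-concentration argument over the random choice of $(x_0,x_1)$.

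\textbf{Properties (i) and (ii).} Write $g_r(\mu) := \sum_\theta \mu(\theta)\phi^R(r|\theta)(u^R(a_0,\theta)-u^R(a_1,\theta))$, so that $n_r^T\mu \propto g_r(\mu)$. By Assumption~\ref{as:dmin}, the weight $u^R(a_0,\theta)-u^R(a_1,\theta)$ is at least $d_{\min}$ on $\Theta_0$ and at most $-d_{\min}$ on $\Theta_1$. Hence $g_r(x_0)>0$ and $g_r(x_1)\le 0$ for every $r$ whose likelihood is not identically zero on $\Theta_0$ or on $\Theta_1$.

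Along $\mu(t)=(1-t)x_0+tx_1$, the map $t\mapsto g_r(\mu(t))$ is affine, so it changes sign at most once, from $a_0$-preferred to $a_1$-preferred. Consequently, for each $\theta$, $p_{\mu(t)}(a_0|\theta)=\sum_r \phi^R(r|\theta)\mathbf 1[g_r(\mu(t))\ge 0]$ is non-increasing and piecewise constant, with a jump of size $\phi^R(r|\theta)$ at $H_r$. This gives (ii).

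For (i), take a $z_{\min}$-balanced $r$ and write $m=\sum_\theta\phi^R(r|\theta)$. Because every coordinate of $x_i$ on $\Theta_i$ is at least $x_{\min}$,
\[
g_r(x_0)\ \ge\ d_{\min}x_{\min}z_{\min}m,\qquad |g_r(x_1)|\ \le\ d_{\max}m.
\]
The same bounds hold with the roles of $x_0$ and $x_1$ swapped. The crossing parameter is $t^*=g_r(x_0)/(g_r(x_0)-g_r(x_1))$, so
\[
t^*\ \ge\ \frac{d_{\min}x_{\min}z_{\min}}{2d_{\max}}=t_{\min},
\]
and symmetrically $t^*\le 1-t_{\min}$. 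Thus $H_r$ meets the truncated segment $l$.

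\textbf{Separation.} Fix a pair $r\neq r'$. The point $l\cap H_r$ is $\mu^*=\mu(t^*)$, a rational function of $(x_0,x_1)$. We need $|n_{r'}^T\mu^*|\ge\eta$, and within the affine hull this quantity is comparable to $\mathrm{dist}(\mu^*,H_{r'})$ up to a factor of order $\sqrt{|\Theta|}$. Clearing denominators, $n_{r'}^T\mu^*\cdot(g_r(x_0)-g_r(x_1))$ equals
\[
F(x_0,x_1)=g_r(x_0)\,g'_{r'}(x_1)-g_r(x_1)\,g'_{r'}(x_0),
\]
up to normalization, where $g'_{r'}$ is the linear form with normal $n_{r'}$. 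This $F$ is a bilinear polynomial of degree 2 in the random coordinates.

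We first show that $F$ is not identically zero on the sampling domain, with a quantitative lower bound on its size. By Assumption~\ref{as:distinct}, $n_r$ and $n_{r'}$ are not parallel, so there exist $x_0,x_1$ in the domain for which the $r$-crossing point does not lie on $H_{r'}$. We make this quantitative with Assumption~\ref{as:gamma}: it supplies points at distance at least $\gamma_{\min}$ from every hyperplane. We then lower bound the sup-norm of $F$ (or one of its coefficients) over the product of the shrunken simplices $\Delta_{\ge x_{\min}}(\Theta_i)$ by a quantity polynomial in $\gamma_{\min}$, $d_{\min}/d_{\max}$, $z_{\min}$ and $1/|\Theta|$. \textbf{This is the step we expect to be the main obstacle.} It requires transferring the cell-interior condition of Assumption~\ref{as:gamma} to the specific family of segments between $\Theta_0$ and $\Theta_1$ faces. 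We would first try to exhibit an explicit pair $(x_0,x_1)$, obtained by decomposing a witness point $\mu_C$ of a cell adjacent to both $H_r$ and $H_{r'}$ into its $\Theta_0$ and $\Theta_1$ parts.

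We then apply a Carbery--Wright-type anti-concentration inequality for degree-2 polynomials under the uniform (log-concave) distribution on a convex body of dimension $|\Theta|-2$. This gives
\[
\Pr\bigl(|F|\le\epsilon\|F\|\bigr)\ \le\ C|\Theta|\,\epsilon^{1/2}.
\]
Alternatively, we can condition on $x_1$ and use a one-dimensional anti-concentration bound for the linear form in $x_0$, which yields a linear dependence on $\epsilon$. Since $g_r(x_0)-g_r(x_1)$ is bounded above by $d_{\max}$, a lower bound on $|F|$ transfers to a lower bound on $|n_{r'}^T\mu^*|$. Setting the failure probability to $\delta/|S^R|^2$ and taking a union bound over all ordered pairs gives $\eta_{\mathrm{sep}}$ with
\[
\log(1/\eta_{\mathrm{sep}})=O\Bigl(|\Theta|\log\tfrac{|\Theta|d_{\max}}{d_{\min}z_{\min}\gamma_{\min}}+\log\tfrac{|S^R|}{\delta}\Bigr).
\]
The factor $|\Theta|$ comes from the volume and normalization constants of the $(|\Theta|-2)$-dimensional sampling domain.
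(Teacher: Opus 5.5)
Your proofs of properties (i) and (ii) are the same as the paper's. For the separation property you take a genuinely different route, and it works. The step you flag as the main obstacle closes along the lines you suggest, once three points are made explicit.

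\textbf{Restrict the signals.} Work only with $r$ such that $H_r\cap\mathrm{relint}\,\Delta(\Theta)\neq\emptyset$; no other $H_r$ can meet $l\subset\mathrm{relint}\,\Delta(\Theta)$. Non-parallel normals alone do not give $F\not\equiv 0$: if neither $r$ nor $r'$ charges $\Theta_0$, then $F\equiv 0$. But once $r$ charges both $\Theta_0$ and $\Theta_1$, $F\equiv 0$ on $\Delta(\Theta_0)\times\Delta(\Theta_1)$ forces $n_{r'}=n_r$, contradicting Assumption~\ref{as:distinct}.

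\textbf{Choose the right witness cell.} The cell must lie \emph{between} $H_r$ and $H_{r'}$, not merely be adjacent to both. Since $F\not\equiv 0$, the set where $n_r^\top\mu$ and $n_{r'}^\top\mu$ have strictly opposite signs is nonempty and relatively open, so it meets the interior of some cell $C$. Assumption~\ref{as:gamma} then gives, say, $n_r^\top\mu_C\ge\gamma_{\min}$ and $n_{r'}^\top\mu_C\le-\gamma_{\min}$. Hence $\mu_C$ charges both $\Theta_0$ and $\Theta_1$, and we can write $\mu_C=(1-t_C)p+t_Cq$ with $p\in\Delta(\Theta_0)$, $q\in\Delta(\Theta_1)$. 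Both forms are non-increasing along $[p,q]$, so the $r$-crossing lies beyond $\mu_C$. This gives $n_{r'}^\top\mu^*\le n_{r'}^\top\mu_C\le-\gamma_{\min}$, while $n_r^\top p-n_r^\top q\ge n_r^\top p\ge\gamma_{\min}$. Therefore $|F(p,q)|\ge\gamma_{\min}^2$.

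\textbf{Move into the sampling domain.} The pair $(p,q)$ need not lie in the sampling domain. However, $\Delta_{\ge x_{\min}}(\Theta_i)$ is the homothetic image of $\Delta(\Theta_i)$ about its barycenter with ratio $1-|\Theta_i|x_{\min}\ge 1/2$. For an affine function, $\sup|f|$ over such a copy is at least $1/3$ of $\sup|f|$ over the whole simplex. Applying this in each argument of the bilinear $F$ gives $\sup|F|\ge\gamma_{\min}^2/9$ on the sampling domain. The Brudnyi--Ganzburg Remez-type inequality for degree-$2$ polynomials on the $(|\Theta|-2)$-dimensional convex body then yields $\Pr(|F|\le\epsilon\sup|F|)=O(|\Theta|\epsilon^{1/2})$. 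With unit normals the denominator is at most $2$, and $\mathrm{dist}(\mu^*,H_{r'})\ge|n_{r'}^\top\mu^*|$. So $\mathrm{dist}(\mu^*,H_{r'})\ge|F|/2$, with no $\sqrt{|\Theta|}$ loss. A union bound over ordered pairs gives $\eta_{\mathrm{sep}}=c\,\gamma_{\min}^2\delta^2/(|\Theta|^2|S^R|^4)$.

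\textbf{Comparison with the paper.} The paper argues geometrically. It shows that $(p,q)\mapsto l(p,q)\cap H_r$ is a bijection onto the trimmed part of $H_r$, and that the inverse map is $O(\sqrt{|\Theta|}/t_{\min})$-Lipschitz. It then pulls back the volume of the $\eta$-strip around $H_{r'}$ inside $H_r$, which is $O(\eta/\sin\alpha_{r,r'})$ times a section volume. The result is a failure probability of order $(2/t_{\min}^2)^{(|\Theta|-2)/2}\eta/\sin\alpha_{r,r'}$. A separate angle--distance lemma, $d_{r,r'}\ge[\gamma_{\min}(1+\cos\alpha_{r,r'})-\sqrt2\sin\alpha_{r,r'}]_+$, removes the angle, and $|\Theta|=2$ is handled separately. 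The Jacobian factor is exactly where the $|\Theta|\log(|\Theta|d_{\max}/(d_{\min}z_{\min}))$ term comes from.

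Your route folds the transversal and near-parallel regimes into the single scalar $\sup|F|$, so no case split is needed. For $|\Theta|=2$, $|F|\ge\gamma_{\min}^2$ holds deterministically. The dimension enters only polynomially, so you get $\log(1/\eta_{\mathrm{sep}})=O(\log(|\Theta||S^R|/(\gamma_{\min}\delta)))$. This is independent of $z_{\min}$ and $d_{\max}/d_{\min}$ and strictly stronger than the statement. Consequently, your remark that the $|\Theta|$ factor comes from volume constants does not describe your own argument, though this is harmless. The paper's route buys elementarity and explicit constants without polynomial anti-concentration.
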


\paragraph{Proof Sketch.}Note that for an $H_r $ that intersects the relative interior of $\Delta(\Theta) $, there exists a one-to-one mapping from $H_r \cap \Delta(\Theta)$ to $\Delta(\Theta_0) \times \Delta(\Theta_1) $.  The proof is based on the idea that, on the truncated line segment $l $, this mapping is $L(|\Theta|, z_{\min})- $Lipschitz. Thus the endpoint
pairs for which $l \cap H_r$ lands within $\eta$ of some $H_{r'}$ have
volume at most $L^{|\Theta|-2}$ times the volume of an $\eta$-strip
around $H_{r'}$ in $H_r \cap \Delta(\Theta)$. Finally, we use Assumption \ref{as:gamma} to upper bound the volume of that $\eta- $strip. 

\begin{definition}
    The event $ \mathcal{E}_{\mathrm{sep}} $ is the separation event described in Lemma \ref{lem:separation}.
\end{definition}
Lemma \ref{lem:separation} shows that, on $l $, the separation between distinct hyperplanes can be exponentially small in $|\Theta| $. Fortunately, the monotone structure of $p_{\mu} $ on $l $ allows us to use bisection when identifying the changepoints, resulting in polynomial dependence on $|\Theta| $ in the final regret bound.

\subsection{\textsc{Estimate-Profile}}

A crucial part of our algorithm is estimating $p_{\mu} $ for different $\mu \in \Delta(\Theta) $. The algorithm uses the $\textsc{Estimate-Profile} $ subroutine for this purpose, presented in Algorithm \ref{alg:query}. 

\begin{algorithm} 
\caption{$\textsc{Estimate-Profile}$}
\label{alg:query}
\begin{algorithmic}[1] 
\Require A posterior $\mu \in \mathrm{relint}( \Delta(\Theta)) $, desired error bound $\epsilon $ on $ \hat{p}_{\mu} $, failure probability $\delta $ 
\State $N \gets \lceil \frac{|\Theta|^2}{2\epsilon^2}\log \left(\frac{4|\Theta|}{\delta} \right ) \rceil $ \Comment{Number of times to observe receiver action for each state $\theta $ when the posterior is $\mu $}
\State $\phi(s_1|\theta) \gets \frac{\mu(\theta)}{\mu_0(\theta)} \quad \forall \theta \in \Theta $
\State $\phi(s_1|\theta) \gets \frac{\phi(s_1|\theta)}{\max_{\theta}\phi(s_1|\theta)} $
\State $\phi(s_2|\theta) \gets 1-\phi(s_1|\theta) \quad \forall \theta \in \Theta $
\State ${N}_{\mu}(a|\theta) \gets 0 $
\While{$\min_{\theta} \sum_{a}N_{\mu}(a|\theta) < N \text{ and } t \leq T $}
    \State $\phi_t \gets \phi $ \Comment{Commit to $\phi $}
    \State Observe $\theta \sim \mu_0(\cdot) $ and send $s \sim \phi_t(\cdot |\theta) $
    \State Observe action $a $
    \If{$s = s_1 $}
    \State    $ N_{\mu}(a|\theta)\gets N_{\mu}(a|\theta) + 1 $    
    \EndIf
    \State $t \gets t + 1 $
\EndWhile

\State $\hat p_{\mu}(\theta) \gets \frac{N_{\mu}(a_0|\theta)}{\sum_{a'} N_{\mu}(a'|\theta)} \quad \forall \theta \in \Theta$
\State \Return $\hat p_{\mu}$

\end{algorithmic}
\end{algorithm}

To induce $\mu $, the subroutine $\textsc{Estimate-Profile} $ commits to a special signaling scheme $\phi $ that induces $\mu $ with maximum probability. This signaling scheme consists of two signals, $s_1 $ and $s_2$, where $s_1 $ induces the posterior $\mu $, whereas $s_2 $ is added so that $\phi $ is a valid signaling scheme. We have the following guarantee on $\textsc{Estimate-Profile} $:
\begin{lemma}\label{lem:estimate}
    With probability at least $1-\delta $, the subroutine $\textsc{Estimate-Profile}(\mu, \epsilon, \delta) $ runs for at most
    \begin{equation}
        N(\mu, \epsilon, \delta):=\left \lceil \frac{2\sqrt{2}}{\min_{\theta}\mu(\theta) \min_{\theta}\mu_0(\theta)}\frac{|\Theta|^2}{2 \epsilon^2}\log \frac{4|\Theta|}{\delta} \right \rceil
    \end{equation}
    rounds, and returns $\hat p_{\mu} $ such that $\norm{p_{\mu} - \hat p_{\mu} }_1 \leq \epsilon $.
\end{lemma}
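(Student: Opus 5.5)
The proof has two parts: the accuracy of $\hat p_\mu$, and the number of rounds the loop runs. For accuracy I would first check that the scheme constructed in lines 2--4 is valid and induces $\mu$ at $s_1$. After normalization, $\phi(s_1|\theta) = \frac{\mu(\theta)/\mu_0(\theta)}{\max_{\theta'}\mu(\theta')/\mu_0(\theta')} \in (0,1]$. By Bayes' rule the posterior after $s_1$ is proportional to $\mu_0(\theta)\phi(s_1|\theta) \propto \mu(\theta)$, so it equals $\mu$. Since $r$ is conditionally independent of $s$ given $\theta$, every round in which $\theta$ is drawn and $s_1$ is sent yields an action $a_0$ with probability exactly $p_\mu(a_0|\theta)$. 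These Bernoulli trials are independent across such rounds, because $r_t$ is drawn fresh each round.

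Concentration then follows from a per-state Hoeffding bound. The stopping rule ensures each state has at least $N$ recorded samples. The sample count is a stopping time, so to avoid conditioning problems I would use a standard coupling: pre-draw, for each $\theta$, an i.i.d. sequence of actions, and let the estimate use a prefix of length at least $N$. One can either use exactly the first $N$ samples or apply a union bound over lengths; the cleanest route is to note that Hoeffding on the first $n\ge N$ draws holds uniformly for the prefix actually used once we condition on the count. Hoeffding gives $|\hat p_\mu(\theta)-p_\mu(\theta)| \le \epsilon/|\Theta|$ with probability $1-\delta/(2|\Theta|)$, since $N \ge \frac{|\Theta|^2}{2\epsilon^2}\log\frac{4|\Theta|}{\delta}$. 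A union bound over $\theta$ and summing the per-state errors gives $\norm{p_\mu-\hat p_\mu}_1\le\epsilon$ with probability $1-\delta/2$.

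For the running time, each round produces a recorded sample for state $\theta$ with probability $q_\theta = \mu_0(\theta)\phi(s_1|\theta)$. Let $\theta^*$ attain $\max_{\theta'}\mu(\theta')/\mu_0(\theta')$. Then
\[
q_\theta = \mu(\theta)\,\frac{\mu_0(\theta^*)}{\mu(\theta^*)} \ge \min_{\theta'}\mu(\theta')\,\min_{\theta'}\mu_0(\theta') =: q .
\]
I need the time $M$ until every state has $N$ successes, with failure probability $\delta/2$. For $M = \lceil c N/q\rceil$, a multiplicative Chernoff lower tail on $\mathrm{Bin}(M,q_\theta)$, together with $N\ge\log(4|\Theta|/\delta)$, gives failure probability at most $\delta/(2|\Theta|)$ per state. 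A union bound over states then completes the argument.

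The main obstacle is matching the specific constant $2\sqrt2$. With $M q_\theta \ge cN$, the Chernoff bound gives $\Pr[\mathrm{Bin}<N]\le\exp(-cN(1-1/c)^2/2)$. We need this exponent to be at least $\log(4|\Theta|/\delta)$, and since $N\ge \frac{|\Theta|^2}{2\epsilon^2}\log\frac{4|\Theta|}{\delta}$, a fixed constant $c$ suffices once the $|\Theta|^2/\epsilon^2$ factor is used. I would tune $c$, for example using $\epsilon\le 1$ and $|\Theta|\ge2$, so that $c=2\sqrt2$ works; the ceiling only perturbs this slightly. The final step combines the two failure events, each of probability $\delta/2$, for a total of $\delta$.
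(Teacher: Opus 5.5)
Your proposal is correct and follows essentially the paper's proof. It uses the same lower bound $\min_\theta\mu(\theta)\min_\theta\mu_0(\theta)$ on the per-round probability of recording state $\theta$. It then applies a multiplicative Chernoff lower tail with a union bound over states for the round count (failure $\delta/2$), and Hoeffding with a union bound for $\epsilon/|\Theta|$ per-state accuracy (failure $\delta/2$).

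The one difference is that the paper solves the Chernoff quadratic exactly, while you plug in $c=2\sqrt2$ and leave the check to tuning. The check does go through: the exponent is $c(1-1/c)^2N/2\approx0.59N\geq1.18\log(4|\Theta|/\delta)$ using $N\geq2\log(4|\Theta|/\delta)$, which exceeds the required $\log(2|\Theta|/\delta)$.

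Your coupling remark is more careful than the paper, which simply conditions on each state being observed at least $N$ times. Conditioning on the counts is legitimate because they depend only on $(\theta_t,s_t)$, which is independent of the receiver's private draws given the state.
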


We remark that Algorithm \ref{alg:main} only queries posteriors $\mu $ for which 
\begin{equation}
    \min_{\theta}\mu(\theta) \geq \frac{x_{\min}^2 z_{\min} d_{\min} }{4d_{\max} }. 
\end{equation}
It follows that, with probability at least $1-\delta $, any query made by Algorithm \ref{alg:main} to \textsc{Estimate-Profile} runs for at most
\begin{equation}
    N(z_{\min}, \epsilon, \delta) := \left \lceil \frac{8\sqrt{2}d_{\max}}{x_{\min}^2 z_{\min} d_{\min} \min_{\theta}\mu_0(\theta)}\frac{|\Theta|^2}{2 \epsilon^2}\log \frac{4|\Theta|}{\delta} \right \rceil
\end{equation}
rounds and returns an $\epsilon- $accurate estimate of $p_{\mu} $.

\subsection{\textsc{Find-Crossings}}

After having generated a randomized line segment $l $, the subroutine $\textsc{Find-Crossings} $ identifies the points at which $p_{\mu} $ changes using a standard recursive binary search procedure. We use Lemma \ref{lem:separation} to determine the required binary search resolution so that we identify each crossing point on $l $ in isolation. The pseudocode for $\textsc{Find-Crossings} $ is given in Algorithm \ref{alg:crossings}. 

As some signals might have very low likelihood values, we cannot hope to identify all changes in $p_{\mu} $ along $l $ without paying a prohibitively large time cost. $\textsc{Find-Crossings} $ takes as input a minimum total signal mass $p_{\min} $ to decide how accurately to estimate $p_{\mu} $. We later show in Lemma \ref{lem:subopt} that missing signals with total likelihood mass $m(r):=\sum_{\theta}\phi^R(r \mid \theta ) $ less than $p_{\min} $ result in $O(|S^R| p_{\min}) $ loss in utility. 

We define the set $\mathcal{D} \subset S^R $ as 
\begin{equation*}
    \mathcal{D} = \{r \in S^R, m(r) \geq p_{\min}, l \cap H_r \neq \emptyset \}.
\end{equation*}

We have the following guarantee on $\textsc{Find-Crossings} $:

\begin{lemma} \label{lem:crossings}
$\textsc{Find-Crossings}(\mu_-, \mu_+, \eta_{\mathrm{sep}}/4 , p_{\min},
\delta)$ queries the subroutine $\textsc{Estimate-Profile} $ at most 
\begin{equation}
    Q_{\mathrm{cross}} := O\left ( |S^R| \log_2 (\frac{4\sqrt{2}}{\eta_{\mathrm{sep}} }) \right )
\end{equation}
times, and on the event $\mathcal{E}_{\mathrm{sep}} $, with probability at least $1- \delta $, 
runs for at most
\begin{equation}
      Q_{\mathrm{cross}} N(z_{\min}, p_{\min}/6 , \delta/Q_{\mathrm{cross} } )
\end{equation}
rounds, and outputs
$\{(u_i, v_i)\}_{i=1}^{k}$ that satisfies:
(i)~for every $r \in \mathcal{D}$ there is an $i \leq k$ with
$n_r^\top u_i \geq 0 \geq n_r^\top v_i$;
(ii)~for every $i \leq k$ there is $r(i) \in S^R$ with
$n_{r(i)}^\top u_i \geq 0 \geq n_{r(i)}^\top v_i$, $m(r(i)) \geq p_{\min}/6 $ and $a^*(u_i, r(i))\neq a^*(v_i, r(i)) $;
(iii)~$\norm{u_i - v_i}_2 \leq \eta_{\mathrm{sep}}/4 $ for all $i \leq k$.
\end{lemma}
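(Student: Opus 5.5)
The plan is to analyze \textsc{Find-Crossings} as a recursive bisection on the parameterized segment $l=\{\mu(\tau)=(1-\tau)\mu_-+\tau\mu_+\}$. We use two facts established earlier: monotonicity of $p_\mu$ along $l$ (property (ii) of Lemma~\ref{lem:separation}), and the separation event $\mathcal{E}_{\mathrm{sep}}$.

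The recursion is assumed to work as follows. At an interval $[u,v]\subseteq l$ it compares the estimates $\hat p_u$ and $\hat p_v$, each with $\ell_1$ accuracy $p_{\min}/6$. If $\norm{\hat p_u-\hat p_v}_1$ exceeds a threshold $\tau_0$, say $\tau_0=p_{\min}/2$, it bisects. It stops and outputs $(u,v)$ once $\norm{u-v}_2\le \eta_{\mathrm{sep}}/4$.

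\textbf{Step 1: correctness of the change detection.} Along $l$, each $\theta$-coordinate of $p_\mu$ is monotone non-increasing. Hence for $u$ before $v$ on $l$ we have
\[
\norm{p_u-p_v}_1=\sum_{r \text{ crossed in }[u,v]}\sum_\theta \phi^R(r|\theta)=\sum_{r\text{ crossed}} m(r).
\]
This holds because crossing $H_r$ switches $w_C(r)$ between $a_0$ and $a_1$ in one direction only, so there are no cancellations. Now condition on all estimates being $p_{\min}/6$-accurate; a union bound over $Q_{\mathrm{cross}}$ queries at level $\delta/Q_{\mathrm{cross}}$ uses Lemma~\ref{lem:estimate}.
\begin{itemize}
\item Any interval containing a crossing of some $r\in\mathcal{D}$ has true difference at least $p_{\min}$, hence estimated difference at least $p_{\min}-p_{\min}/3>\tau_0$. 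Such an interval is therefore never discarded.
\item An interval with estimated difference above $\tau_0$ has true total crossed mass at least $p_{\min}/6$.
\end{itemize}

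\textbf{Step 2: properties (i)--(iii).} Property (iii) is the stopping rule. Property (i) follows from Step 1 by induction down the recursion tree: the leaf containing $l\cap H_r$ is retained, and containment gives $n_r^\top u_i\ge0\ge n_r^\top v_i$ after fixing orientation. For property (ii), we use $\mathcal{E}_{\mathrm{sep}}$. A leaf of length at most $\eta_{\mathrm{sep}}/4$ cannot contain crossings of two distinct hyperplanes $H_r,H_{r'}$. Indeed, if it did, then $\mathrm{dist}(l\cap H_r,H_{r'})\le \eta_{\mathrm{sep}}/4<\eta_{\mathrm{sep}}$, a contradiction. So every retained leaf contains exactly one crossing $r(i)$, and its mass satisfies $m(r(i))\ge p_{\min}/6$. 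The action flip $a^*(u_i,r(i))\neq a^*(v_i,r(i))$ holds because $u_i$ and $v_i$ lie strictly on opposite sides of $H_{r(i)}$, which is what makes $p$ change. Ties on the hyperplane itself are handled by the sender-favorable tie-breaking together with the strictness of the estimated difference.

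\textbf{Step 3: counting.} The segment $l$ lies in the simplex, so its length is at most $\sqrt2$. The recursion depth is therefore at most $\log_2(4\sqrt2/\eta_{\mathrm{sep}})$. At each depth, only intervals with true crossed mass at least $p_{\min}/6$ survive. These intervals are disjoint and each contains at least one crossing, and there are at most $|S^R|$ crossings. So at most $|S^R|$ intervals survive per level, each needing $O(1)$ new queries, with endpoints shared or cached. This gives $Q_{\mathrm{cross}}=O(|S^R|\log_2(4\sqrt2/\eta_{\mathrm{sep}}))$. The round bound then follows by multiplying by the per-query bound $N(z_{\min},p_{\min}/6,\delta/Q_{\mathrm{cross}})$. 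That per-query bound is valid because every queried point lies on the truncated $l$ and hence satisfies the $\min_\theta\mu(\theta)$ lower bound.

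\textbf{Main obstacle.} The delicate step is the exact additivity identity $\norm{p_u-p_v}_1=\sum m(r)$. It requires that each $H_r$ be crossed at most once along $l$ and always in the same direction ($a_0\to a_1$). I would prove this by showing that $n_r^\top\mu(\tau)$ is affine in $\tau$ with sign fixed by the construction $x_0\in\Delta(\Theta_0)$, $x_1\in\Delta(\Theta_1)$. This is exactly property (ii) of Lemma~\ref{lem:separation}, so it can be invoked directly rather than reproved.
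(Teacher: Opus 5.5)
Your proposal is correct and follows the paper's proof essentially step for step. The shared steps are: $p_{\min}/6$-accurate queries compared against the $p_{\min}/2$ threshold; monotonicity (no cancellation) along $l$, so intervals carrying the jump of some $r\in\mathcal{D}$ are never pruned; $\mathcal{E}_{\mathrm{sep}}$ to isolate a single hyperplane in each leaf; and at most $|S^R|$ live intervals per depth.

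Two points need tightening. First, a crossing can land exactly on a bisection midpoint or a leaf endpoint, so ``contains a crossing'', ``disjoint intervals'' and ``strictly opposite sides'' should be phrased via the flip sets $J(u,v)=\{r : a^*(u,r)\neq a^*(v,r)\}$, which tie-breaking makes disjoint across sibling intervals, as the paper does. Second, your counting argument holds only on the good event, so the unconditional query bound and the union bound over queries rely on the algorithm's hard cap $|C|\le Q_{\mathrm{cross}}$.
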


\begin{definition}
    On $\mathcal{E}_{\mathrm{sep}} $, $\mathcal{E}_{\mathrm{cross}} $ is the event that the output of $\textsc{Find-Crossings}(\mu_-, \mu_{+}, \eta_{\mathrm{sep}}/4 , p_{\min}, \delta) $ satisfies the conditions $(i)-(iii) $ in Lemma \ref{lem:crossings}.
\end{definition}

Condition $(i) $ in Lemma \ref{lem:crossings} means that, for every signal $r \in D $ the algorithm aims to detect, $\textsc{Find-Crossings} $ returns a pair of posteriors on the opposite sides of $H_r $. Conversely, $(ii) $ means that, for every pair that is returned by $\textsc{Find-Crossings} $, there is indeed a hyperplane $H_r $ between them with total mass at least $p_{\min}/6 $. We remark that, on $\mathcal{E}_{\mathrm{sep}} $, the pair $(u_i, v_i) $ contains a single hyperplane between them. 

\subsection{\textsc{Learn-Signals}}

\begin{algorithm}[t]
    \caption{\textsc{Learn-Signals} }
    \begin{algorithmic}
        \Require  Set $P $ of pairs of posteriors,  angular hyperplane accuracy $\epsilon_h $, likelihood accuracy $\epsilon_m $,  failure probability $\delta $
        \State $\hat \phi^R \gets \emptyset $, $k \gets |P| $
        \For{$(u, v) \in P $}
            \State $\hat p \gets \textsc{Estimate-Likelihood}(u, v, \epsilon_m, \delta/2k )$
            \State $\hat n \gets \textsc{Learn-Hyperplane}(u ,v, \epsilon_h, \delta/2k) $
            \State $\hat \phi^R \gets \hat \phi^R \cup \{(\hat n, \hat p)\} $
        \EndFor
        \State \Return $\hat \phi^R $
    \end{algorithmic}
\end{algorithm}

Once $\textsc{Find-Crossings}$ has returned a pair of posteriors $u_r, v_r$
whose connecting segment contains a single, isolated crossing point of
$H_r$, the algorithm invokes the $\textsc{Learn-Signals}$ subroutine to
estimate the signal likelihoods $\{\phi^R(r|\theta)\}_{\theta \in \Theta}$
and the corresponding hyperplane unit normal $n_r$. We provide the pseudocode for the subroutines \textsc{estimate-likelihood} and \textsc{learn-hyperplane} in Algorithm \ref{alg:likelihood} and Algorithm \ref{alg:hyperplane}, respectively. The following lemma formalizes the sample complexity of $\textsc{Learn-Signals} $:

\begin{lemma}\label{lem:learn-signal}
    Conditioned on $\mathcal{E}_{\mathrm{sep}} $ and $\mathcal{E }_{\mathrm{cross}} $, let $P =\{(u_i, v_i)\}_{i=1}^k $ be the set of intervals returned by $\textsc{Find-Crossings} $, and let $r(i) \in S^R $ be the corresponding signal. $\textsc{Learn-Signals}(P, \epsilon_h, \epsilon_m, p_{\min} ,\delta) $ queries $\textsc{Estimate-Profile} $ at most 
    \begin{equation}
        Q_{\mathrm{learn}}:= O \left (|\Theta|^2|S^R| + |\Theta||S^R| \log_2 (\frac{32 |\Theta|}{\epsilon_h})  \right )
    \end{equation}
    times. Furthermore, with probability at least $1- \delta $, \textsc{Learn-Signals}($P, \epsilon_h, \epsilon_m ,p_{\min}, \delta$) runs for at most \begin{equation}
        Q_{\mathrm{learn}} N (z_{\min}, \min\{\epsilon_m, p_{\min}/48\} , \delta / Q_{\mathrm{learn}})
    \end{equation}
    rounds and returns $\hat \phi^R = \{\hat n_i, \hat p_i\}_{i=1}^k $  such that, for all $i \in [k] $, $\norm{\hat p_i - \phi^R_{r(i)} }_1 \leq \epsilon_m$ and $ \norm{\hat n_i-n_{r(i)} }_2 \leq \epsilon_h $.
\end{lemma}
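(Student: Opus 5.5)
The plan is to analyze the two subroutines separately for a single pair $(u_i,v_i)\in P$, and then combine them with a union bound over the $k\le |S^R|$ pairs. Each call gets failure budget $\delta/2k$, and the query counts are tallied. Throughout we work on $\mathcal{E}_{\mathrm{sep}}\cap\mathcal{E}_{\mathrm{cross}}$. By Lemma \ref{lem:crossings}(ii),(iii) and Lemma \ref{lem:separation}, the segment $[u_i,v_i]$ is crossed by exactly one hyperplane $H_{r(i)}$, and no other $H_{r'}$ comes within $\eta_{\mathrm{sep}}/2$ of it. So every posterior in an $\eta_{\mathrm{sep}}/4$-neighbourhood of the segment lies in one of the two cells adjacent to $H_{r(i)}$ along $l$.

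For \textsc{Estimate-Likelihood}, the key identity is
\[
p_{u_i}-p_{v_i}=\pm\,\phi^R_{r(i)}.
\]
It holds because the mappings $w_C$ of the two cells differ only at $r(i)$. By (ii) the action at $r(i)$ flips, and by monotonicity along $l$ the sign is fixed. So we call \textsc{Estimate-Profile} on $u_i$ and $v_i$ with accuracy $\epsilon_m/2$ each and take the absolute difference. The triangle inequality then gives $\norm{\hat p_i-\phi^R_{r(i)}}_1\le\epsilon_m$. Both points lie on $l$, so the remark after Lemma \ref{lem:estimate} gives the bound $N(z_{\min},\cdot,\cdot)$ on running time. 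This costs $O(1)$ queries per pair.

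For \textsc{Learn-Hyperplane}, I would recover $n_{r(i)}$ by locating $|\Theta|-1$ further points of $H_{r(i)}$ in general position near the crossing point $c_i$. Take the midpoint $m_i$ of $[u_i,v_i]$ and an orthonormal basis $e_1,\dots,e_{|\Theta|-2}$ of directions in the affine hull orthogonal to $v_i-u_i$. For each $j$, form the shifted segment $[u_i+\rho e_j,\,v_i+\rho e_j]$ with $\rho$ of order $\eta_{\mathrm{sep}}/8$. Bisect along it to find where $p_\mu(r(i))$-contribution flips, testing sides by comparing $\hat p_\mu$ with $\hat p_{u_i}$ and $\hat p_{v_i}$. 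Each test distinguishes two vectors differing by $\phi^R_{r(i)}$ with $m(r(i))\ge p_{\min}/6$. Hence accuracy $p_{\min}/48$ suffices, which matches $\min\{\epsilon_m,p_{\min}/48\}$ in the statement.

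The main obstacle is making sure the shifted segments still cross $H_{r(i)}$ and no other hyperplane. This requires $\rho$ compared to the angle of $H_{r(i)}$ with $l$, and $\rho$ must be small enough to stay inside the separation neighbourhood. I would first try to use the sampling in \textsc{Construct-Search-Line} (through the Lipschitz argument of Lemma \ref{lem:separation}) to lower-bound that angle. If that fails, I would instead bisect along directions $e_j$ from $c_i$ with step budget $\log_2(32|\Theta|/\epsilon_h)$. Bisection to resolution $\epsilon'$ gives $|\Theta|-1$ points of $H_{r(i)}$, each within $\epsilon'$. The normal is the unit null vector of the resulting $(|\Theta|-1)\times|\Theta|$ system together with the affine constraint. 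A perturbation bound for the null space, with conditioning controlled by the orthogonal frame of scale $\rho$, turns an $\epsilon'$ error into $O(|\Theta|\epsilon'/\rho)\le\epsilon_h$. This gives $O(|\Theta|\log(|\Theta|/\epsilon_h))$ queries per pair, plus $O(|\Theta|^2)$ queries for setting up the frame and verification tests.

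Summing over $k\le|S^R|$ pairs gives $Q_{\mathrm{learn}}$. Allocating failure probability $\delta/Q_{\mathrm{learn}}$ per \textsc{Estimate-Profile} call and union bounding gives the round bound with probability $1-\delta$.
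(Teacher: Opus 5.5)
\textbf{Verdict: there is a genuine gap, in \textsc{Learn-Hyperplane}.} Your \textsc{Estimate-Likelihood} argument, the side test at accuracy $p_{\min}/48$, and the union bound over calls agree with the paper. The gap sits exactly at the step you flag as the main obstacle, and neither of your routes closes it.

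Let $\hat n$ be the unit normal of $H_{r(i)}$ within the affine hull of $\Delta(\Theta)$, let $d_l$ be the unit direction of $l$, and let $x_r=[u_i,v_i]\cap H_{r(i)}$. The shifted segment $[u_i+\rho e_j,v_i+\rho e_j]$ crosses $H_{r(i)}$ only if the crossing, displaced along $d_l$ by $\rho\,\hat n^\top e_j/\hat n^\top d_l$, still lies in the segment. This can fail in two ways.
\begin{itemize}
\item Lemma \ref{lem:crossings} only gives $n_r^\top u_i\ge 0\ge n_r^\top v_i$, so $x_r$ may sit at an endpoint.
\item More seriously, take $\rho$ of the same order as $\norm{u_i-v_i}_2$. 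Some $e_j$ has $|\hat n^\top e_j|\gtrsim|\Theta|^{-1/2}$, so a shallow angle between $l$ and $H_{r(i)}$ pushes the crossing off the segment wherever $x_r$ sits.
\end{itemize}
Lengthening the segments is safe only while they stay within $\eta_{\mathrm{sep}}$ of $x_r$, and that needs a quantitative angle bound. The Lipschitz/volume argument of Lemma \ref{lem:separation} does not supply one: it controls how close $l\cap H_r$ lands to other hyperplanes, not the angle at which $l$ meets $H_r$.

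Your fallback fails for the same reason. Lines through $c_i$, which lies on $H_{r(i)}$, all cross it at $c_i$ and carry no information. An offset line in direction $e_j$ crosses $H_{r(i)}$ inside the admissible window only if $|\hat n^\top e_j|$ is bounded below, and nothing in your construction guarantees that. The conditioning that matters is this transversality of the search directions, not the orthonormality of the frame.

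\textbf{The missing idea is the paper's pivot test.} Let $R=\norm{u-v}_2$ and $d=|\Theta|-1$.
\begin{itemize}
\item First bisect $[u,v]$ to a point $c$ with $\mathrm{dist}(c,H_r)\le\alpha=R/(16\sqrt d)$.
\item Fix an orthonormal basis $Q=[q_1\cdots q_d]$ of $\mathcal T=\{x:\mathbf 1^\top x=0\}$ and set $w=Q^\top n_r/\norm{P_{\mathcal T}n_r}_2$, a unit vector. Some coordinate has $|w_j|\ge1/\sqrt d$. Hence every line $c+\sigma s q_i+tq_j$ with $s=R/(8\sqrt d)$ crosses $H_r$ at $|t|\le\sqrt d(\alpha+s)<R/2=L$.
\item Accept the first $j$ for which all such lines have endpoints with opposite \textsc{Side} labels. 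These $4d(d-1)$ queries are the source of the $|\Theta|^2$ term.
\item Bisect each offset line to precision $\beta=s\epsilon_h/(16|\Theta|)$. Read off $g_i=w_i/w_{j^\star}$ from $(\hat t_i^--\hat t_i^+)/(2s)$, and the offset $h$ as minus the average of the two thresholds.
\item The lift $\hat a=Q\hat g+(\hat h-\hat g^\top Q^\top c)\mathbf 1$ estimates $n_r/(\norm{P_{\mathcal T}n_r}_2\,w_{j^\star})$, whose norm is at least $1$.
\end{itemize}
This yields $\norm{\hat n-n_r}_2\le\epsilon_h$ with bisection depth $\lceil\log_2(64|\Theta|\sqrt d/\epsilon_h)\rceil$, independent of any angle. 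Every query stays within $11R/16<\eta_{\mathrm{sep}}$ of $x_r$.

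You also omit two points the paper handles:
\begin{itemize}
\item Off-line queries must stay in the simplex with coordinates bounded below for the $N(z_{\min},\cdot,\cdot)$ round bound. The paper ensures this by shrinking $\eta_{\mathrm{sep}}$ to at most $x_{\min}^2z_{\min}d_{\min}/(8d_{\max})$.
\item The null vector is determined only up to sign. The paper fixes the sign with a \textsc{Side} query at $y_+$.
\end{itemize}

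Your first instinct is in fact salvageable, but by a deterministic argument rather than Lemma \ref{lem:separation}. Use Assumption \ref{as:dmin} and the fact that $x_0,x_1$ have coordinates at least $x_{\min}$ on $\Theta_0,\Theta_1$. Then $n_r^\top(x_0-x_1)\ge x_{\min}d_{\min}/d_{\max}$ for every $r$, hence $|\hat n^\top d_l|\ge x_{\min}d_{\min}/(\sqrt2\,d_{\max})$. Recentring at $c$, shrinking the shift by a factor of order $|\Theta|d_{\max}/d_{\min}$, and extending the lines then works. The straightforward version, however, picks up a $\log(d_{\max}/d_{\min})$ term that is absent from the stated $Q_{\mathrm{learn}}$.
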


\begin{definition} On $\mathcal{E}_{\mathrm{sep}} $ and $\mathcal{E}_{\mathrm{cross}} $,  $\mathcal{E}_{\mathrm{learn}} $ is the event described in Lemma \ref{lem:learn-signal}.
\end{definition}

A crucial design choice in $\textsc{Learn-Signals}$ is to decouple the
estimation of the likelihood values from the estimation of the hyperplane.
At first glance this separation may seem redundant: $n_r$ is fully
determined by the likelihoods $\{\phi^R(r|\theta)\}_{\theta}$ together with
the (known) receiver utilities, so a naive approach would estimate the
likelihoods directly from the difference $p_{u_r} - p_{v_r}$ and construct
an estimate $\hat n_r$ of $n_r$ from them. The issue is one of error
amplification: while a coarse likelihood estimate suffices for evaluating
the sender's utility, guaranteeing that every cell of the partition is
classified correctly requires angular accuracy
$\norm{n_r - \hat n_r}_2 \leq c\,\gamma_{\min}$ (cf. Assumption \ref{as:gamma}).
Achieving this accuracy from the difference $p_{u_r} - p_{v_r}$ alone
would require a number of exploration rounds scaling as
$O(1/\gamma_{\min}^2)$, which then propagates to the regret. $\textsc{Learn-Signals}$ instead reduces the problem of learning $n_r$ to
actively learning a halfspace: the sign of $n_r^\top \mu$ determines on
which side of $H_r$ a posterior $\mu$ lies, and this sign can be queried
by inducing $\mu$ and comparing the estimate returned by
$\textsc{Estimate-Profile}$ against the profiles at $u_r$ and $v_r$. Since
each membership query localizes $H_r$ up to a constant factor,
$\textsc{Learn-Signals}$ attains angular accuracy $\gamma_{\min}$ using
only $O(\log(1/\gamma_{\min}))$ queries, exponentially improving over the
naive approach. 

\subsection{\textsc{Compute-Signaling}}

After having learned an estimate $\hat \phi^R $ of $\phi^R $, the sender can compute a near-optimal signaling scheme $\phi $ for the remaining rounds. We denote by $\hat S^R $ the signal alphabet of $\hat \phi^R $.

Before computing a signaling scheme, the algorithm introduces two special receiver signals, called $r^0 $ and $r^1 $. These are not signals in the usual sense, but instead always cause the receiver to take actions $a_0 $ and $a_1 $, respectively. Since the algorithm is only guaranteed to detect $z_{\min} -$balanced signals, $r^0 $ ($r^1$) accounts for the \emph{aggregate} effect of signals in $S^R $ that put more than $1-z_{\min} $ fraction of their total mass on $\Theta_0 $ ($\Theta_1 $), and whose hyperplanes are not intersected by the search line $l $. The likelihood values of $r^0 $ and $r^1 $ are estimated directly from the receiver's behavior at the two ends of the search line: at posterior $\mu_+ $, all signals that put less than $z_{\min} $ fraction of their total mass on $\Theta_1 $ and not intersected by $l $ induce $a_0 $. Hence the state-conditional probability of observing $a_0 $ at $\mu_+ $ reveals the aggregate likelihood of such signals. The likelihood values of $r^1 $ are computed similarly. We define the events $\mathcal{E}_0 $ and $\mathcal{E}_1 $ as the events that aggregate likelihoods of $r^0 $ and $r^1 $, respectively, are both estimated to accuracy $\epsilon_m $.

$\textsc{Compute-Signaling}$ solves the program in \eqref{eq:persuasion} with $\hat \phi^R $ in place of $\phi^R $, with the receiver's estimated signal alphabet being $\hat S^R \cup \{r^0, r^1\} $,  and with the following modifications to the feasible set:
\begin{subequations}\label{eq:compute}
\begin{align}
    &y_{s,r^0} = 0, y_{s,r^1} = 1, \quad  s \in S, \\
    &(-1)^{y_{s,r }}\sum_{\theta\in\Theta}
  \mu_s(\theta)\hat n_r(\theta) \geq 2 \epsilon_h ,\quad (s,r)\in S\times \hat  S^R ,
\label{eq:compute-threshold} \\
    & \sum_{\theta\in\Theta}
  \mu_0(\theta)\phi_\theta(s^b)
  \le
  \frac{4\sqrt{2}\epsilon_h}
       {\gamma_{\min}\min_{\theta\in\Theta}\mu_0(\theta)} .
\label{eq:compute-bound}
\end{align}   
\end{subequations}
We provide the pseudocode for $\textsc{Compute-Signaling} $ in Algorithm \ref{alg:commit}. Comparing the program \eqref{eq:compute} with the original program \eqref{eq:persuasion}, we first observe that $\textsc{Compute-Signaling} $ uses a pessimistic version of the incentive-compatibility constraint in \eqref{eq:incentive} by introducing a margin in \eqref{eq:compute-threshold}. In other words, the signaling scheme computed by the sender is allowed to induce only posteriors that are sufficiently distant from the estimated hyperplanes. Without this margin, a posterior $\mu_s $ induced by the sender can fall on the wrong side of a true hyperplane, resulting in large regret. Note that we do not impose such a margin constraint for $r^0 $ or $r^1 $, as they do not correspond to a real hyperplane in \eqref{eq:persuasion}. Finally, we allow the sender to use an additional, unconstrained, low-probability signal $s^b $, meaning that the sender uses $|\Theta|+1 $ signals. 

We have the following guarantee on the signaling scheme $\hat \phi $ returned by  $\textsc{Compute-Signaling} $:

\begin{lemma}\label{lem:subopt}
    On the events $\mathcal{E}_{\mathrm{sep}}, \mathcal{E}_{\mathrm{cross}}, \mathcal{E}_{\mathrm{learn}},\mathcal{E}_0, \mathcal{E}_1 $, the signaling scheme $\hat \phi $ returned by $\textsc{Compute-Signaling}(\hat \phi^R, r^0, r^1, \epsilon_h) $ is \begin{equation}\label{eq:subopt}
        O \left (|\Theta|z_{\min} + |S^R|p_{\min} + |S^R|\epsilon_m + \frac{\epsilon_h}{\gamma_{\min}\min_{\theta}\mu_0(\theta)} \right )-
    \end{equation}
    optimal.
\end{lemma}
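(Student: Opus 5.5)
We prove Lemma \ref{lem:subopt} by a two-sided comparison through a surrogate program. Let $\phi^*$ be optimal for the true program \eqref{eq:persuasion}, and assume by concavification that it uses at most $|\Theta|$ signals. We (a) build from $\phi^*$ a scheme $\tilde\phi$ that is feasible for the modified program \eqref{eq:compute} and whose \emph{estimated} value is close to $V(\phi^*,\phi^R)$. Since $\hat\phi$ maximizes the estimated objective over that feasible set, the estimated value of $\hat\phi$ is at least that of $\tilde\phi$. We then (b) show that for any scheme feasible for \eqref{eq:compute}, its estimated value and its true value $V(\cdot,\phi^R)$ differ by at most the stated error. Chaining (a) and (b) gives the bound. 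Throughout we work on $\mathcal{E}_{\mathrm{sep}},\mathcal{E}_{\mathrm{cross}},\mathcal{E}_{\mathrm{learn}},\mathcal{E}_0,\mathcal{E}_1$. On these events each $\hat r\in\hat S^R$ matches a true $r(i)$ with $\norm{\hat p_i-\phi^R_{r(i)}}_1\le\epsilon_m$ and $\norm{\hat n_i-n_{r(i)}}_2\le\epsilon_h$, and $r^0,r^1$ match the aggregate likelihoods of the non-detected, unbalanced signals up to $\epsilon_m$.

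\emph{Step 1: decompose the true receiver alphabet.} We partition $S^R$ into three classes.
\begin{enumerate}
\item Detected signals $\mathcal{D}'=\{r(i)\}$. By Lemma \ref{lem:crossings} this set contains $\mathcal{D}$, and by $\mathcal{E}_{\mathrm{sep}}$ the map $i\mapsto r(i)$ is injective.
\item Signals with $l\cap H_r=\emptyset$. These are not $z_{\min}$-balanced by property (i) of Lemma \ref{lem:separation}, so they act as $a_0$ (respectively $a_1$) at $\mu_+$ (respectively $\mu_-$) and are exactly what $r^0,r^1$ aggregate.
\item Signals crossing $l$ with $m(r)<p_{\min}$ but not detected. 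Their total mass is at most $|S^R|p_{\min}$, so they change any scheme's value by at most $|S^R|p_{\min}$ in both the true and estimated objectives.
\end{enumerate}
For class 2 we must show that replacing each such signal by a constant action costs $O(|\Theta| z_{\min})$. A signal with mass fraction at least $1-z_{\min}$ on $\Theta_0$ has hyperplane close to the face $\Delta(\Theta_0)$: using $d_{\min},d_{\max}$, every posterior with $\mu(\Theta_0)\lesssim 1-O(z_{\min})$ lies on its $a_0$ side. Any sender posterior that lies on the other side therefore places mass at least $1-O(z_{\min})$ on $\Theta_0$. We move such posteriors (or just those signals' contributions) and bound the utility change by the mass on the "wrong" states, which gives $O(|\Theta|z_{\min})$ after summing over at most $|\Theta|$ sender signals. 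The symmetric argument handles $\Theta_1$.

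\emph{Step 2: construct $\tilde\phi$ (the main obstacle).} The true optimum may place posteriors arbitrarily close to true hyperplanes, violating the margin constraint \eqref{eq:compute-threshold}. We use Assumption \ref{as:gamma}. For each posterior $\mu^*_s$ of $\phi^*$ lying in cell $C$, we mix it toward the interior point $\mu_C$, setting $\mu'_s=(1-\lambda)\mu^*_s+\lambda\mu_C$. Both points lie in the closed cell, so $|n_r^\top\mu'_s|\ge\lambda\gamma_{\min}$ with the correct sign for every $r$. Since $|\hat n_r^\top\mu-n_r^\top\mu|\le\epsilon_h$, choosing $\lambda=3\epsilon_h/\gamma_{\min}$ yields the $2\epsilon_h$ margin with respect to $\hat n_r$. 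To keep Bayes plausibility, we split each sender signal into a scaled-down part inducing $\mu'_s$ and a remainder, and route all remainders into the unconstrained signal $s^b$. The remainder has probability $O(\lambda)$ up to the conversion factor $1/\min_\theta\mu_0(\theta)$, which is needed because $\mu_C$ may put small mass where $\mu^*_s$ put large mass. Precisely, we write $\mu^*_s=(1-\lambda)^{-1}(\mu'_s-\lambda\mu_C)$ and take $s^b$ to absorb the likelihood shortfall $\lambda\mu_0(\theta)\cdot$(bounded ratio). This is where the constant $4\sqrt{2}/(\gamma_{\min}\min_\theta\mu_0)$ in \eqref{eq:compute-bound} should come from; I would first try to verify it with the $\ell_2$ bound $\norm{\mu_C-\mu^*_s}_2\le\sqrt2$. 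Since cells are preserved, the actions under every true $r$ are unchanged on the modified signals, so $V(\tilde\phi,\phi^R)\ge V^*-O(\epsilon_h/(\gamma_{\min}\min_\theta\mu_0))$. Moreover $y_{s,r}$ set to the true actions is feasible for \eqref{eq:compute}.

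\emph{Step 3: estimated versus true value on the feasible set.} For any $\phi$ feasible for \eqref{eq:compute}, the margin constraint forces each non-$s^b$ posterior to satisfy $\mathrm{sign}(n_{r(i)}^\top\mu_s)=\mathrm{sign}(\hat n_i^\top\mu_s)$, because $\epsilon_h<2\epsilon_h$. Hence the receiver's true action for $r(i)$ equals $y_{s,\hat r}$. The objective then differs from the truth only through the likelihood values, contributing at most $\epsilon_m$ per estimated signal for $|S^R|\epsilon_m$ in total. The remaining differences are $r^0,r^1$ (at most $2\epsilon_m$ together with the Step 1 class-2 bound), class 3 (at most $|S^R|p_{\min}$), and $s^b$, whose mass is bounded by \eqref{eq:compute-bound} and contributes $O(\epsilon_h/(\gamma_{\min}\min_\theta\mu_0))$ since utilities lie in $[0,1]$. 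Applying this to both $\hat\phi$ and $\tilde\phi$ and combining with optimality of $\hat\phi$ in \eqref{eq:compute} gives \eqref{eq:subopt}.
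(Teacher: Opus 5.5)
Your architecture matches the paper's. You tilt each posterior of $\phi^*$ toward the $\gamma_{\min}$-deep point of its cell with $\lambda=\Theta(\epsilon_h/\gamma_{\min})$, restore Bayes plausibility through the low-probability signal $s^b$, and use the $2\epsilon_h$ margin to make $y_{s,\hat r}$ agree with the true action for every detected signal. You then chain through the optimality of $\hat\phi$ in \eqref{eq:compute}. Your three-way partition of $S^R$ and your explicit charge for $s^b$ in Step 3 are, if anything, tidier than the paper's bookkeeping.

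One repair is needed in Step 2. Per-signal remainders can be negative when $\mu^*_s(\theta)=0<\mu_C(\theta)$. You need a single common scale factor $c\le\min_\theta\mu_0(\theta)/\mu_0'(\theta)$ with $\mu_0'=\sum_s\pi_s\mu'_s$; this is exactly the paper's ray-to-the-boundary construction. The $1/\min_\theta\mu_0(\theta)$ factor then arises because $\mu_0'$ can exceed $\mu_0$ on low-prior states, not for the reason you give.

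The genuine gap is the class-2 bound, i.e.\ the $O(|\Theta|z_{\min})$ term. The geometry you invoke is reversed, and it is false even after reversing it. A signal with less than a $z_{\min}$ fraction of its likelihood on $\Theta_1$ pushes the receiver toward $a_0$. Its $a_1$ side therefore contains posteriors supported on $\Theta_1$ (where $r$ has positive likelihood), contradicting your claim that posteriors with $\mu(\Theta_0)\lesssim 1-O(z_{\min})$ lie on the $a_0$ side.

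Nor do posteriors on the $a_1$ side concentrate on $\Theta_1$. Take $\Theta_0=\{\theta_a,\theta_b\}$, $\Theta_1=\{\theta_c\}$, $\phi^R(r|\theta_a)=1$, $\phi^R(r|\theta_b)=0$, $\phi^R(r|\theta_c)=\epsilon$. Then the posterior $(0,1-\eta,\eta)$ is on the $a_1$ side for every $\eta>0$.

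Two further problems follow. ``Moving posteriors'' is not available in Step 3, where you must compare true and estimated values of the fixed scheme $\hat\phi$. And the factor $|\Theta|$ comes from summing over receiver signals, not over sender signals, whose weights $\pi_s$ sum to one.

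The missing idea is a pointwise bound on likelihood-weighted mass rather than posterior mass. Write $d_\theta=u^R(a_0,\theta)-u^R(a_1,\theta)$. For such an $r$, the true action and the surrogate constant action differ only on
\[
W_r=\Bigl\{\mu:\sum_{\theta\in\Theta_1}\mu(\theta)\phi^R(r|\theta)|d_\theta|\ge\sum_{\theta\in\Theta_0}\mu(\theta)\phi^R(r|\theta)|d_\theta|\Bigr\}.
\]
On $W_r$,
\[
\sum_{\theta}\mu(\theta)\phi^R(r|\theta)\le\Bigl(1+\frac{d_{\max}}{d_{\min}}\Bigr)\sum_{\theta\in\Theta_1}\mu(\theta)\phi^R(r|\theta)\le\Bigl(1+\frac{d_{\max}}{d_{\min}}\Bigr)z_{\min}m(r).
\]
Summing over the class-2 signals and using $\sum_{r}m(r)=|\Theta|$ gives an $O((1+d_{\max}/d_{\min})|\Theta|z_{\min})$ error at every posterior. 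It therefore plugs directly into Step 3 for both $\tilde\phi$ and $\hat\phi$, which is how the paper obtains this term.
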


The four terms in Lemma~\ref{lem:subopt} correspond to: merging
signals that are not $z_{\min}$-balanced into $r^0$ and $r^1$;
missing signals of total mass below $p_{\min}$; likelihood error
$\epsilon_m$; and the cost of the robustness margin in
\eqref{eq:compute-threshold}. 

\paragraph{Proof sketch.}

Similar to standard persuasion, we can define the value of a posterior $\mu $ to the sender in the problem of persuading a privately informed receiver as $V(\mu) = \sum_{\theta,a}\mu(\theta)p_{\mu}(a|\theta)u_{\theta}(a) $. Note that, after having learned $\hat \phi^R $, the sender can use it to compute an estimate $\hat p_{\mu} $ of $p_{\mu} $ for all $\mu \in \Delta(\Theta) $. The main idea behind the proof is to compare the value of a posterior $\mu $ in the original problem \eqref{eq:persuasion} and the sender's problem \eqref{eq:compute}, where the difference is controlled by $\norm{p_{\mu} - \hat p_{\mu}}_1 $. Whenever $\mu $ satisfies the margin constraint \eqref{eq:compute-threshold}, the same mapping $w:\hat S^R \rightarrow A $ is associated with $\mu $ in the original and the estimated problem, in which case the error is controlled by the first three terms in \eqref{eq:subopt}. To bound the utility lost by imposing the margin, we construct a feasible scheme for \eqref{eq:compute} that is near-optimal for \eqref{eq:persuasion}. Similar to \citet{li2025information}, we tilt each posterior induced by an optimal scheme away from the hyperplanes by taking a convex combination with a $\gamma_{\min} -$deep point of its corresponding cell ($\mu_C $ in Assumption \ref{as:gamma}), and restore Bayes-plausibility with one additional low-probability signal. This is the reason we allow the sender to use an additional low-probability signal $s^b $ in \eqref{eq:compute-bound}.

\paragraph{Remark.}$\textsc{Compute-Signaling}$ runs in time
$O(2^{|\Theta||S^R|}\operatorname{poly}(|\Theta|))$. This exponential
dependence is a computational limitation of our algorithm, although we
emphasize that it does not affect the regret guarantee, which is
information-theoretic.
Moreover, we remark that \citet{hossain2024multi} show that computing a sender's best response in multi-sender persuasion is NP-hard. However, their hardness result does not apply to the binary-action receiver but requires large action spaces. Whether the offline problem admits a polynomial-time algorithm for binary-action receivers remains open to the best of our knowledge. 

\paragraph{Putting it together.}
We now sketch the proof of Theorem~\ref{theorem:main}. Setting
$\delta = 1/T$, the five events underlying
Lemmas~\ref{lem:separation}--\ref{lem:subopt} jointly fail with
probability at most $\delta$, contributing $O(1)$ regret. On their
intersection, exploration lasts at most
$(Q_{\mathrm{cross}} + Q_{\mathrm{learn}} + 2) \cdot
N(z_{\min}, p_{\min}/48, \delta/5)$ rounds, each
incurring $O(1)$ regret, while every commitment round incurs the
suboptimality of Lemma~\ref{lem:subopt}. The exploration cost scales
as $1/(z_{\min}\epsilon_m^2)$ and the commitment loss as
$(z_{\min} + \epsilon_m)$ per round; balancing the two yields the
choices $z_{\min}, p_{\min}, \epsilon_m, \epsilon_h = \Theta(T^{-1/4})$
in Algorithm~\ref{alg:main} and the claimed bound.
\hfill$\square$

\section{Conclusion}

We studied online Bayesian persuasion with a privately informed binary-action receiver under action feedback. We proposed a no-regret algorithm with regret scaling polynomially with the sizes of the state space and the receiver's signal alphabet. 

Our work opens several directions for future research. First, we do not
know whether the $T^{3/4}$ rate is tight: it might be possible to achieve better rates through adaptive exploration. Another natural extension of our work is to study receivers with more than two actions. 

\section*{Acknowledgments}

This work was supported in part by ONR N00014-24-1-2432 and ARO W911NF-23-1-0317.

\bibliography{ref}

\newpage
\onecolumn
\appendix

\counterwithin{equation}{section}
\counterwithin{theorem}{section}
\counterwithin{algorithm}{section}
\setcounter{theorem}{0}
\renewcommand{\thetheorem}{\thesection\arabic{theorem}}

\setcounter{equation}{0}
\renewcommand{\theequation}{\thesection\arabic{equation}}

\setcounter{definition}{0}
\renewcommand{\thedefinition}{\thesection\arabic{definition}}

\setcounter{assumption}{0}
\renewcommand{\theassumption}{\thesection\arabic{assumption}}

\setcounter{algorithm}{0}
\renewcommand{\thealgorithm}{\thesection\arabic{algorithm}}
\setcounter{secnumdepth}{1}

    \section{Proof of Theorem \ref{theorem:main}}

    \subsection{Parameter Assignments for Algorithm \ref{alg:main}}

    We first provide the assignments for the parameters $z_{\min}, p_{\min}, \epsilon_m, \epsilon_h $ used by Algorithm \ref{alg:main}. 

            Let 
        \begin{equation}
            n:= |\Theta|, \qquad m = |S^R|, \qquad \mu_{\min} = \min_{\theta}\mu_0(\theta), \qquad \kappa = \frac{d_{\max}}{d_{\min}},
        \end{equation}
        and define
        \begin{equation}
            L_T := \log \left (e + \frac{Tmn\kappa(1+\kappa)}{\gamma_{\min}\mu_{\min}} \right ),
        \end{equation}
        \begin{equation}\label{eq:rho-unclipped}
    \rho_*
    :=
    \left(
        \frac{\kappa(1+\kappa)m^3n^6L_T(n+L_T)}
             {\mu_{\min}T}
    \right)^{1/4},
    \qquad
    \rho:=\min\{1,\rho_*\}.
        \end{equation}

        Algorithm \ref{alg:main} uses 
        \begin{equation}\label{eq:assignments}
            \delta = \frac{1}{T}, \qquad z_{\min} = \frac{\rho}{32n(1 + \kappa)}, \qquad p_{\min}, \epsilon_m = \frac{\rho}{16m}, \qquad \epsilon_h = \frac{\gamma_{\min}\mu_{\min}\rho}{64(1 + 2\sqrt{2})}.
        \end{equation}

        The clipping in \eqref{eq:rho-unclipped} guarantees that all four
accuracy parameters lie in their required ranges. In particular,
\(z_{\min}\leq 1/(32n(1+\kappa))\),
\(p_{\min}=\epsilon_m\leq1/(16m)\), and
\(\epsilon_h\leq\gamma_{\min}/[8(1+2\sqrt2)]\).

\subsection{Proof of Theorem \ref{theorem:main}}

    \begin{proof}
    We first state the explicit rate achieved by Algorithm \ref{alg:main}:
\begin{align}\label{eq:explicit}
    R_T 
    &= O \left( \min \left \{ T,  \left(
        \frac{\kappa(1+\kappa)m^3n^6L_T(n+L_T)}
             {\mu_{\min}}
    \right)^{1/4}T^{3/4} \right \} \right )
\end{align}
    
Let $\mathcal{G} = \mathcal{E}_{\mathrm{sep}}\cap \mathcal{E}_{\mathrm{cross}}\cap \mathcal{E}_{\mathrm{learn}} \cap \mathcal{E}_{0} \cap \mathcal{E}_{\mathrm{1}} $ .The first event fails with probability at most $\delta/5 $, the next two conditional events fail with at most probability $\delta/5 $ under the preceding events, and the last two events each fail with probability at most $\delta/5 $. Hence we have
\begin{equation}\label{eq:good-event-probability}
    \mathbb{P}(\mathcal{G}^c) \leq \delta.
\end{equation}

If $\rho = 1 $, this means that the second term in \eqref{eq:explicit} is greater than $T $. Since utilities are bounded in $[0,1] $, in this regime, we can simply bound the regret suffered by the algorithm as
\begin{equation}
    R_T \leq T.
\end{equation}

For the rest of the proof, assume that
\begin{equation}\label{eq:nontrivial-regime}
    \rho = \rho^* < 1.
\end{equation}

Since $\kappa(1+\kappa)m^3n^6L_T(n+L_T)/\mu_{\min} \geq 1 $, \eqref{eq:rho-unclipped} implies
\begin{equation}\label{eq:rho-lower-bound}
    \rho \geq T^{-1/4}, \qquad \log \frac{1}{\rho} \leq \frac{1}{4}\log T.
\end{equation}

We first bound the duration of exploration $T_{\mathrm{exp}} $ on the event $\mathcal{G} $. Let
\(Q_{\mathrm{cross}}\) and \(Q_{\mathrm{learn}}\) be the deterministic
query bounds in Lemmas~\ref{lem:crossings} and
\ref{lem:learn-signal}, and put
\begin{equation}
    Q := Q_{\mathrm{cross}} + Q_{\mathrm{learn}} + 2.
\end{equation}
Lemmas \ref{lem:crossings} and \ref{lem:learn-signal}, together with the two final calls to $\textsc{Estimate-Profile} $ for learning $r^0 $ and $r^1 $ yields
\begin{align}
    T_{\mathrm{exp}}
&\leq
 Q_{\mathrm{cross}}
 N\!\left(
    z_{\min},\frac{p_{\min}}6,
    \frac{\delta}{5Q_{\mathrm{cross}}}
 \right)
\notag\\
&\quad+
 Q_{\mathrm{learn}}
 N\!\left(
    z_{\min},
    \min\!\left\{\epsilon_m,\frac{p_{\min}}{48}\right\},
    \frac{\delta}{5Q_{\mathrm{learn}}}
 \right)
 +2N\!\left(z_{\min},\epsilon_m,\frac{\delta}{5}\right).
\label{eq:Texp-sum}
\end{align}
The function $N $ is non-increasing in its accuracy and failure probability arguments. By \eqref{eq:assignments},
\begin{equation}\label{eq:bar-epsilon}
        \bar\epsilon
    :=\min\!\left\{
       \frac{p_{\min}}6,
       \epsilon_m,
       \frac{p_{\min}}{48}
    \right\}
    =\frac{\rho}{768m}.
\end{equation}
Since $Q > Q_{\mathrm{cross}}, Q_{\mathrm{learn}}, 2 $, \eqref{eq:Texp-sum} implies
\begin{equation}\label{eq:Texp-compressed}
        T_{\mathrm{exp}}
    \leq
    QN\!\left(
       z_{\min},\bar\epsilon,\frac{\delta}{5Q}
    \right).
\end{equation}

We next control the logarithmic factors. Lemma~\ref{lem:separation},
used with failure probability \(\delta/5\), gives
\begin{equation}\label{eq:eta-rate-theorem}
 \log\frac1{\eta_{\mathrm{sep}}}
 =O\!\left(
      n\log\!\frac{n\kappa}
                       {z_{\min}\gamma_{\min}}
      +\log\frac{5m}{\delta}
    \right).
\end{equation}
Using \eqref{eq:assignments} and \eqref{eq:rho-lower-bound}, and noting that $\kappa \geq 1 $, $\mu_{\min} \leq 1 $ and $\gamma_{\min} \leq 1 $, we get
\begin{equation}
    \log\!\frac{n\kappa}
                       {z_{\min}\gamma_{\min}} = O(L_T), \qquad \log \frac{5m}{\delta} = O(L_T), \qquad \log\frac{32n}{\epsilon_h} = O(L_T).
\end{equation}
Consequently, Lemmas~\ref{lem:crossings} and
\ref{lem:learn-signal} imply
\begin{align}
Q_{\mathrm{cross}}
&=O(mnL_T),
\label{eq:Qcross-theorem}\\
Q_{\mathrm{learn}}
&=O(mn^2+mnL_T)
 =O\!\bigl(mn(n+L_T)\bigr).
\label{eq:Qlearn-theorem}
\end{align}
Therefore
\begin{equation}\label{eq:Q-total-theorem}
    Q=O\!\bigl(mn(n+L_T)\bigr).
\end{equation}
In particular,
\begin{equation}\label{eq:N-log-theorem}
    \log\!\left(\frac{20nQ}{\delta}\right)=O(L_T).
\end{equation}

Recall that \(x_{\min}=1/(2n)\). The definition of
\(N(z_{\min},\epsilon,\delta')\) in Lemma~\ref{lem:estimate}, together
with \eqref{eq:assignments}, \eqref{eq:bar-epsilon}, and
\eqref{eq:N-log-theorem}, gives
\begin{align}
N\!\left(
   z_{\min},\bar\epsilon,\frac{\delta}{5Q}
 \right)
&\leq
1+
\frac{4\sqrt2\kappa}
     {x_{\min}^2z_{\min}\mu_{\min}}
\frac{n^2}{2\bar\epsilon^2}
\log\!\left(\frac{20nQ}{\delta}\right)
\notag\\
&=O\!\left(
   \frac{\kappa(1+\kappa)m^2n^5}
        {\mu_{\min}\rho^3}
   L_T
 \right).
\label{eq:N-theorem}
\end{align}

Combining
\eqref{eq:Texp-compressed}, \eqref{eq:Q-total-theorem}, and
\eqref{eq:N-theorem}, we obtain
\begin{equation}\label{eq:Texp-final}
    T_{\mathrm{exp}}
    =O\!\left(
       \frac{\kappa(1+\kappa)m^3n^6}
            {\mu_{\min}\rho^3}
       L_T(n+L_T)
    \right).
\end{equation}
By \eqref{eq:rho-unclipped} and \eqref{eq:nontrivial-regime},
\begin{equation}\label{eq:balance-exploration}
    \frac{\kappa(1+\kappa)m^3n^6L_T(n+L_T)}
         {\mu_{\min}\rho^3}
    =T\rho.
\end{equation}
Thus \(T_{\mathrm{exp}}=O(T\rho)\) on \(\mathcal G\).

It remains to bound the commitment loss. The explicit conclusion of
Lemma~\ref{lem:subopt} gives, on \(\mathcal G\),
\begin{align}
V^*-V(\widehat\phi)
&\leq
 \left(
    \frac{4\sqrt2}{\gamma_{\min}\mu_{\min}}
    +\frac{16}{\gamma_{\min}}
 \right)\epsilon_h
 +8(1+\kappa)nz_{\min}
 +4mp_{\min}+4m\epsilon_m
\notag\\
&\leq
 \frac{16(1+2\sqrt2)}
      {\gamma_{\min}\mu_{\min}}\epsilon_h
 +8(1+\kappa)nz_{\min}
 +4mp_{\min}+4m\epsilon_m
\notag\\
&=\frac{\rho}{4}+\frac{\rho}{4}
  +\frac{\rho}{4}+\frac{\rho}{4}
 =\rho,
\label{eq:commitment-gap}
\end{align}
where the second line uses \(\mu_{\min}\leq1\), and the last line uses
\eqref{eq:assignments}. Hence the commitment phase contributes at
most \(T\rho\) regret.

Every exploration round and every round on \(\mathcal G^c\) incurs
regret at most one. Using \eqref{eq:good-event-probability},
\eqref{eq:Texp-final}, \eqref{eq:balance-exploration}, and
\eqref{eq:commitment-gap}, we conclude that
\begin{align}
R_T
&\leq O(T\rho)+T\rho+\delta T
 =O(T\rho)+1
 =O(T\rho)
\notag\\
&=O\!\left(
    \left(
        \frac{\kappa(1+\kappa)}{\mu_{\min}}
    \right)^{1/4}
    m^{3/4}n^{3/2}T^{3/4}
    \bigl[L_T(n+L_T)\bigr]^{1/4}
  \right),
\label{eq:optimized-regret}
\end{align}
where \(1\leq T\rho\) follows from \eqref{eq:rho-lower-bound}.
Together with the trivial regime \(\rho=1\), this concludes the proof.
\end{proof}

\newpage

\section{Proof of Lemma \ref{lem:separation}}

\begin{algorithm}
\caption{\textsc{construct-search-line}}
\label{alg:line}
\begin{algorithmic}[1]
    \Require $z_{\min}$
    \State $x_{\min} \gets \dfrac{1}{2|\Theta|}$
    \Statex
    \State $t_{\min}
        \gets
        \dfrac{d_{\min}x_{\min}z_{\min}}{2d_{\max}}$
    \Statex
    \State Sample
        $x_0
        \sim
        \mathrm{Uniform}
        \bigl(
            \Delta_{\geq x_{\min}}(\Theta_0)
        \bigr)$
    \State Sample
        $x_1
        \sim
        \mathrm{Uniform}
        \bigl(
            \Delta_{\geq x_{\min}}(\Theta_1)
        \bigr)$
    \State $\mu_-
        \gets
        t_{\min}x_1
        +
        (1-t_{\min})x_0$
    \State $\mu_+
        \gets
        (1-t_{\min})x_1
        +
        t_{\min}x_0$
    \State \Return $\mu_-,\mu_+$
\end{algorithmic}
\end{algorithm}

We first prove that $l$ satisfies properties $(i)$ and $(ii)$ in Lemma
\ref{lem:separation}, which is relatively straightforward.

\begin{proof}
    Let
    \[
        d_{\theta}
        =
        u^R(a_0,\theta)
        -
        u^R(a_1,\theta).
    \]

    \paragraph{Property $(i)$:}
    Let $r$ be a $z_{\min}$-balanced signal. The point
    \[
        (1-t_r)x_0+t_rx_1,
    \]
    with $t_r$ given as
    \begin{align}
        t_r
        &=
        \frac{
            \displaystyle
            \sum_{\theta\in\Theta_0}
            x_0(\theta)
            \phi^R(r\mid\theta)
            |d_{\theta}|
        }{
            \displaystyle
            \sum_{\theta\in\Theta_0}
            x_0(\theta)
            \phi^R(r\mid\theta)
            |d_{\theta}|
            +
            \sum_{\theta\in\Theta_1}
            x_1(\theta)
            \phi^R(r\mid\theta)
            |d_{\theta}|
        }
        \notag\\
        &\geq
        \frac{
            x_{\min}d_{\min}
            \displaystyle
            \sum_{\theta\in\Theta_0}
            \phi^R(r\mid\theta)
        }{
            d_{\max}
            \displaystyle
            \sum_{\theta}
            \phi^R(r\mid\theta)
        }
        \notag\\
        &\geq
        \frac{
            d_{\min}x_{\min}z_{\min}
        }{
            d_{\max}
        }
        \notag\\
        &\geq
        t_{\min},
    \end{align}
    lies on $H_r$. Similarly we have $1-t_r\geq t_{\min}$. It follows that
    $l\cap H_r\neq\emptyset$.

    \paragraph{Property $(ii)$.}
    Let
    \[
        \mu_t
        =
        tx_1+(1-t)x_0.
    \]
    The preferred action of signal $r$ at $\mu_t$, that is
    $a^*(\mu_t,r)$ depends on $\operatorname{sign}(n_r^\top\mu_t)$.
    Since we have
    \[
        p_{\mu}(\theta)
        =
        \sum_{r:a^*(\mu,r)=a_0}
        \phi^R(r\mid\theta),
    \]
    it follows that $p_{\mu_t}$ is piecewise constant for all
    $\theta\in\Theta$. To see the monotonicity of $p_{\mu_t}$, observe
    that for $t<t_r$, $a^*(\mu_t,r)=a_0$ whereas for $t>t_r$,
    $a^*(\mu_t,r)=a_1$.
\end{proof}

It remains to establish the separation property~$(iii)$. We first
handle the case \( |\Theta|=2 \). In this case,
\(\operatorname{aff}(\Delta(\Theta))\) is one-dimensional, and the
intersection of each \(H_r\) with
\(\operatorname{aff}(\Delta(\Theta))\) is a single point. Consider two
distinct hyperplanes \(H_r\) and \(H_{r'}\), and let \(C\) be a cell
lying between their intersection points. By
Assumption~\ref{as:gamma}, there exists \(\mu_C\in C\) such that
\[
    |n_r^\top\mu_C|
    \geq
    \gamma_{\min}
    \qquad\text{and}\qquad
    |n_{r'}^\top\mu_C|
    \geq
    \gamma_{\min}.
\]
Let
\[
    T
    =
    \left\{
        z\in\mathbb{R}^{|\Theta|}
        :
        \boldsymbol{1}^\top z=0
    \right\}
\]
be the tangent space of the simplex. Since
\begin{align}
    \operatorname{dist}_{\operatorname{aff}(\Delta(\Theta))}
    (\mu_C,H_s)
    &=
    \frac{
        |n_s^\top\mu_C|
    }{
        \|P_Tn_s\|_2
    }
    \notag\\
    &\geq
    |n_s^\top\mu_C|,
    \qquad
    s\in\{r,r'\},
\end{align}
the distance between the two intersection points is at least
\(2\gamma_{\min}\). Consequently,
\[
    \operatorname{dist}(l\cap H_r,H_{r'})
    \geq
    \gamma_{\min}
\]
whenever \(H_r\) intersects \(l\). Thus property~$(iii)$ holds
deterministically for \( |\Theta|=2 \) with
\[
    \eta_{\mathrm{sep}}
    =
    \gamma_{\min}.
\]
For the remainder of the proof, suppose that \( |\Theta|\geq3 \).

We now prove the following pairwise separation result for two distinct
hyperplanes \(H_r\) and \(H_{r'}\). Let $T = \{z :\boldsymbol{1}^\top z = 0 \} $ be the simplex tangent space, and let $P_{T} $ be the orthogonal projector onto $T $. The angle $\alpha_{r,r'} \in [0, \pi/2] $ between distinct hyperplanes $H_r $ and $H_{r'} $, measured within simplex affine hull, is defined as the angle between $P_Tn_r $ and $P_Tn_{r'} $.

\begin{lemma}
\label{lem:pairwise}
    Assume $|\Theta|\geq3$. Let $H_r$ and $H_{r'}$ be two distinct
    hyperplanes separating $\Delta(\Theta_0)$ from
    $\Delta(\Theta_1)$. Suppose we draw
    \[
        x_0
        \sim
        \mathrm{Uniform}
        \bigl(
            \Delta_{\geq x_{\min}}(\Theta_0)
        \bigr)
    \]
    and
    \[
        x_1
        \sim
        \mathrm{Uniform}
        \bigl(
            \Delta_{\geq x_{\min}}(\Theta_1)
        \bigr).
    \]
    Let
    \[
        d_{r,r'}
        :=
        \operatorname{dist}
        \bigl(
            H_r\cap\Delta(\Theta),
            H_{r'}
        \bigr)
    \]
    and $\alpha_{r,r'}$ be the angle between $H_r$ and $H_{r'}$,
    measured within the simplex affine hull.

    Consider the line segment defined as
    \[
        l
        =
        \left\{
            (1-t)x_0+tx_1
            :
            t\in[t_{\min},1-t_{\min}]
        \right\},
    \]
    and assume $l\cap H_r\neq\emptyset$. Then, with probability at
    least $1-\delta$, for some constant $C_{|\Theta_0|, |\Theta_1|} $, we have
    \begin{equation}
        \operatorname{dist}(l\cap H_r,H_{r'})
        \geq
        \max
        \left\{
            \delta
            \sin(\alpha_{r,r'})
            C_{|\Theta_0|,|\Theta_1|}
            \left(
                \frac{2}{t_{\min}^2}
            \right)^{1-\frac{|\Theta|}{2}},
            \,
            d_{r,r'}
        \right\}.
    \end{equation}
\end{lemma}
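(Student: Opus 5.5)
The plan has three parts. First comes the deterministic bound $d_{r,r'}$. Then we reparametrize the crossing point as a map from the endpoint pair. Finally we bound, by a volume argument, the probability that the crossing point lands within a small strip around $H_{r'}$.

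\textbf{The deterministic bound.} The term $d_{r,r'}$ is immediate. Since $l \subset \Delta(\Theta)$, the point $l\cap H_r$ lies in $H_r\cap\Delta(\Theta)$. Its distance to $H_{r'}$ is therefore at least $d_{r,r'}$, with probability one. The real work is the first term, which matters when the two hyperplanes meet inside the simplex.

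\textbf{Reparametrizing the crossing point.} Write $\mu(x_0,x_1)=(1-t_r)x_0+t_rx_1$, with $t_r$ given by the explicit formula from the proof of property~(i). Equivalently, $\mu$ is the unique point of $H_r$ on the segment $[x_0,x_1]$. Conversely, every $\mu\in H_r\cap\mathrm{relint}\,\Delta(\Theta)$ decomposes uniquely as $\mu=(1-t)y_0+ty_1$, with $y_i=\mu|_{\Theta_i}/\mu(\Theta_i)\in\Delta(\Theta_i)$ and $t=\mu(\Theta_1)$. This gives a bijection $\Psi: H_r\cap\Delta(\Theta)\to\Delta(\Theta_0)\times\Delta(\Theta_1)$ modulo the one-dimensional fibre. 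More usefully, the pair $(x_0,x_1)$ is parametrized by $(\mu,\tau)$, where $\tau$ collects the remaining degrees of freedom.

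Dimension counting gives the following. The pair space has dimension $(|\Theta_0|-1)+(|\Theta_1|-1)=|\Theta|-2$, which is exactly $\dim(H_r\cap\mathrm{aff}\,\Delta(\Theta))$. So the map $(x_0,x_1)\mapsto\mu$ is a map between $(|\Theta|-2)$-dimensional sets, with inverse $\mu\mapsto(\mu|_{\Theta_0}/\mu(\Theta_0),\,\mu|_{\Theta_1}/\mu(\Theta_1))$. I will compute the Jacobian of this inverse on the truncated region. The condition $t_r\in[t_{\min},1-t_{\min}]$ gives $\mu(\Theta_0),\mu(\Theta_1)\ge t_{\min}$. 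Normalizing by a mass at least $t_{\min}$ is Lipschitz with constant $O(1/t_{\min})$ in each direction, up to constants depending on $|\Theta_0|,|\Theta_1|$. Hence the pushforward density of $\mu$ on $H_r\cap\Delta(\Theta)$ is at most the uniform density on $\Delta_{\ge x_{\min}}(\Theta_0)\times\Delta_{\ge x_{\min}}(\Theta_1)$ times $|\det D\Psi|\lesssim (c/t_{\min}^2)^{(|\Theta|-2)/2}$. The $t_{\min}^{-2}$ per two coordinates reflects that both the normalizations and the derivative of $t_r$ contribute. This produces the factor $(2/t_{\min}^2)^{|\Theta|/2-1}$, and the uniform density constants are absorbed into $C_{|\Theta_0|,|\Theta_1|}$.

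\textbf{Volume of the strip.} The bad set is $\{\mu\in H_r\cap\Delta(\Theta):\mathrm{dist}(\mu,H_{r'})<\eta\}$. Inside the $(|\Theta|-2)$-dimensional affine set $H_r\cap\mathrm{aff}\,\Delta(\Theta)$, the hyperplane $H_{r'}$ cuts a codimension-one slice meeting $H_r$ at angle $\alpha_{r,r'}$. The $\eta$-neighbourhood in ambient distance is therefore a slab in $H_r$ of width $2\eta/\sin\alpha_{r,r'}$. Its $(|\Theta|-2)$-volume is at most $2\eta/\sin\alpha_{r,r'}$ times the maximal $(|\Theta|-3)$-volume of a section of the simplex, which is a constant depending only on $|\Theta|$. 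Hence
\[
\mathbb P(\mathrm{dist}(\mu,H_{r'})<\eta)\le \frac{2\eta}{\sin\alpha_{r,r'}}\,C'\Bigl(\frac{2}{t_{\min}^2}\Bigr)^{\frac{|\Theta|}{2}-1}.
\]
Setting this equal to $\delta$ and solving for $\eta$ gives the claimed first term with $C_{|\Theta_0|,|\Theta_1|}=1/(2C')$. Taking the maximum with the deterministic $d_{r,r'}$ finishes the proof.

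\textbf{Main obstacle.} The hardest step is the Jacobian bound for the change of variables. The parameter $t_r$ depends on $(x_0,x_1)$ through the ratio formula, so the map $(x_0,x_1)\mapsto\mu$ is nonlinear. I would first try to bound $\det$ of the inverse map $\mu\mapsto(y_0,y_1)$ directly: it is block diagonal, built from the projective normalizations $y_i=\mu|_{\Theta_i}/\mu(\Theta_i)$, each with singular values at most $1/\mu(\Theta_i)\le 1/t_{\min}$. That gives the area formula bound without differentiating $t_r$ at all.
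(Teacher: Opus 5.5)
Your proposal is correct and follows the paper's proof essentially step for step: you get the deterministic $d_{r,r'}$ bound, then bound the Jacobian of the normalization inverse $\mu\mapsto(\mu|_{\Theta_0}/\mu(\Theta_0),\,\mu|_{\Theta_1}/\mu(\Theta_1))$ by a factor of order $1/t_{\min}$ per coordinate, then bound the slab volume by $2\eta/\sin\alpha_{r,r'}$ times the maximal $(|\Theta|-3)$-dimensional section volume, and finally solve for $\eta$. The only slip is your claim that each normalization block has singular values at most $1/t_{\min}$: the correct bound is a dimension-dependent multiple of $1/t_{\min}$ (the paper gets $\sqrt{1+|\Theta_0||\Theta_1|/|\Theta|}/t_{\min}$), and this extra factor is simply absorbed into $C_{|\Theta_0|,|\Theta_1|}$.
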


\begin{proof}
    Let
    \[
        A
        :=
        \Delta_{\geq x_{\min}}(\Theta_0)
        \qquad\text{and}\qquad
        B
        :=
        \Delta_{\geq x_{\min}}(\Theta_1),
    \]
    and let
    \[
        D
        =
        \left\{
            tp+(1-t)q
            :
            p\in A,\,
            q\in B,\,
            t\in[t_{\min} ,1-t_{\min}]
        \right\}
    \]
    be the \emph{trimmed} simplex. Consider, for $r $ such that $l \cap H_r \neq \emptyset $, the mapping from $A\times
    B$ to $H_r\cap D$, defined as
    \[
        \Phi(p,q)
        =
        l(p,q)\cap H_r,
    \]
    where we defined 
    \begin{equation}
        l(p,q) := \{tp + (1-t)q, t \in [0,1]\}.
    \end{equation}
    We first remark that $\Phi$ is one-to-one. To see that, let
    $(p_1,q_1)\in A\times B$ and $(p_2,q_2)\in A\times B$, and assume
    that $\exists t_1,t_2\in(0,1)$, we have
    \[
        t_1p_1+(1-t_1)q_1
        =
        t_2p_2+(1-t_2)q_2.
    \]
    It follows that
    \[
        p_1
        =
        \frac{t_2}{t_1}p_2,
    \]
    which can only happen if $t_2=t_1$ and $p_2=p_1$, as
    $p_1,p_2\in\Delta(\Theta_0)$. Same argument goes for $q_1$ and
    $q_2$. Hence we have
    \[
        (p_1,q_1)
        =
        (p_2,q_2).
    \]
    To see that for all $x\in H_r\cap D$, we have an associated
    $(p,q)\in A\times B$, simply let
    \[
        p_x(\theta)
        =
        \frac{
            x(\theta)
        }{
            \displaystyle
            \sum_{\theta'\in\Theta_0}
            x(\theta')
        } \geq x_{\min}
        \quad
        \text{for }\theta\in\Theta_0
    \]
    and
    \[
        q_x(\theta)
        =
        \frac{
            x(\theta)
        }{
            \displaystyle
            \sum_{\theta'\in\Theta_1}
            x(\theta')
        }\geq x_{\min}
        \quad
        \text{for }\theta\in\Theta_1,
    \]
    and observe that, with
    \[
        t_x
        =
        \sum_{\theta'\in\Theta_0}
        x(\theta'),
    \]
    \[
        x
        =
        t_xp_x+(1-t_x)q_x.
    \]
    It follows that the inverse map
    \[
        \Phi^{-1}
        :
        H_r\cap D
        \longrightarrow
        A\times B
    \]
    exists. To proceed with the proof, we will first show that
    $\Phi^{-1}$ is Lipschitz, then use the Lipschitzness to show that a
    bad set (a set that is close to $H_{r'}$) on $H_r\cap D$ has small
    pre-image on $A\times B$.

    \textbf{Step 1: Lipschitzness of $\Phi^{-1}$.}
    To show the Lipchitzness of $\Phi^{-1}$, consider a perturbation
    $v$ such that
    \[
        \sum_{\theta}v(\theta)
        =
        0.
    \]
    Note that we have
    \[
        t_{x+sv}
        =
        t_x
        +
        s
        \sum_{\theta'\in\Theta_0}
        v(\theta'),
    \]
    and
    \[
        p_{x+sv}(\theta)
        =
        \frac{
            x(\theta)+sv(\theta)
        }{
            \displaystyle
            t_x
            +
            s
            \sum_{\theta'\in\Theta_0}
            v(\theta')
        }.
    \]
    We have
    \begin{align*}
        \frac{d}{ds}p_{x+sv}(\theta)
        &=
        \frac{
            \displaystyle
            v(\theta)
            \left(
                t_x
                +
                s
                \sum_{\theta'\in\Theta_0}
                v(\theta')
            \right)
        }{
            \displaystyle
            \left(
                t_x
                +
                s
                \sum_{\theta'\in\Theta_0}
                v(\theta')
            \right)^2
        }
        \\
        &\quad
        -
        \frac{
            \displaystyle
            \bigl(
                x(\theta)+sv(\theta)
            \bigr)
            \sum_{\theta'\in\Theta_0}
            v(\theta')
        }{
            \displaystyle
            \left(
                t_x
                +
                s
                \sum_{\theta'\in\Theta_0}
                v(\theta')
            \right)^2
        }.
    \end{align*}
    and hence
    \begin{align*}
        \left.
            \frac{d}{ds}p_{x+sv}(\theta)
        \right|_{s=0}
        &=
        \frac{
            \displaystyle
            v(\theta)t_x
            -
            x(\theta)
            \sum_{\theta'\in\Theta_0}
            v(\theta')
        }{
            t_x^2
        }
        \\
        &=
        \frac{
            \displaystyle
            v(\theta)
            -
            p_x(\theta)
            \sum_{\theta'\in\Theta_0}
            v(\theta')
        }{
            t_x
        }.
    \end{align*}
    Letting
    \[
        Dp_x[v](\theta)
        =
        \left.
            \frac{d}{ds}p_{x+sv}(\theta)
        \right|_{s=0},
    \]
    we have
    \begin{equation}
        \norm{Dp_x[v]}_2
        \leq
        \frac{1}{t_x}
        \sqrt{
            1
            +
            \frac{
                |\Theta_0||\Theta_1|
            }{
                |\Theta|
            }
        }
        \norm{v}_2,
    \end{equation}
    where we used the fact that, for
    $\sum_{\theta}v(\theta)=0$, we have
    \[
        \left|
            \sum_{\theta\in\Theta_0}
            v(\theta)
        \right|
        \leq
        \sqrt{
            \frac{
                |\Theta_0||\Theta_1|
            }{
                |\Theta|
            }
        }
        \norm{v}_2.
    \]
    A similar calculation for $Dq_x[v]$ yields
    \begin{equation}
        \norm{Dq_x[v]}_2
        \leq
        \frac{1}{1-t_x}
        \sqrt{
            1
            +
            \frac{
                |\Theta_0||\Theta_1|
            }{
                |\Theta|
            }
        }
        \norm{v}_2.
    \end{equation}
    Since we have $t_x\geq t_{\min}$ and
    $1-t_x\geq t_{\min}$, it follows that $\Phi^{-1}$ has Lipschitz
    constant $L$ upper bounded by
    \begin{equation}
        L
        \leq
        \sqrt{
            1
            +
            \frac{
                |\Theta_0||\Theta_1|
            }{
                |\Theta|
            }
        }
        \sqrt{
            \frac{2}{t_{\min}^2}
        }.
    \end{equation}

    Now consider the bad set defined as
    \begin{equation}
        E
        =
        \left\{
            x\in H_r\cap D
            :
            \operatorname{dist}(x,H_{r'})
            \leq
            \eta
        \right\}.
    \end{equation}
    We distinguish between two cases:

    \textbf{Case 1:}
    $H_r$ and $H_{r'}$ intersect within
    $\operatorname{affine}(\Delta(\Theta))$.

    In this case,
    \[
        H_r
        \cap
        H_{r'}
        \cap
        \operatorname{affine}(\Delta(\Theta))
    \]
    is a $|\Theta|-3$ dimensional affine subset of
    $\mathbb{R}^{|\Theta|}$. We can give the following upper bound if
    $\eta\geq d_{r,r'}$:
    \begin{align}
        \operatorname{vol}_{|\Theta|-2}(E)    
        \leq
        \frac{
            2\eta
        }{
            \sin(\alpha_{r,r'})
        }
        \sup_s
        \operatorname{vol}_{|\Theta|-3}
        \bigl(
            H_r
            \cap
            \Delta(\Theta)
            \cap
            P_s
        \bigr).
    \end{align}
    $P_s$ ranges over the affine $|\Theta|-3$ dimensional hyperplanes
    inside $H_r$ that are parallel to $H_r\cap H_{r'}$, and
    $\alpha_{r,r'}$ denotes the angle between $H_r$ and $H_{r'}$
    measured within $\operatorname{affine}(\Delta(\Theta))$. If
    $\eta<d_{r,r'}$, then
    \begin{equation}
        \operatorname{vol}_{|\Theta|-2}(E)
        =
        0.
    \end{equation}
    The supremum is upper bounded by the volume of the $|\Theta|-3$
    dimensional faces of $\Delta(\Theta)$
    \citep{alonso2023sections}:
    \[
        \sup_s
        \operatorname{vol}_{|\Theta|-3}
        \bigl(
            H_r
            \cap
            \Delta(\Theta)
            \cap
            P_s
        \bigr)
        \leq
        \frac{
            \sqrt{|\Theta|-2}
        }{
            (|\Theta|-3)!
        },
    \]
    hence
    \begin{equation}
        \operatorname{vol}_{|\Theta|-2}(E)
        \leq
        \frac{
            2\eta
        }{
            \sin(\alpha_{r,r'})
        }
        \frac{
            \sqrt{|\Theta|-2}
        }{
            (|\Theta|-3)!
        }.
    \end{equation}
    Using the Lipschitzness of $\Phi^{-1}$, we have
    \begin{equation}
        \operatorname{vol}_{|\Theta|-2}
        \bigl(
            \Phi^{-1}(E)
        \bigr)
        \leq
        L^{|\Theta|-2}
        \operatorname{vol}_{|\Theta|-2}(E).
    \end{equation}
    Finally,
    \begin{align}
        \mathbb{P}(l\cap H_r\in E)
        &=
        \frac{
            \operatorname{vol}_{|\Theta|-2}
            \bigl(
                \Phi^{-1}(E)
            \bigr)
        }{
            \operatorname{vol}_{|\Theta_0|-1}
            \bigl(
                \Delta_{\geq x_{\min}}(\Theta_0)
            \bigr)
            \operatorname{vol}_{|\Theta_1|-1}
            \bigl(
                \Delta_{\geq x_{\min}}(\Theta_1)
            \bigr)
        }
        \notag\\
        &\leq
        \frac{1}{
            C_{|\Theta_0|,|\Theta_1|}
        }
        \left(
            \frac{2}{t_{\min}^2}
        \right)^{\frac{|\Theta|-2}{2}}
        \frac{
            \eta
        }{
            \sin(\alpha_{r,r'})
        },
    \end{align}
    where we defined
    \begin{align}
        C_{|\Theta_0|,|\Theta_1|}
        :={}&
        \frac{1}{2}
        \left(
            1
            +
            \frac{
                |\Theta_0||\Theta_1|
            }{
                |\Theta|
            }
        \right)^{1-\frac{|\Theta|}{2}}
        \frac{
            (|\Theta|-3)!
        }{
            \sqrt{|\Theta|-2}
        }
        \frac{
            \sqrt{|\Theta_0||\Theta_1|}
        }{
            (|\Theta_0|-1)!
            (|\Theta_1|-1)!
        }
        \notag\\
        &\times
        \left(
            1-|\Theta_0|x_{\min}
        \right)^{|\Theta_0|-1}
        \left(
            1-|\Theta_1|x_{\min}
        \right)^{|\Theta_1|-1}.
    \end{align}

    \textbf{Case 2.}
    $H_r$ and $H_{r'}$ are parallel within
    $\operatorname{affine}(\Delta(\Theta))$.

    In this case, the distance from $l\cap H_r$ to $H_{r'}$ is
    deterministic:
    \begin{equation}
        \operatorname{dist}(l\cap H_r,H_{r'})
        \geq
        d_{r,r'}.
    \end{equation}

    It follows that to keep the overall probability below $\delta$, it
    is enough to choose
    \begin{equation}
        \eta
        =
        \max
        \left\{
            \delta
            \sin(\alpha_{r,r'})
            C_{|\Theta_0|,|\Theta_1|}
            \left(
                \frac{2}{t_{\min}^2}
            \right)^{1-\frac{|\Theta|}{2}},
            \,
            d_{r,r'}
        \right\}.
    \end{equation}
\end{proof}

The following lemma shows that $d_{r,r'}$ and $\alpha_{r,r'}$ cannot
be simultaneously too small, under Assumption \ref{as:gamma}.

\begin{lemma}
    Let $r,r'\in S^R$, $r\neq r'$ and let $H_r$ intersect the relative
    interior of $\Delta(\Theta)$. Under Assumptions \ref{as:gamma} and
    \ref{as:distinct}, we have
    \begin{equation}
        d_{r,r'}
        \geq
        \left[
            \gamma_{\min}
            \bigl(
                1+\cos(\alpha_{r,r'})
            \bigr)
            -
            \sqrt{2}
            \sin(\alpha_{r,r'})
        \right]_+.
    \end{equation}
\end{lemma}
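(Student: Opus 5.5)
The plan is to reduce the statement to planar geometry inside $\operatorname{aff}(\Delta(\Theta))$, using the deep points of Assumption~\ref{as:gamma}.

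\textbf{Step 1: setup and the key inequality.} Work with the projected normals $P_Tn_r$ and $P_Tn_{r'}$. Since $\|P_Tn_s\|_2\le 1$, we have $\operatorname{dist}_{\operatorname{aff}}(\mu,H_s)=|n_s^\top\mu|/\|P_Tn_s\|_2\ge|n_s^\top\mu|$ for every $\mu\in\Delta(\Theta)$, exactly as in the $|\Theta|=2$ argument. Let $x\in H_r\cap\Delta(\Theta)$ attain $d_{r,r'}=\operatorname{dist}(x,H_{r'})$; it exists by compactness. If $d_{r,r'}=0$, take $x$ on $H_r\cap H_{r'}$ in the relative interior direction. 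The bound then follows once I exhibit a single point $\mu_C\in\Delta(\Theta)$ with three properties:
\begin{itemize}
\item $\operatorname{dist}(\mu_C,H_r)\ge\gamma_{\min}$;
\item $\operatorname{dist}(\mu_C,H_{r'})\ge\gamma_{\min}$;
\item $\mu_C$ lies in the ``thin'' wedge bounded by $H_r$ and $H_{r'}$ on the side where the dihedral angle is $\alpha_{r,r'}\le\pi/2$.
\end{itemize}

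\textbf{Step 2: locate the deep point.} Among the sign patterns $(\operatorname{sign} n_r^\top\mu,\operatorname{sign} n_{r'}^\top\mu)$, pick the one whose region $W$ is the acute wedge adjacent to $x$. Concretely, $W$ is the side of $H_r$ into which $H_{r'}$ leans, intersected with the side of $H_{r'}$ containing $x$. Because $H_r$ meets $\operatorname{relint}\Delta(\Theta)$, small perturbations of $x$ into this side stay in $\Delta(\Theta)$, so $W\cap\Delta(\Theta)$ has nonempty interior. It therefore contains some cell $C$ of the arrangement. By Assumption~\ref{as:gamma}, $C$ has a point $\mu_C$ with $|n_r^\top\mu_C|,|n_{r'}^\top\mu_C|\ge\gamma_{\min}$, and by Step 1 these give the two distance bounds. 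Finally, $\|\mu_C-x\|_2\le\sqrt2$, since this is the diameter of the simplex.

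\textbf{Step 3: 2D computation.} Project onto the two-dimensional plane through $x$ spanned by $P_Tn_r$ and $P_Tn_{r'}$. Distances to both hyperplanes depend only on this component, and the projection does not increase the length $\sqrt2$. Place $x$ at the origin with $H_r$ as the horizontal axis, and write $\mu_C=(a,b)$ with $b\ge\gamma_{\min}$ and $|a|\le\sqrt2$. Then $H_{r'}$ is a line at angle $\alpha:=\alpha_{r,r'}$ at distance $d:=d_{r,r'}$ from the origin, oriented so that $W$ is the wedge between the two lines. The signed distance from $\mu_C$ to $H_{r'}$ is $d+a\sin\alpha-b\cos\alpha$, up to the orientation choice. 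Requiring it to be at least $\gamma_{\min}$ gives
\[
d\;\ge\;\gamma_{\min}+b\cos\alpha-a\sin\alpha\;\ge\;\gamma_{\min}(1+\cos\alpha)-\sqrt2\sin\alpha .
\]
Together with $d\ge0$, this yields the claimed positive-part bound. The parallel case $\alpha=0$ gives $d\ge2\gamma_{\min}$, consistent with the $|\Theta|=2$ argument.

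\textbf{Main obstacle.} The delicate step is Step 2. I must fix the orientation of $H_{r'}$ relative to $x$ so that the chosen wedge really is the acute one. I must also check that the signed-distance formula has the sign used in Step 3, namely that $\mu_C$ lies strictly between the two lines rather than beyond $H_{r'}$. I would handle this by choosing the sign of $P_Tn_{r'}$ so that $x$ lies on its nonnegative side and $\cos\alpha\ge0$, then verifying that $W$ is the intersection of the corresponding half-planes. Assumption~\ref{as:distinct} is needed to ensure $H_r\neq H_{r'}$, so that $W$ is a genuine region with nonempty interior.
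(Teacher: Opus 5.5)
Your proposal is correct and is essentially the paper's argument. It uses a $\gamma_{\min}$-deep point from Assumption~\ref{as:gamma} in a cell of an acute wedge between $H_r$ and $H_{r'}$, splits the projected normal of $H_{r'}$ into a $\cos\alpha_{r,r'}$ part along that of $H_r$ and a $\sin\alpha_{r,r'}$ orthogonal part (exactly your planar computation), and uses the diameter bound $\sqrt{2}$.

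One small repair in Step~2: the minimizer $x$ can (and typically does) lie on the boundary of $\Delta(\Theta)$, so perturbing $x$ does not by itself show that $W\cap\Delta(\Theta)$ has nonempty interior. Perturb a point of $H_r\cap\operatorname{relint}\Delta(\Theta)$ instead; by convexity it lies on the same side of $H_{r'}$ as $x$ when $d_{r,r'}>0$. Alternatively, as the paper does, bound the signed distance to $H_{r'}$ uniformly over all $\mu\in H_r\cap\Delta(\Theta)$ from a deep point in any acute-wedge cell, so that the wedge never has to be matched to $x$.
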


\begin{proof}
    Let
    \[
        T
        =
        \left\{
            z
            :
            \boldsymbol{1}^\top z=0
        \right\}
    \]
    be the tangent space of $\Delta(\Theta)$, and $P_T$ denote the
    orthogonal projector onto $T$. Define
    \[
        \hat n_r
        :=
        \frac{
            P_Tn_r
        }{
            \norm{P_Tn_r}_2
        }
    \]
    be the unit orthogonal projection of $n_r$.

    We define the signed distance function (within the simplex affine
    hull ) from $H_r$ as
    \[
        d_r(\mu)
        =
        \hat n_r^\top(\mu-v_r)
    \]
    for some $v_r\in H_r\cap\Delta(\Theta)$. We define
    $d_{r'}(\mu)$ the same way for $H_{r'}$. Let
    $C\subset\Delta(\Theta)$ be a cell such that $\mathrm{sign}(d_r(\mu_C) ) =-\mathrm{sign}(d_{r'}(\mu_C) ) $, and without loss of generality, assume
    \[
        d_r(\mu)
        \geq
        0
        \qquad\text{and}\qquad
        d_{r'}(\mu)
        \leq
        0
        \qquad
        \text{for }\mu\in C,
    \]
    i.e., $C$ is a cell \emph{between} $H_r$ and $H_{r'}$. Such a cell
    exists since $H_r$ intersects
    $\operatorname{relint}(\Delta(\Theta))$. By Assumption
    \ref{as:gamma}, there exists $\mu_C\in C$ such that
    \[
        d_r(\mu_C)
        \geq
        \gamma_{\min}
        \qquad\text{and}\qquad
        d_{r'}(\mu_C)
        \leq
        -\gamma_{\min}.
    \]
    Let $\mu\in H_r\cap\Delta(\Theta)$. We have
    \begin{equation}
        d_{r'}(\mu)
        =
        d_{r'}(\mu_C)
        +
        \hat n_{r'}^\top
        (\mu-\mu_C).
    \end{equation}
    Let $v$ be a unit vector such that
    \[
        \hat n_{r'}
        =
        \cos(\alpha_{r,r'})
        \hat n_r
        +
        \sin(\alpha_{r,r'})
        v.
    \]
    Then we have
    \begin{align}
        \hat n_{r'}^\top(\mu-\mu_C)
        &=
        \cos(\alpha_{r,r'})
        \hat n_r^\top(\mu-\mu_C)
        +
        \sin(\alpha_{r,r'})
        v^\top(\mu-\mu_C)
        \notag\\
        &=
        -\cos(\alpha_{r,r'})
        d_r(\mu_C)
        +
        \sin(\alpha_{r,r'})
        v^\top(\mu-\mu_C).
    \end{align}
    Hence
    \begin{align}
        \left|
            d_{r'}(\mu)
        \right|
        &=
        \left|
            d_{r'}(\mu_C)
            -
            \cos(\alpha_{r,r'})
            d_r(\mu_C)
            +
            \sin(\alpha_{r,r'})
            v^\top(\mu-\mu_C)
        \right|
        \notag\\
        &\geq
        \bigl(
            1+\cos(\alpha_{r,r'})
        \bigr)
        \gamma_{\min}
        -
        \sqrt{2}
        \sin(\alpha_{r,r'}),
    \end{align}
    where we used that for $x,y\in\Delta(\Theta)$,
    $\norm{x-y}\leq\sqrt{2}$. By definition,
    \[
        d_{r,r'}
        =
        \min_{\mu\in H_r\cap\Delta(\Theta)}
        \left|
            d_{r'}(\mu)
        \right|.
    \]
    Since the bound is only meaningful when the right hand side is
    nonnegative, the result follows.
\end{proof}

\paragraph{Proof of Lemma~\ref{lem:separation}.}

\begin{proof}
    The case \( |\Theta|=2 \) was handled above, so suppose that
    \(n:=|\Theta|\geq3\). Let
    \[
        m
        :=
        |S^R|,
        \qquad
        M
        :=
        m(m-1),
    \]
    and define
    \[
        \kappa
        :=
        \frac{\delta}{M}
        C_{|\Theta_0|,|\Theta_1|}
        \left(
            \frac{2}{t_{\min}^2}
        \right)^{1-\frac{n}{2}}.
    \]
    Applying Lemma~\ref{lem:pairwise} with failure probability
    \(\delta/M\), and taking a union bound over the \(M\) ordered pairs
    \(r\neq r'\), shows that, with probability at least \(1-\delta\),
    simultaneously for every \(r\neq r'\) such that
    \(H_r\cap l\neq\varnothing\),
    \[
        \operatorname{dist}(l\cap H_r,H_{r'})
        \geq
        \max
        \left\{
            \kappa\sin(\alpha_{r,r'}),
            d_{r,r'}
        \right\}.
    \]
    By the angle--distance lemma,
    \[
        d_{r,r'}
        \geq
        \left[
            \gamma_{\min}
            \bigl(
                1+\cos(\alpha_{r,r'})
            \bigr)
            -
            \sqrt{2}
            \sin(\alpha_{r,r'})
        \right]_+.
    \]
    The function
    \[
        \alpha
        \longmapsto
        \kappa\sin(\alpha)
    \]
    is increasing on \([0,\pi/2]\), whereas
    \[
        \alpha
        \longmapsto
        \gamma_{\min}
        \bigl(
            1+\cos(\alpha)
        \bigr)
        -
        \sqrt{2}
        \sin(\alpha)
    \]
    is decreasing. The two expressions are equal at
    \[
        \beta
        :=
        2\arctan
        \left(
            \frac{
                \gamma_{\min}
            }{
                \sqrt{2}+\kappa
            }
        \right).
    \]
    Consequently, we may take
    \[
        \eta_{\mathrm{sep}}
        :=
        \kappa\sin(\beta).
    \]

    It remains to verify the claimed dependence of
    \(\eta_{\mathrm{sep}}\) on the problem parameters. Write
    \[
        a
        :=
        |\Theta_0|,
        \qquad
        b
        :=
        |\Theta_1|,
    \]
    so that \(a+b=n\). Since \(x_{\min}=1/(2n)\), the constant
    appearing in Lemma~\ref{lem:pairwise} is
    \begin{align}
        C_{a,b}
        ={}&
        \frac{1}{2}
        \left(
            1+\frac{ab}{n}
        \right)^{1-\frac n2}
        \frac{
            (n-3)!
        }{
            \sqrt{n-2}
        }
        \frac{
            \sqrt{ab}
        }{
            (a-1)!(b-1)!
        }
        \notag\\
        &\qquad\times
        \left(
            1-\frac{a}{2n}
        \right)^{a-1}
        \left(
            1-\frac{b}{2n}
        \right)^{b-1}.
    \end{align}
    We have
    \[
        1+\frac{ab}{n}
        \leq
        n
    \]
    and, because the exponent \(1-n/2\) is nonpositive,
    \[
        \left(
            1+\frac{ab}{n}
        \right)^{1-\frac n2}
        \geq
        n^{1-\frac n2}.
    \]
    Moreover,
    \begin{align}
        \frac{
            (n-3)!
        }{
            (a-1)!(b-1)!
        }
        \frac{
            \sqrt{ab}
        }{
            \sqrt{n-2}
        }
        &=
        \frac{1}{n-2}
        \binom{n-2}{a-1}
        \sqrt{
            \frac{ab}{n-2}
        }
        \notag\\
        &\geq
        \frac{1}{n-2},
    \end{align}
    where we used \(ab\geq n-1\). Finally,
    \[
        \left(
            1-\frac{a}{2n}
        \right)^{a-1}
        \left(
            1-\frac{b}{2n}
        \right)^{b-1}
        \geq
        2^{-(n-2)}.
    \]
    It follows that
    \[
        C_{a,b}
        \geq
        \frac{
            2^{-(n-1)}
        }{
            n-2
        }
        n^{1-\frac n2},
    \]
    and hence
    \[
        \log\frac{1}{C_{a,b}}
        =
        O(n\log n).
    \]

    Next, using
    \[
        \sin
        \bigl(
            2\arctan x
        \bigr)
        =
        \frac{
            2x
        }{
            1+x^2
        },
    \]
    we obtain
    \[
        \sin(\beta)
        =
        \frac{
            2\gamma_{\min}
            (\sqrt{2}+\kappa)
        }{
            (\sqrt{2}+\kappa)^2
            +
            \gamma_{\min}^2
        }.
    \]
    Since \(\gamma_{\min}\leq1\), there exists a universal constant
    \(c>0\) such that
    \[
        \eta_{\mathrm{sep}}
        =
        \kappa\sin(\beta)
        \geq
        c\gamma_{\min}
        \min\{\kappa,1\}.
    \]
    Therefore,
    \[
        \log\frac{1}{\eta_{\mathrm{sep}}}
        =
        O
        \left(
            \log\frac{1}{\gamma_{\min}}
            +
            \left[
                \log\frac{1}{\kappa}
            \right]_+
        \right).
    \]
    By the definition of \(\kappa\),
    \begin{align}
        \log\frac{1}{\kappa}
        &=
        \log\frac{M}{\delta}
        +
        \log\frac{1}{C_{a,b}}
        +
        \frac{n-2}{2}
        \log
        \left(
            \frac{2}{t_{\min}^2}
        \right).
    \end{align}
    Finally, since
    \[
        t_{\min}
        =
        \frac{
            d_{\min}x_{\min}z_{\min}
        }{
            2d_{\max}
        }
        =
        \frac{
            d_{\min}z_{\min}
        }{
            4nd_{\max}
        },
    \]
    we have
    \[
        \frac{n-2}{2}
        \log
        \left(
            \frac{2}{t_{\min}^2}
        \right)
        =
        O
        \left(
            n
            \log
            \frac{
                nd_{\max}
            }{
                d_{\min}z_{\min}
            }
        \right).
    \]
    Using \(M\leq m^2\), we conclude that
    \[
        \log\frac{1}{\eta_{\mathrm{sep}}}
        =
        O
        \left(
            n
            \log
            \frac{
                nd_{\max}
            }{
                d_{\min}z_{\min}
            }
            +
            \log\frac{m}{\delta}
            +
            \log\frac{1}{\gamma_{\min}}
        \right).
    \]
\end{proof}

\newpage
\section{Proof of Lemma \ref{lem:estimate}}

To prove Lemma 3, we first prove the following result, characterizing the number of trials needed so that a certain number of successes is observed with high probability, in a sequence of i.i.d. Bernoulli trials: 

\begin{lemma}{\label{lem:mult-chernoff}}
    Let $X_1, X_2,... $ be iid Bernoulli random variables with $\mathbb{E}[X_1] = p $. 
    Let $S_T = \sum_{i=1}^T X_i $.  If 
    \begin{equation}
        T \geq  \frac{1}{p}\left (N + \log \frac{1}{\delta} + \sqrt{\log^2 \frac{1}{\delta} + 2N\log \frac{1}{\delta}} \right ),
    \end{equation}
    then with probability at least $1- \delta $, $S_T \geq N $. 
\end{lemma}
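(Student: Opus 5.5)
The plan is to use a multiplicative Chernoff lower-tail bound and then solve the resulting quadratic inequality in $T$. Write $\mu := pT = \mathbb{E}[S_T]$. The event $S_T < N$ can be written as $S_T < (1-\epsilon)\mu$ with $\epsilon := 1 - N/\mu$. First we check that the hypothesis forces $\mu > N$, so that $\epsilon \in (0,1]$. Indeed, it gives $\mu \geq N + \log(1/\delta) + \sqrt{\cdot} \geq N$, with strict inequality whenever $\delta<1$; the case $\delta \ge 1$ is trivial.

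Next we apply the standard lower-tail bound $\mathbb{P}(S_T \le (1-\epsilon)\mu) \le \exp(-\epsilon^2\mu/2)$. This reduces the claim to showing $\epsilon^2 \mu/2 \geq \log(1/\delta)$, that is,
\begin{equation*}
    \frac{(\mu - N)^2}{2\mu} \geq \log\frac{1}{\delta}.
\end{equation*}
Set $L := \log(1/\delta)$. The inequality is equivalent to $\mu^2 - 2(N+L)\mu + N^2 \geq 0$. The larger root of this quadratic in $\mu$ is
\begin{equation*}
    \mu_+ = N + L + \sqrt{(N+L)^2 - N^2} = N + L + \sqrt{L^2 + 2NL}.
\end{equation*}
The quadratic is nonnegative for every $\mu \geq \mu_+$. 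The hypothesis on $T$ says exactly that $pT \geq \mu_+$, so the tail probability is at most $\delta$. Hence $S_T \geq N$ with probability at least $1-\delta$.

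The only delicate point is that the Chernoff form must be valid for all $\epsilon\in(0,1]$. The bound $e^{-\epsilon^2\mu/2}$ holds on that whole range, and the boundary case $N=0$ is trivial. So the step needing care is just verifying that $\mu \geq \mu_+$ places us on the correct side of the quadratic's larger root rather than between the roots. This follows because $\mu_+ \geq N$ and the smaller root is at most $N$.
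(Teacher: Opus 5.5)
Your proof is correct and follows the paper's argument essentially verbatim: the multiplicative Chernoff lower-tail bound with deviation parameter $1 - N/(Tp)$, followed by solving the quadratic $(Tp)^2 - 2(N+L)\,Tp + N^2 \ge 0$ in $Tp$ and taking its larger root $N + L + \sqrt{L^2 + 2NL}$. Your additional checks are details the paper leaves implicit: that $Tp > N$, so the deviation parameter lies in $(0,1]$, and that the edge cases $N=0$ and $\delta \ge 1$ are trivial.
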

\begin{proof}
    The proof simply is a consequence of the multiplicative Chernoff bound for the sum of iid Bernoulli random variables. For any $\gamma \in (0,1) $, we have
    \begin{equation}
        \mathbb{P}(S_T \leq (1- \gamma)Tp ) \leq e^{-\gamma^2Tp/2}.
    \end{equation}
    Setting $\gamma = 1 - \frac{N}{Tp} $ yields
    \begin{equation*}
        (1-\frac{N}{Tp})^2\frac{Tp}{2} \geq \log{1/\delta}.
    \end{equation*}
    Solving the resulting quadratic equation and taking the larger root results in the desired result. 
\end{proof}

\paragraph{Proof of Lemma \ref{lem:estimate}.} 

\begin{proof}
     We first provide a lower bound to the probability of jointly observing a state $ \theta $ and the signal $s_1 $.
    Let 
    \begin{equation}
        M:= \max_{\theta} \frac{\mu(\theta)}{\mu_0(\theta)} ,
    \end{equation}
     Observe that the signaling scheme used to query $\mu $, $\phi_t $, induces posterior $\mu $ with probability 
    \begin{align*}
        \mathbb{P}(s_1) = \sum_{\theta}\mu_0(\theta)\phi(s_1|\theta) &= \frac{1}{M}\sum_{\theta}\mu_0(\theta) \frac{\mu(\theta)}{\mu_0(\theta)} \\
        &= \frac{1}{M} \geq{\min_{\theta}\mu_0(\theta)}
    \end{align*}
    Hence we can lower bound the overall probability of jointly observing $s_1 $ and $\theta $ in a single round as 
    \begin{equation}        
        \mu(\theta) \min_{\theta}\mu_0(\theta) \geq \min_{\theta}\mu(\theta) \min_{\theta} \mu_0(\theta) = \alpha.
    \end{equation}
    Let $N_{\theta,T} $ denote the number of times we observe the pair $s_1, \theta $. To determine the number of rounds we need so that we observe $(s_1, \theta) $ jointly $N $ times for each $\theta $ with probability at least $1-\delta/2 $, we can choose, using Lemma \ref{lem:mult-chernoff} 
    \begin{equation}
        T \geq \frac{1}{\alpha} \left (N + \log \frac{2|\Theta|}{\delta} + \sqrt{\log^2\frac{2|\Theta|}{\delta} + 2N \log \frac{2|\Theta|}{\delta}} \right ).
    \end{equation}
    Let 
    \begin{equation}
        L = \log \frac{2|\Theta|}{\delta}, \qquad c = \frac{|\Theta|^2}{2\epsilon^2}.
    \end{equation}
    Assuming $|\Theta| \geq 2, \epsilon \leq 1 $,  we have $c \geq 2 $. Since $N \geq cL $, we have
    \begin{align}
        N + L + \sqrt{L^2 + 2NL} &\leq N \left (1 + \frac{L}{N} + \sqrt{\frac{L^2}{N^2} + \frac{2L}{N} } \right ) \\
        &\leq N \left(\frac{3}{2} + \frac{\sqrt{5}}{2} \right ) \\
        &< 2\sqrt{2}N,
    \end{align}
    and the round bound stated in Lemma \ref{lem:estimate} follows with the displayed choice of $N $ in Algorithm \ref{alg:query}.
    Finally, for this choice of $T $, using a union bound over states, we get
    \begin{align}
        \mathbb{P}\left (\exists \theta \in \Theta: N_{\theta, T} < N  \right ) &\leq \sum_{\theta \in \Theta} \mathbb{P}(N_{\theta, T} < N) \\
        &\leq \sum_{\theta} \frac{\delta}{2|\Theta|} = \frac{\delta}{2}. 
    \end{align}
    Given that we observe each state $\theta $ at least $N  $ times, using Hoeffding's inequality yields
    \begin{align}
        \mathbb{P}(|\hat p_{\mu}-p_{\mu}|_1 > \epsilon) &\leq \mathbb{P}(\exists \theta \in \Theta, |\hat p_{\mu}(a_0|\theta)-p_{\mu}(a|\theta) | > \epsilon / |\Theta|) \\
        &\leq \sum_{\theta} \frac{\delta}{2|\Theta|} = \frac{\delta}{2}.
    \end{align}
    Hence the failure probability is at most $\delta/2 + \delta/2 =\delta  $.
\end{proof}

\newpage

\section{Proof of Lemma \ref{lem:crossings}}

We first provide the pseudocode for the procedure $\textsc{Find-Crossings} $.

\begin{algorithm}
\caption{$\textsc{Find-Crossings}$} \label{alg:crossings}
\begin{algorithmic}[1]
\Require A pair of endpoints $\mu_{-} $ and $\mu_{+} $, binary search resolution $\eta $, minimum signal mass $p_{\min} $, maximum failure probability $\delta $ 
\State $\mathcal{C} \gets \emptyset $ \Comment{A cache of previously queried points}
\State $P \gets \emptyset $
\State $\mathcal{I} \gets \{[\mu_{-}, \mu_{+}] \} $
\State $L_{\eta} \gets  \left \lceil \log_2^{+} \left ( \frac{\norm{\mu_{+}-\mu_{-}}_2}{\eta} \right ) \right \rceil$
\State $Q_{\mathrm{cross}} \gets 2 + |S^R|L_{\eta} $ \Comment{Number of maximum calls to the function \textsc{Query}}

\Function{Query}{$\mu$}
    \If{$\mu \notin \mathcal{C}$}
        \State $C[\mu] \gets \textsc{Estimate-Profile}(\mu, p_{\min}/6, \delta/Q_{\mathrm{cross}}) $
    \EndIf
    \State \Return $C[\mu]$
\EndFunction

\While{$\mathcal{I} \neq \emptyset \text{ and } |C| \leq Q_{\mathrm{cross}} $}
    \State Remove an interval $[u,v] $ from $\mathcal{I} $
    \State $d(u,v) \gets \norm{\textsc{Query}(u) -\textsc{Query}(v)}_1 $
    \If{$d(u,v) < p_{\min}/2 $}
        \State \textbf{continue}
    \EndIf
    \If{$\norm{u-v} \leq \eta $}
        \State $P \gets  P \cup \{(u,v)\}$
        \State \textbf{continue}
    \EndIf
    \State $m \gets (u +v)/2 $
    \If{$\norm{\textsc{Query}(u)-\textsc{Query}(m)}_1  \geq p_{\min}/2 $}
        \State $\mathcal{I} \gets \mathcal{I} \cup \{(u,m)\} $
    \EndIf
    \If{$\norm{\textsc{Query}(m)-\textsc{Query}(v)}_1  \geq p_{\min}/2 $}
        \State $\mathcal{I} \gets \mathcal{I} \cup \{(m,v)\} $
    \EndIf
    
\EndWhile
\State \Return $P$
        
\end{algorithmic}
\end{algorithm}

\paragraph{Proof of Lemma 4.}

\begin{proof}
    Let $\mathcal{G} $ be the event that every posterior query made by the algorithm returns an estimate satisfying
    \begin{equation}
        \norm{p_{\mu} - \hat p_{\mu}}_1 \leq \epsilon = \frac{p_{\min}}{6}
    \end{equation}
    and ends within 
    \begin{equation}
        N(z_{\min}, \epsilon, \frac{\delta}{Q_{\mathrm{cross}}})
    \end{equation}
    rounds. Since the algorithm makes at most $Q_{\mathrm{cross}} $ calls, by a union bound, we have
    \begin{equation}
        \mathbb{P}(\mathcal{G}\mid \mathcal{E}_{\mathrm{sep}}) \geq 1- \delta.
    \end{equation}
    
    On $\mathcal{G} $, for any two queried points $u, v$,
    \begin{equation}
        | \norm{p_{u}-p_v}_1 - \norm{\hat p_u-\hat p_v}_1 | \leq \norm{p_u - \hat p_u}_1 + \norm{p_v - \hat p_v}_1  \leq \frac{p_{\min}}{3}.
    \end{equation}

    Let 
    \begin{equation}
        J(u, v) = \{r \in S^R, a^*(u, r) \neq a^*(v,r)\}.
    \end{equation}
    Consider an $r \in D $. Initially, we have
    \begin{equation}
        r \in J(\mu_-, \mu_+)
    \end{equation}
    and $m(r) \geq p_{\min} $. For any pairs $(u, v) $ returned by the binary search procedure, we have
    \begin{equation}
                \norm{\hat p_{u} - \hat p_v}_1 \geq \norm{p_u -p_v}_1 - \frac{p_{\min}}{3} \geq \frac{2p_{\min}}{3} > \frac{p_{\min}}{2}. 
    \end{equation}
    Hence an interval containing a jump corresponding to $r $ is never pruned by the algorithm. If $\norm{u -v}_2 > \eta $, the algorithm bisects the line segment at $m = (u + v)/2 $. If $\mu_r \neq m $, then exactly one child of the initial interval contains $\mu_r $. On the other hand, if $\mu_r = m $, then $\mu_r $ lies at the common endpoint of both children, and because of how the receiver breaks ties at $\mu_r = m $, the jump at $\mu_r $ is counted at exactly one of the children, hence a child containing $\mu_r $ is inserted into the queue. Finally, after at most $L_{\eta}  $ splits, the algorithm outputs an interval containing $r $, proving property $(i) $. 

    To prove property $(ii) $, note that every $(u,v) $ inserted into the queue satisfies
    \begin{equation}
        \norm{\hat p_{u} -\hat p_v}_1 \geq \frac{p_{\min}}{2},
    \end{equation}
    so we have
    \begin{equation} \label{eq:minmass}
        \norm{ p_u- p_v}_1 \geq \frac{p_{\min}}{2} - \frac{p_{\min}}{3} = \frac{p_{\min}}{6} > 0
    \end{equation}
    Hence $J(u,v) \neq \emptyset $. Observe that at any recursion depth, the sets $J(u,v) $ for the intervals are disjoint. This is because each signal changes the preferred action exactly once, hence a crossing at a shared endpoint is assigned to exactly one adjacent interval. It follows that there are at most $|S^R| $ active intervals at each recursion depth, giving at most  
    \begin{equation}
        2 + |S^R|L_{\eta}= Q_{\mathrm{cross}}
    \end{equation}
    queries. 

    Finally, let $(u_i, v_i) $ be an output pair. By \eqref{eq:minmass}, 
    \begin{equation}
        \norm{p_{u_i}-p_{v_i}}_1 \geq p_{\min}/6.
    \end{equation}
    Since we have $ \norm{u_i-v_i}_2 \leq \eta =  \eta_{\mathrm{sep}}/4 $, $\mathcal{E}_{\mathrm{sep}} $ implies that the interval contains at most one hyperplane.  Hence we have
    \begin{equation}
        J(u_i, v_i) = \{r(i)\}
    \end{equation}
    for a unique signal $r(i) $, with 
    \begin{equation}
        m(r(i)) \geq p_{\min}/6.
    \end{equation}
    Moreover, $a^*(u_i,r(i))=a_0 $ and $a^*(v_i, r(i)) = a_1 $, proving property $(ii) $.

    Property $(iii) $ directly follows from the stopping rule of the algorithm. 

    Since on $\mathcal{G}$, every query takes at most 
    \begin{equation}
        N(z_{\min}, p_{\min}/6 , \delta/Q_{\mathrm{cross}})
    \end{equation}
    rounds, the total number of rounds is at most 
    \begin{equation}
        Q_{\mathrm{cross}}  N(z_{\min}, p_{\min}/6 , \delta/Q_{\mathrm{cross}}).
    \end{equation}

    Finally, using $\norm{\mu_- -\mu_+}_2 \leq \sqrt{2} $, we have
    \begin{equation}
        Q_{\mathrm{cross}} \leq 2 + |S^R| \left \lceil \log_2 \frac{4 \sqrt{2}}{\eta_{\mathrm{sep}}} \right \rceil = O\left (|S^R|\log_2 \frac{4\sqrt{2}}{\eta_{\mathrm{sep}}} \right).
    \end{equation}    
\end{proof}

\newpage

\section{Proof of Lemma \ref{lem:learn-signal}}

Let
\[
    n:=|\Theta|,
    \qquad
    d:=n-1,
    \qquad
    \mathcal T:=\{x\in\mathbb R^n:\mathbf 1^\top x=0\}.
    ,
\]
and let 

\[
    \rho
    :=
    \frac{x_{\min}^2 z_{\min}d_{\min}}{2d_{\max}}.
\]
Every posterior on the search line has every coordinate at least \(\rho\).
Since decreasing the separation parameter only strengthens the event
\(\mathcal E_{\mathrm{sep}}\), we henceforth replace
\(\eta_{\mathrm{sep}}\) by
\[
    \min\{\eta_{\mathrm{sep}},\rho/4\}.
\]
This modification does not change the order of
\(\log(1/\eta_{\mathrm{sep}})\) in Lemma~\ref{lem:separation}.

Condition on \(\mathcal E_{\mathrm{sep}}\cap\mathcal E_{\mathrm{cross}}\).
For an output pair \((u,v)\in P\), let \(r\) be its corresponding
receiver signal and let \(x_r=[u,v]\cap H_r\).  Lemma~\ref{lem:crossings}
and the tie-breaking convention imply
\begin{equation}\label{eq:learn-pair-properties}
    \|u-v\|_2\leq \eta_{\mathrm{sep}}/4,
    \qquad
    n_r^\top u\geq 0 \geq n_r^\top v \qquad, a^*(u,r) \neq a^*(v,r)
    \qquad
    m(r)=\|p_u-p_v\|_1\geq p_{\min}/6.
\end{equation}
Moreover, \([u,v]\) meets no hyperplane other than \(H_r\), and hence
\begin{equation}\label{eq:profile-jump-equals-likelihood}
    p_u(\theta)-p_v(\theta)=\phi^R(r\mid\theta)
    \qquad\text{for every }\theta\in\Theta.
\end{equation}

We first analyze the two subroutines separately.

\begin{algorithm}
\caption{\textsc{Estimate-Likelihood}}\label{alg:likelihood}
\begin{algorithmic}[1]
    \Require Posteriors \(u,v\), accuracy \(\epsilon_m\), failure probability \(\delta\)
    \State \(\widehat p_u\gets
        \Call{Estimate-Profile}{u,\epsilon_m/2,\delta/2}\)
    \State \(\widehat p_v\gets
        \Call{Estimate-Profile}{v,\epsilon_m/2,\delta/2}\)
    \State \Return
    \(\widehat p:=
      \bigl(|\widehat p_u(\theta)-\widehat p_v(\theta)|\bigr)_{\theta\in\Theta}\)
\end{algorithmic}
\end{algorithm}

\begin{lemma}\label{lem:estimate-likelihood}
Under \(\mathcal E_{\mathrm{sep}}\cap\mathcal E_{\mathrm{cross}}\), let
\((u,v)\) correspond to signal \(r\).  The subroutine
\(\textsc{Estimate-Likelihood}(u,v,\epsilon_m,\delta)\) makes two calls
to \(\textsc{Estimate-Profile}\).  With probability at least
\(1-\delta\), it returns \(\widehat p\) satisfying
\[
    \|\widehat p-\phi^R(r\mid\cdot)\|_1\leq\epsilon_m.
\]
On the same event, it uses at most
\[
    2N(z_{\min},\epsilon_m/2,\delta/2)
    \leq
    8N(z_{\min},\epsilon_m,\delta/2)
\]
rounds.
\end{lemma}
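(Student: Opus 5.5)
The plan has three parts: invoke Lemma~\ref{lem:estimate} for the two calls to \textsc{Estimate-Profile}, transfer their accuracy to the likelihood via the jump identity \eqref{eq:profile-jump-equals-likelihood}, and bound the rounds using how $N$ scales in $\epsilon$.

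\emph{Accuracy.} By Lemma~\ref{lem:estimate}, each call returns, with probability at least $1-\delta/2$, an estimate within $\epsilon_m/2$ in $\ell_1$:
\[
\|\widehat p_u-p_u\|_1\le\epsilon_m/2,
\qquad
\|\widehat p_v-p_v\|_1\le\epsilon_m/2 .
\]
A union bound makes both hold simultaneously with probability at least $1-\delta$. On this event, fix a state $\theta$. Identity \eqref{eq:profile-jump-equals-likelihood} gives $\phi^R(r\mid\theta)=p_u(\theta)-p_v(\theta)$, and this quantity is nonnegative. The map $x\mapsto|x|$ is $1$-Lipschitz, and $|y|=y$ for $y\ge 0$. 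Hence
\[
\bigl|\,|\widehat p_u(\theta)-\widehat p_v(\theta)|-\phi^R(r\mid\theta)\bigr|
\le |\widehat p_u(\theta)-p_u(\theta)|+|\widehat p_v(\theta)-p_v(\theta)| .
\]
Summing over $\theta$ gives $\|\widehat p-\phi^R(r\mid\cdot)\|_1\le\epsilon_m/2+\epsilon_m/2=\epsilon_m$.

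\emph{Round count.} Both $u$ and $v$ are within $\eta_{\mathrm{sep}}/4\le\rho/16$ of a point on the search line. By the replacement $\eta_{\mathrm{sep}}\leftarrow\min\{\eta_{\mathrm{sep}},\rho/4\}$ made above, every coordinate of $u$ and $v$ is therefore at least $\rho-\rho/16\ge\rho/2$. Here $\rho/2$ is exactly the minimum-coordinate threshold used in defining $N(z_{\min},\cdot,\cdot)$. So on the same event each call runs for at most $N(z_{\min},\epsilon_m/2,\delta/2)$ rounds, and the two calls together for at most $2N(z_{\min},\epsilon_m/2,\delta/2)$.

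\emph{Inequality $2N(z_{\min},\epsilon_m/2,\delta/2)\le 8N(z_{\min},\epsilon_m,\delta/2)$.} Write $N(z_{\min},\epsilon,\delta')=\lceil K/\epsilon^2\rceil$ with $K$ independent of $\epsilon$. Then
\[
\lceil 4K/\epsilon_m^2\rceil \le 4\lceil K/\epsilon_m^2\rceil ,
\]
because $4\lceil x\rceil$ is an integer that is at least $4x$. Multiplying by $2$ gives the claim.

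The main obstacle is the round count. We must verify that $u$ and $v$, which are bisection points lying on the segment between $\mu_-$ and $\mu_+$, actually meet the coordinate lower bound that $N(z_{\min},\cdot,\cdot)$ presumes. Because they lie on the line itself, this should follow directly from the trimming $t\in[t_{\min},1-t_{\min}]$: each coordinate is at least $t_{\min}x_{\min}$. The perturbation slack $\rho/4$ from the replacement above is then only a safety margin.
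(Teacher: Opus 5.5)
Your proof is correct and follows the paper's argument essentially verbatim. It uses a union bound over the two $\epsilon_m/2$-accurate calls from Lemma~\ref{lem:estimate}, then the reverse triangle inequality applied coordinatewise together with the jump identity \eqref{eq:profile-jump-equals-likelihood}. The round bound comes from $u,v$ lying on the trimmed search line, so every coordinate is at least $t_{\min}x_{\min}=\rho\geq\rho/2$, and $\lceil 4K/\epsilon_m^2\rceil\le 4\lceil K/\epsilon_m^2\rceil$ handles the ceilings. The detour through the $\eta_{\mathrm{sep}}/4\le\rho/16$ perturbation slack is harmless but unnecessary, as you note yourself, since $u$ and $v$ are points of the segment $[\mu_-,\mu_+]$.
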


\begin{proof}
By Lemma~\ref{lem:estimate} and a union bound, with probability at least
\(1-\delta\),
\[
    \|\widehat p_u-p_u\|_1\leq\epsilon_m/2
    \qquad\text{and}\qquad
    \|\widehat p_v-p_v\|_1\leq\epsilon_m/2.
\]
For vectors \(a,b\in\mathbb R^n\), the componentwise absolute-value map
is \(1\)-Lipschitz in \(\ell_1\), so, using
\eqref{eq:profile-jump-equals-likelihood},
\begin{align*}
    \|\widehat p-\phi^R(r\mid\cdot)\|_1
    &=
    \bigl\|
       |\widehat p_u-\widehat p_v|-|p_u-p_v|
     \bigr\|_1 \\
    &\leq
    \|(\widehat p_u-\widehat p_v)-(p_u-p_v)\|_1 \\
    &\leq
    \|\widehat p_u-p_u\|_1+
    \|\widehat p_v-p_v\|_1
    \leq\epsilon_m.
\end{align*}
Both \(u\) and \(v\) lie on the search line, so their coordinates are
at least \(\rho\).  The round bound follows from
Lemma~\ref{lem:estimate}.  Finally,
\(N(z_{\min},\epsilon_m/2,\delta/2)
 \leq 4N(z_{\min},\epsilon_m,\delta/2)\), including ceilings.
\end{proof}

We next give a complete version of the hyperplane-learning procedure in Algorithm \ref{alg:hyperplane}. 

We first construct the orthogonal basis for $\mathcal{T} $ as follows. First, we sample a unit vector $q_1 $ from $\mathcal{T} $. Then, after having sampled $q_1,...q_i $, we sample a unit vector $q_{i+1} $ uniformly randomly from the orthogonal complement of the preceding vectors. Randomized basis construction makes sure that, with probability 1, no search line later used by the algorithm is contained in $H_r $. This is necessary as $p_{\mu} $ can change on $H_r $ due to sender favorable tie breaking by the receiver. Let $Q = [q_1 \cdots\ q_d] $

For a queried posterior \(\mu\), let
\(\widehat p_\mu\) denote a cached call to
\(\textsc{Estimate-Profile}(\mu,p_{\min}/48,\delta/Q_h)\), and define
\begin{equation}\label{eq:side-oracle}
    \textsc{Side}(\mu)
    :=
    \begin{cases}
        +,&
        \|\widehat p_\mu-\widehat p_u\|_1
        \leq
        \|\widehat p_\mu-\widehat p_v\|_1,\\
        -,&\text{otherwise.}
    \end{cases}
\end{equation}
Here \(\widehat p_u\) and \(\widehat p_v\) are computed once at the
start.  The routine \(\textsc{Bisect}\) takes a parameterized segment
whose endpoints have opposite \(\textsc{Side}\) labels, repeatedly queries
its midpoint, and returns the midpoint parameter of a final bracket of
length at most twice its accuracy argument.  For the segment \([u,v]\),
we identify this output with the corresponding posterior.

For \(n=|\Theta|\), \(d=n-1\), and \(\epsilon_h\in(0,1]\), define
\begin{align}
    M_0&:=\left\lceil\log_2(8\sqrt d)\right\rceil,\\
    M&:=\left\lceil
       \log_2\!\left(64n\sqrt d/\epsilon_h\right)
       \right\rceil,\\
    Q_h&:=4+M_0+4d(d-1)+(2d-1)M.
    \label{eq:Qh}
\end{align}

\begin{algorithm}
\caption{\textsc{Learn-Hyperplane}}\label{alg:hyperplane}
\begin{algorithmic}[1]
    \Require Posteriors \(u,v\) satisfying \eqref{eq:learn-pair-properties}, desired normal accuracy \(\epsilon_h\), failure probability \(\delta\)
    \State \(R\gets\|u-v\|_2\); choose a random orthonormal basis \(q_1,\ldots,q_d\) of \(\mathcal T\), and set \(Q=[q_1\ \cdots\ q_d]\)
    \State \(s\gets R/(8\sqrt d)\), \(L\gets R/2\),
    \(\alpha\gets R/(16\sqrt d)\),
    \(\beta\gets s\epsilon_h/(16|\Theta|)\)
    \State Compute \(\widehat p_u,\widehat p_v\) and initialize the cache used by \(\textsc{Side}\)
    \State \(c\gets\textsc{Bisect}([u,v],\alpha)\)

    \For{\(j=1,\ldots,d\)}
        \State \(\mathrm{Good}\gets\mathrm{True}\)
        \For{\(i\in[d]\setminus\{j\}\), \(\sigma\in\{-1,+1\}\)}
            \State \(\mu_-\gets c+\sigma s q_i-Lq_j\),
            \(\mu_+\gets c+\sigma s q_i+Lq_j\)
            \If{\(\textsc{Side}(\mu_-)=\textsc{Side}(\mu_+)\)}
                \State \(\mathrm{Good}\gets\mathrm{False}\)
            \EndIf
        \EndFor
        \If{\(\mathrm{Good}\)}
            \State \(j^\star\gets j\); \textbf{break}
        \EndIf
    \EndFor

    \For{\(i\in[d]\setminus\{j^\star\}\), \(\sigma\in\{-1,+1\}\)}
        \State \(\widehat t_i^\sigma\gets
        \textsc{Bisect}(\{c+\sigma s q_i+tq_{j^\star}:t\in[-L,L]\},\beta)\)
    \EndFor

    \State \(\widehat g_{j^\star}\gets1\), and
    \(\widehat g_i\gets(\widehat t_i^- -\widehat t_i^+)/(2s)\) for \(i\neq j^\star\)
    \If{\(d=1\)}
        \State \(\widehat t_0\gets
        \textsc{Bisect}(\{c+tq_{j^\star}:t\in[-L,L]\},\beta)\),
        \(\widehat h\gets-\widehat t_0\), \(y_+\gets c+Lq_{j^\star}\)
    \Else
        \State Choose \(i_0\neq j^\star\), set
        \(\widehat h\gets-(\widehat t_{i_0}^++\widehat t_{i_0}^-)/2\),
        \(y_+\gets c+s q_{i_0}+Lq_{j^\star}\)
    \EndIf
    \State \(\widehat a\gets
       Q\widehat g+
       (\widehat h-\widehat g^\top Q^\top c)\mathbf 1\)
    \State \(\zeta\gets +1\) if \(\textsc{Side}(y_+)=+\), and \(\zeta\gets-1\) otherwise
    \State \Return \(\widehat n\gets\zeta\widehat a/\|\widehat a\|_2\)
\end{algorithmic}
\end{algorithm}

\begin{lemma}\label{lem:learn-hyperplane}
Under \(\mathcal E_{\mathrm{sep}}\cap\mathcal E_{\mathrm{cross}}\), let
\((u,v)\) correspond to signal \(r\).  Algorithm~\ref{alg:hyperplane}
makes at most
\[
    Q_h
    =O\!\left(
       |\Theta|^2+|\Theta|\log_2\!\frac{32|\Theta|}{\epsilon_h}
      \right)
\]
calls to \(\textsc{Estimate-Profile}\).  With probability at least
\(1-\delta\), it returns \(\widehat n\) such that
\[
    \|\widehat n-n_r\|_2\leq\epsilon_h.
\]
On the same event, it uses at most
\[
    2Q_hN\!\left(
      z_{\min},p_{\min}/48,\delta/Q_h
    \right)
\]
rounds.
\end{lemma}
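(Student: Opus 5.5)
The proof splits into four parts: counting queries, showing that \textsc{Side} is a correct membership oracle, analyzing the geometry of the local bisections, and propagating the errors to $\widehat n$.

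\textbf{Query count and good event.} The count follows from the pseudocode. Initialization costs $2$ calls, and \textsc{Bisect}$([u,v],\alpha)$ costs at most $M_0+1$ queries since $R/\alpha=16\sqrt d$. The loop over candidate directions $j$ costs at most $4d(d-1)$ queries: each $j$ tests $2(d-1)$ pairs of endpoints. The $2(d-1)$ final bisections (or one when $d=1$) each cost at most $M$ queries, because $2L/\beta = 16nR\sqrt d/(s\epsilon_h)\cdot(1/\!)$ is of order $64n\sqrt d/\epsilon_h$. The sum is at most $Q_h$, which gives $Q_h=O(n^2+n\log(32n/\epsilon_h))$.

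Let $\mathcal G_h$ be the event that all $Q_h$ cached estimates are $p_{\min}/48$-accurate in $\ell_1$. By Lemma~\ref{lem:estimate} and a union bound with failure $\delta/Q_h$ per call, $\mathcal G_h$ has probability at least $1-\delta$. The round bound needs every queried point to have all coordinates at least about $\rho/2$. This holds because every query lies within distance $\sqrt2(s+L)\le R\le\eta_{\mathrm{sep}}/4\le\rho/16$ of the search line. Doubling $N$ absorbs the halved coordinate floor, which gives the factor $2$.

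\textbf{Side oracle is exact in a neighborhood.} Every queried point $\mu$ lies within distance $R\le\eta_{\mathrm{sep}}/4$ of $x_r$. By $\mathcal E_{\mathrm{sep}}$, the ball of radius $\eta_{\mathrm{sep}}$ around $x_r$ meets no $H_{r'}$ with $r'\ne r$. Hence $p_\mu\in\{p_u,p_v\}$, according to the sign of $n_r^\top\mu$. The random basis guarantees, almost surely, that no query lands exactly on $H_r$ except on a measure-zero event.

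Now use $\|p_u-p_v\|_1=m(r)\ge p_{\min}/6$ and the estimation error $p_{\min}/48$. The correct side is at estimated distance at most $2p_{\min}/48$, while the wrong side is at estimated distance at least $p_{\min}/6-2p_{\min}/48>2p_{\min}/48$. So \textsc{Side} is exact on $\mathcal G_h$, and every bisection correctly brackets the true crossing point.

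\textbf{Geometry of the local fit.} Write $\hat n_r$ for the unit projection of $n_r$ onto $\mathcal T$, and let $g=Q^\top\hat n_r$. The first bisection gives a point $c$ with $\mathrm{dist}(c,H_r)\le\alpha$.

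\begin{itemize}
\item Some coordinate satisfies $|g_j|\ge1/\sqrt d$. For that $j$, every line $c+\sigma sq_i+tq_j$ with $t\in[-L,L]$ crosses $H_r$, because the crossing offset is bounded by $(\alpha+s)\sqrt d\le 3R/16<L$. Hence a \emph{Good} index exists.
\item Conversely, any $j^\star$ declared Good certifies crossings for all $2(d-1)$ test lines. This yields $|g_i|/|g_{j^\star}|\le (L\cdot\text{const})/s=O(\sqrt d)$. I expect this step needs care: if the bound is too loose, the error amplification below grows.
\end{itemize}

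The exact crossing parameters are $t_i^\sigma=-(h_0+\sigma s g_i)/g_{j^\star}$, where $h_0$ is the signed offset of $c$. Therefore $(t_i^- - t_i^+)/(2s)=g_i/g_{j^\star}$ exactly, and each bisection error of at most $\beta$ perturbs $\widehat g_i$ by at most $\beta/s=\epsilon_h/(16n)$. Similarly, $\widehat h$ estimates $-h_0/g_{j^\star}$ to accuracy $\beta$. Then $\widehat a$ lifts the affine function $\mu\mapsto \widehat g^\top Q^\top(\mu-c)+\widehat h$ to a vector in $\mathbb R^n$ that is constant along $\mathbf 1$ on the simplex. This vector is proportional to $n_r/g_{j^\star}$ up to the scale of $n_r$ restricted to the affine hull.

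\textbf{Main obstacle: converting to $\|\widehat n-n_r\|_2\le\epsilon_h$.} Normalization divides by $\|a\|_2$, which is at least $\|g/g_{j^\star}\|_2\ge1$. The normalization map is $2/\|a\|$-Lipschitz, so it suffices to show $\|\widehat a-a\|_2\le\epsilon_h/2$. The perturbation has three sources:
\begin{itemize}
\item the $d$ errors in $\widehat g$, contributing $\sqrt d\,\epsilon_h/(16n)$;
\item the error in $\widehat h$ times $\|\mathbf 1\|=\sqrt n$;
\item the cross term $\widehat g^\top Q^\top c$, where $\|c\|\le1$.
\end{itemize}
Together these are $O(n\beta/s)\le\epsilon_h/2$, and the constant $16|\Theta|$ in $\beta$ is chosen exactly for this. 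The subtle points are whether $n_r$ itself (not only its tangential part) is recovered, which requires the $\mathbf 1$-offset term to be estimated correctly, and the sign $\zeta$. For the sign, $y_+$ lies strictly on the $u$-side exactly when $\widehat a^\top y_+>0$, and \textsc{Side} is exact there, so $\zeta$ matches the orientation of $n_r$, for which $n_r^\top u\ge0$.
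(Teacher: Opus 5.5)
\textbf{Overall.} Your plan is essentially the paper's proof, step for step. You use the same good event with $p_{\min}/48$-accurate profiles, and the same $p_{\min}/24$ versus $p_{\min}/8$ separation that makes \textsc{Side} exact within distance $\eta_{\mathrm{sep}}$ of $x_r$. The pivot argument via $|g_j|\ge 1/\sqrt d$ and $3R/16<L$ is the same, as is the crossing formula giving $|\widehat g_i-g_i/g_{j^\star}|\le\beta/s$. The error analysis is also the same: an $O(n\beta/s)$ perturbation bound followed by the $2/\|a\|_2$-Lipschitz normalization. The one step that does not go through as written is the orientation step.

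\textbf{The orientation gap.} You claim that $y_+$ is on the $u$-side exactly when $\widehat a^\top y_+>0$. This fails in precisely the case $\zeta$ must detect: if $a$ is a negative multiple of $n_r$, then $a^\top y_+>0$ while $y_+$ is on the $v$-side. What you need is that the sign of $a^\top y_+$ is fixed by construction, independently of orientation.

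On the witness line through $y_+$ (for $d\ge2$, the line $t\mapsto c+sq_{i_0}+tq_{j^\star}$), $a^\top\mu$ is a constant plus $t$, because the $j^\star$-coefficient is normalized to $1$. So $a^\top\mu$ is strictly increasing along the line. Its two endpoints carry opposite labels, so $a^\top\mu$ cannot have the same strict sign at both. Hence $a^\top y_-\le 0\le a^\top y_+$, with at least one inequality strict. For $d=1$ this follows directly from $|h_0/g_{j^\star}|\le\alpha<L$.

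Since $n_r=\tau g_{j^\star}a$ with $\tau=\|P_{\mathcal T}n_r\|_2>0$, a point off $H_r$ has label $+$ exactly when $g_{j^\star}a^\top\mu>0$. Therefore $\textsc{Side}(y_+)=+$ if and only if $g_{j^\star}>0$, i.e., if and only if $n_r$ is a positive multiple of $a$. In the tie case $a^\top y_+=0$, read this off $y_-$ instead, whose label is strict and opposite. This is what gives $\zeta a/\|a\|_2=n_r$; the estimate $\widehat a$ enters only through $\|\widehat a-a\|_2$.

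\textbf{Points you flag as delicate are immediate.} Recovery of the full $n_r$, not just $P_{\mathcal T}n_r$, follows from your own lifting description. For $\mu=c+Qx$ one has $n_r^\top\mu=\tau(h_0+g^\top x)$, so the exact lift satisfies $a^\top\mu=n_r^\top\mu/(\tau g_{j^\star})$ on the affine hull. Since $\{\mu:\mathbf 1^\top\mu=1\}$ spans $\mathbb R^n$, this forces $a=n_r/(\tau g_{j^\star})$ exactly. The paper checks the same identity via $QQ^\top=P_{\mathcal T}$ and $(I-P_{\mathcal T})c=\mathbf 1/n$.

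The bound on $|g_i|/|g_{j^\star}|$ is never needed. Writing $g'=g/g_{j^\star}$ and $h'=h_0/g_{j^\star}$, the error $\widehat a-a=Q(\widehat g-g')+\bigl((\widehat h-h')-(\widehat g-g')^\top Q^\top c\bigr)\mathbf 1$ involves only estimation errors. A large ratio only enlarges $\|a\|_2$, which helps the normalization. (The bound does hold, with constant $L/s=4\sqrt d$, by subtracting the two crossing conditions.)

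\textbf{Smaller slips.} First, $\widehat h$ estimates $+h_0/g_{j^\star}$; it is the central crossing that equals $-h_0/g_{j^\star}$. Second, with final brackets of length $2\alpha$ and $2\beta$, the searches cost $\lceil\log_2(R/2\alpha)\rceil=M_0$ and $\lceil\log_2(L/\beta)\rceil=M$ queries, where $L/\beta=64n\sqrt d/\epsilon_h$. This keeps the count within $Q_h$ even after adding the two endpoint queries needed for $d=1$. Third, exactness of \textsc{Side} does not require queries to avoid $H_r$; the initial bisection on $[u,v]$ does not even depend on the random basis. A point of $H_r$ still has profile $p_u$ or $p_v$, and bisection along a transversal line brackets the threshold whichever label the tie point receives.
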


\begin{proof}
Let \(x_r=[u,v]\cap H_r\).  Since \(R=\|u-v\|_2\leq
\eta_{\mathrm{sep}}/4\), every point of \([u,v]\) is at distance at
most \(R\) from \(x_r\), so the initial bisection crosses no hyperplane
other than \(H_r\).  Its final midpoint satisfies
\begin{equation}\label{eq:c-close-to-H}
    \operatorname{dist}_{\mathcal T}(c,H_r)\leq\alpha.
\end{equation}
Every subsequent queried posterior has the form
\(c+\sigma s q_i+tq_j\), where \(|t|\leq L\) and the term
\(\sigma s q_i\) may be absent.  Therefore
\begin{align*}
    \|c+\sigma s q_i+tq_j-x_r\|_2
    &\leq \alpha+s+L \\
    &\leq \frac{11R}{16}
    \leq \frac{11\eta_{\mathrm{sep}}}{64}
    <\eta_{\mathrm{sep}}.
\end{align*}
Thus no queried posterior crosses a hyperplane other than \(H_r\).
Furthermore, because
\(\eta_{\mathrm{sep}}\leq\rho/4\), every such posterior has each
coordinate at least
\[
    \rho-\frac{11R}{16}
    \geq \rho-\frac{11\rho}{256}
    >\rho/2.
\]

Let \(\mathcal G\) be the event that every profile query made by the
algorithm is \(p_{\min}/48\)-accurate in \(\ell_1\) and terminates
within the bound supplied by Lemma~\ref{lem:estimate}.  The query count
verified below and a union bound give
\(\Pr(\mathcal G)\geq1-\delta\).  We prove correctness conditioned on
\(\mathcal G\).

Throughout the queried region, the true profile is either \(p_u\) or
\(p_v\).  If \(p_\mu=p_u\), then
\[
    \|\widehat p_\mu-\widehat p_u\|_1
    \leq p_{\min}/24,
\]
whereas, by \eqref{eq:learn-pair-properties},
\[
    \|\widehat p_\mu-\widehat p_v\|_1
    \geq \|p_u-p_v\|_1-p_{\min}/24
    \geq p_{\min}/8.
\]
The analogous inequalities hold when \(p_\mu=p_v\).  Hence
\(\textsc{Side}\) returns the correct side of \(H_r\) for every query.

Let \(P_{\mathcal T}=QQ^\top\) be the orthogonal projector onto
\(\mathcal T\), and define
\[
    \tau:=\|P_{\mathcal T}n_r\|_2,
    \qquad
    w:=\frac{Q^\top n_r}{\tau}\in\mathbb S^{d-1},
    \qquad
    b:=\frac{n_r^\top c}{\tau}.
\]
Here \(\tau>0\), because \(H_r\) intersects the relative interior of
the simplex.  In the local coordinates \(\mu=c+Qx\), the hyperplane is
\begin{equation}\label{eq:local-hyperplane}
    b+w^\top x=0.
\end{equation}
Because \(w\) is a unit vector and by \eqref{eq:c-close-to-H},
\(|b|\leq\alpha\).

There exists a coordinate \(j\) with \(|w_j|\geq1/\sqrt d\).  For this
coordinate, every \(i\neq j\) and \(\sigma\in\{-1,+1\}\) gives a
crossing at
\[
    t_i^\sigma=-\frac{b+\sigma s w_i}{w_j},
\]
and
\[
    |t_i^\sigma|
    \leq \sqrt d\,\alpha+\sqrt d\,s
    =\frac{3R}{16}<L.
\]
Thus this \(j\) passes the test, so the algorithm finds some
\(j^\star\).  For the accepted coordinate, every tested offset line has
a unique threshold in \([-L,L]\).  When \(d=1\), the central line has a
threshold in \((-L,L)\), because
\(|b/w_{j^\star}|\leq\alpha<L\).

Set
\[
    g:=\frac{w}{w_{j^\star}},
    \qquad
    h:=\frac{b}{w_{j^\star}}.
\]
Then \(g_{j^\star}=1\), the central threshold is \(t_0=-h\), and the
offset thresholds satisfy
\[
    t_i^\sigma=-h-\sigma s g_i.
\]
Every final bisection interval has length at most \(2\beta\); hence its
midpoint is within \(\beta\) of the true threshold.  If \(d\geq2\),
\(h=-(t_{i_0}^++t_{i_0}^-)/2\); if \(d=1\), \(h=-t_0\).  Consequently,
\begin{equation}\label{eq:g-h-errors}
    |\widehat h-h|\leq\beta,
    \qquad
    |\widehat g_i-g_i|\leq\frac{\beta}{s}
    \quad(i\neq j^\star),
\end{equation}
and therefore
\(\|\widehat g-g\|_2\leq\sqrt d\,\beta/s\).

Define
\[
    a:=Qg+(h-g^\top Q^\top c)\mathbf 1.
\]
For every \(\mu=c+Qx\) in the affine hull of the simplex,
\[
    a^\top\mu=h+g^\top x
    =\frac{n_r^\top\mu}{\tau w_{j^\star}}.
\]
Equivalently, using \(QQ^\top=P_{\mathcal T}\) and
\((I-P_{\mathcal T})c=\mathbf 1/n\),
\[
    Qg=\frac{P_{\mathcal T}n_r}{\tau w_{j^\star}},
    \qquad
    (h-g^\top Q^\top c)\mathbf 1
    =\frac{(I-P_{\mathcal T})n_r}{\tau w_{j^\star}}.
\]
Hence
\begin{equation}\label{eq:a-equals-normal}
    a=\frac{n_r}{\tau w_{j^\star}}.
\end{equation}
In particular, \(\|a\|_2\geq1\).  Since \(\|Q^\top c\|_2\leq1\),
\(\|\mathbf 1\|_2=\sqrt n\), and \(s\leq1\),
\begin{align*}
    \|\widehat a-a\|_2
    &\leq
      \|\widehat g-g\|_2
      +\sqrt n\bigl(
          |\widehat h-h|
          +\|\widehat g-g\|_2
       \bigr) \\
    &\leq
      \bigl((1+\sqrt n)\sqrt d+\sqrt n\bigr)\frac{\beta}{s} \\
    &\leq 4n\frac{\beta}{s}
     =\frac{\epsilon_h}{4}.
\end{align*}
Because \(\epsilon_h\leq1\), the preceding bound and
\(\|a\|_2\geq1\) imply \(\|\widehat a\|_2\geq3/4\).  Therefore
\[
    \left\|
       \frac{\widehat a}{\|\widehat a\|_2}
       -\frac{a}{\|a\|_2}
    \right\|_2
    \leq
    \frac{2\|\widehat a-a\|_2}{\|a\|_2}
    \leq\epsilon_h/2.
\]
The orientation witness \(y_+\) is the \(+L\) endpoint of a line whose
endpoints have opposite side labels, and the affine functional \(a\)
increases with its \(t\)-parameter.  Hence
\(\textsc{Side}(y_+)=+\) precisely when \(n_r\) is a positive multiple
of \(a\); otherwise it is a negative multiple of \(a\).  The factor
\(\zeta\) thus selects the orientation in
\eqref{eq:a-equals-normal} that agrees with $a^*(u, r) = a_0 $ and $a^*(v, r = a_1) $.  Hence the returned vector satisfies
\(\|\widehat n-n_r\|_2\leq\epsilon_h\).

It remains to count queries.  The initial bisection uses at most
\[
    \left\lceil
       \log_2\frac{R}{2\alpha}
    \right\rceil
    =M_0
\]
new midpoint queries.  The pivot test uses at most \(4d(d-1)\) profile
queries.  When \(d=1\), the two central endpoints require two more queries.
The algorithm bisects \(2(d-1)\) offset lines when \(d\geq2\), and one
central line when \(d=1\); in either case there are at most \(2d-1\)
threshold searches, each requiring at most
\[
    \left\lceil\log_2\frac{L}{\beta}\right\rceil
    =M
\]
new midpoint queries.  The offset endpoints are already cached from the
pivot test.  Together with the two reference profiles, the total is at
most \(Q_h\).  Since
\[
    \frac{L}{\beta}
    =\frac{64n\sqrt d}{\epsilon_h},
\]
we obtain
\[
    Q_h
    =O\!\left(
       n^2+n\log_2\frac{32n}{\epsilon_h}
      \right).
\]
Finally, all locally queried posteriors have minimum coordinate at least
\(\rho/2\).  Lemma~\ref{lem:estimate} therefore bounds each query by
\(N(z_{\min},p_{\min}/48,\delta/Q_h)\) rounds.  Multiplying by
\(Q_h\) proves the stated round bound.
\end{proof}

\paragraph{Proof of Lemma~\ref{lem:learn-signal}.}
\begin{proof}
For each \(i\in[k]\), apply Lemma~\ref{lem:estimate-likelihood} with
failure probability \(\delta/(2k)\), and apply
Lemma~\ref{lem:learn-hyperplane} with failure probability
\(\delta/(2k)\).  A union bound over the \(2k\) subroutine calls shows
that all likelihood and normal estimates are simultaneously correct
with probability at least \(1-\delta\).  On this event, for every
\(i\in[k]\),
\[
    \|\widehat p_i-\phi^R(r(i)\mid\cdot)\|_1\leq\epsilon_m,
    \qquad
    \|\widehat n_i-n_{r(i)}\|_2\leq\epsilon_h.
\]

The intervals output by \(\textsc{Find-Crossings}\) correspond to
distinct crossings: two different terminal intervals cannot contain
the same crossing, and the tie-breaking convention assigns a crossing
at a shared endpoint to exactly one child.  Hence \(k\leq|S^R|\).
Let \(Q_h\) be the bound in Lemma~\ref{lem:learn-hyperplane} and define
the padded query budget
\begin{equation}\label{eq:Q-learn-explicit}
    Q_{\mathrm{learn}}:=8k+2kQ_h.
\end{equation}
The actual number of calls to \(\textsc{Estimate-Profile}\) is only
\(2k+kQ_h\), and therefore is at most \(Q_{\mathrm{learn}}\).  Moreover,
using \(k\leq|S^R|\),
\[
    Q_{\mathrm{learn}}
    =O\!\left(
       |\Theta|^2|S^R|
       +|\Theta||S^R|
        \log_2\frac{32|\Theta|}{\epsilon_h}
      \right).
\]

Set
\[
    \epsilon_*:=\min\{\epsilon_m,p_{\min}/48\}.
\]
The likelihood-estimation calls use at most
\begin{align*}
    2kN\!\left(
       z_{\min},\epsilon_m/2,\frac{\delta}{4k}
    \right)
    &\leq
    8kN\!\left(
       z_{\min},\epsilon_*,\frac{\delta}{Q_{\mathrm{learn}}}
    \right)
\end{align*}
rounds.  The hyperplane-estimation calls use at most
\begin{align*}
    2kQ_hN\!\left(
       z_{\min},p_{\min}/48,
       \frac{\delta}{2kQ_h}
    \right)
    &\leq
    2kQ_hN\!\left(
       z_{\min},\epsilon_*,
       \frac{\delta}{Q_{\mathrm{learn}}}
    \right)
\end{align*}
rounds.  Here we used the monotonicity of \(N\) in its accuracy and
failure-probability arguments, together with
\(Q_{\mathrm{learn}}\geq8k\) and
\(Q_{\mathrm{learn}}\geq2kQ_h\).  Summing the two displays and using
\eqref{eq:Q-learn-explicit} gives
\[
    Q_{\mathrm{learn}}
    N\!\left(
       z_{\min},
       \min\{\epsilon_m,p_{\min}/48\},
       \frac{\delta}{Q_{\mathrm{learn}}}
    \right),
\]
as claimed.
\end{proof}

\newpage

\section{Proof of Lemma \ref{lem:subopt}}

Before we prove Lemma \ref{lem:subopt}, we first provide the pseudocode for the subroutine $\textsc{Compute-Signaling} $.

\begin{algorithm}
\caption{\textsc{Compute-Signaling}}\label{alg:commit}
\begin{algorithmic}[1]
        \Require An estimate $\hat \phi^R $ of $\phi^R $ with alphabet $\hat S^R $, special signals $r^0 $ and $r^1 $
        \State Set the signaling alphabet as $S \cup \{s^b\} $ with $|S| = |\Theta| $
        \State For all $y \in \{0,1\}^{{(|\hat S^R|+2})(|S|+1)} $ such that $y_{s, r^0}=0 $ and $y_{s,r^1} =1 $, solve the following LP:
        \begin{subequations}
\begin{align}
\max_{\phi}\quad
& \sum_{\theta \in \Theta}\sum_{s \in S \cup \{s^b\} } \sum_{a \in A} \sum_{r \in\hat S^R \cup \{r^0, r^1\} } 
  \mu_0(\theta)\phi_\theta(s)
  \hat\phi^R_\theta(r)
  u_\theta(a_{y_{s,r}} )
\\
\text{s.t.}\quad
& \phi_\theta \in \Delta(S \cup \{s^b\}),
\quad  \theta\in\Theta,
\\
& (-1)^{y_{s,r }}\sum_{\theta\in\Theta}
  \mu_0(\theta)\phi_{\theta}(s) \hat n_r(\theta) \geq 2\epsilon_h\sum_{\theta}\mu_0(\theta)\phi_{\theta}(s)  ,\quad (s,r)\in S\times \hat  S^R ,
\\
& \sum_{\theta\in\Theta}
  \mu_0(\theta)\phi_\theta(s^b)
  \le
  \frac{2\sqrt{2}\epsilon_h}
       {\gamma_{\min}\min_{\theta\in\Theta}\mu_0(\theta)} .
\end{align}
\end{subequations}
\State Let $\hat \phi $ be the signaling scheme achieving the highest utility among the at most $2^{{(|\hat S^R|+2})(|S|+1)} $ computed schemes
\State \Return $\hat \phi $
        
\end{algorithmic}
\end{algorithm}

\subsection{Proof of Lemma \ref{lem:subopt}}

\begin{proof}
    Let $V^* $ denote the value of the original problem, achieved by a signaling scheme $\phi^* $ using $|\Theta| $ signals. We have
    \begin{equation}
        V^* = \sum_{s \in S} \pi_s V(\mu_s).
    \end{equation}
    
    \paragraph{Cost of Robustness. } To characterize the loss in utility due to the margin in \eqref{eq:compute-threshold}, we first construct a \emph{robust} signaling scheme $\phi^*_{\mathrm{rob}} $ that is feasible for \eqref{eq:compute}, as follows:
    \begin{enumerate}
        \item For all $\mu_s $ induced by $\phi $ where $\mu_s \in C_s $, find $\mu_{C_s} \in C_s $ that satisfy, for all $r \in S^R $ 
        \begin{equation}
            |n_r^T\mu_{C_s}| \geq \gamma_{\min}.
        \end{equation}
        Note that $\mu_{C_s} $ exists due to Assumption \ref{as:gamma}. 
        \item For all $s $, compute $\mu_s' := (1-\lambda)\mu_s + \lambda \mu_{C_s} $, where $\lambda := \frac{4 \epsilon_h}{\gamma_{\min}} $.
        \item Let $\mu_0' = \sum_s \pi_s \mu_s' $. To make the signaling scheme Bayes-plausible, let $t_s > 1 $ be such that  $\mu' := t_s \mu_0 + (1-t_s)\mu_0' \in \partial(\Delta(\Theta))$, where $\partial $ denotes the set boundary. Let
        \begin{equation}
            \delta := \frac{\norm{\mu_0 -\mu_0'}_2}{\norm{\mu_0'-\mu'}_2} \leq \frac{\sum_s \pi_s \norm{\mu_s -\mu_s'}}{\norm{\mu_0-\mu'}} \leq \frac{\lambda \sqrt{2}}{\min_{\theta} \mu_0(\theta)}.
        \end{equation}
        We define $\phi^*_{\mathrm{rob}} $ as the signaling scheme that induces posteriors $\{\mu_s'\}_{s \in S} $ with probability $(1-\delta)\pi_s $, and induces $\mu' $ with probability $\delta $.
    \end{enumerate}
    Next, we show that $\phi^*_{\mathrm{rob}} $ is feasible for \eqref{eq:compute}, on the event $\mathcal{E}_{\mathrm{sep}} \cap \mathcal{E}_{\mathrm{cross}} \cap\mathcal{E}_{\mathrm{learn}} $. First, remember that every $\hat r \in \hat S^R $ satisfies $\norm{\hat n_r - n_r}_2 \leq \epsilon_h $. It follows that
    \begin{align}
        |\hat n_r^T \mu_s'| &= \left|(\hat n_r -n_r)^T\mu_s' + n_r^T\mu_s' \right| \\
        &\geq -\epsilon_h + \lambda \gamma_{\min} \geq 2\epsilon_h,
    \end{align}
    hence $\phi^*_{\mathrm{rob}} $ satisfies the constraint \eqref{eq:compute-threshold} for all $s \in S $, given that $y $ is chosen appropriately.

    Finally $\phi^*_{\mathrm{rob}} $ induces an additional posterior $\mu' $ that need not be robust with probability at most  $\frac{\lambda \sqrt{2}}{\min_{\theta}\mu_0(\theta)} $, meaning that $\phi^*_{\mathrm{rob}} $ also satisfies the constraint \eqref{eq:compute-bound}.

    Next, we show that $\phi^*_{\mathrm{rob}} $ is near-optimal for \eqref{eq:persuasion}. We have
    \begin{align}
        V(\mu_s)-V(\mu_s') &= \sum_{\theta}\sum_a (\mu_s(\theta)- \mu_s'(\theta))p_{\mu_s}(a|\theta)u(a,\theta) \\
        &\leq \sum_{\theta,a} p_{\mu_s}(a|\theta)\left|\mu_s(\theta)-\mu_s'(\theta)\right|u(a,\theta) \\
        &\leq \norm{\mu_s-\mu_s'}_1 \\
        &\leq \lambda \norm{\mu_s-\mu_{C_s}}_1 \leq 2\lambda = \frac{8\epsilon_h}{\gamma_{\min}}.
    \end{align}
    It follows that in the original problem \eqref{eq:persuasion}, $\phi^*_{\mathrm{rob}} $ achieves a value of at least
    \begin{align}
        V^*_{\mathrm{rob}} &\geq \left (1 -\frac{4 \sqrt{2} \epsilon_h}{\gamma_{\min} \min_{\theta}\mu_0(\theta) } \right )\left (V^* - \frac{8 \epsilon_h}{\gamma_{\min}} \right ) \\
        &\geq V^* - \frac{4 \sqrt{2} \epsilon_h}{\gamma_{\min} \min_{\theta}\mu_0(\theta) }- \frac{8 \epsilon_h}{\gamma_{\min}}.
    \end{align}

    Since on the event $\mathcal{E}_{\mathrm{sep}} \cap \mathcal{E}_{\mathrm{learn}} \cap \mathcal{E}_{\mathrm{cross}} $, every $\hat r \in \hat S^R $ correspond to a unique $r \in S^R $, by an abuse of notation, we associate each $\hat r \in \hat S^R $ with its corresponding $r \in S^R $, and consider $\hat S^R $ to be a subset of $S^R $. 

    \paragraph{Estimation Errors.} To characterize the utility achieved by $\phi^*_{\mathrm{rob}} $ in the estimated problem \eqref{eq:compute}, we compare the value of a posterior $\mu $ in \eqref{eq:persuasion}, denoted by $V(\mu) $ and in \eqref{eq:compute}, denoted by $\widehat V(\mu) $. The difference $|V(\mu)-\widehat V(\mu)| $ can be controlled as
        \begin{align}
        \left |V(\mu) - \widehat V(\mu) \right | &= \left |\sum_{\theta}\sum_a \mu(\theta)\left ( p_{\mu}(a|\theta)-\hat p_{\mu}(a|\theta) \right ) u(a,\theta) \right | \\
        &\leq  \sum_{\theta} \mu(\theta) \sum_a \left |p_{\mu}(a|\theta)-\widehat p_{\mu}(a|\theta) \right  |,
    \end{align}
    where $\hat p_{\mu} $ denotes the sender's estimate of $p_{\mu} $, computed using $(\hat \phi^R, r^0, r^1) $ as follows:
    \begin{equation}
        \widehat p_{\mu}(\theta) = r^0(\theta) +  \sum_{\hat r \in \hat S^R: a^*(\mu, \hat r)=a_0} \hat \phi^R(\hat r| \theta).
    \end{equation}
    Since on the event $\mathcal{E}_{\mathrm{sep}} \cap \mathcal{E}_{\mathrm{learn}} \cap \mathcal{E}_{\mathrm{cross}} $, every $\hat r \in \hat S^R $ correspond to a unique $r \in S^R $ (Lemma \ref{lem:crossings}), if $\mu $ satisfies \eqref{eq:compute-threshold}, we have $a^*(\mu, r) = a^*(\mu, \hat r) $ for all $\hat r \in \hat S^R $. As each $\hat r $ satisfies $\norm{\phi_r -\hat \phi_{\hat r}}_1 \leq \epsilon_m $, the contribution of all $\hat r \in \hat S^R $ to the error in $\widehat p_{\mu} $ is at most $2\epsilon_m |S^R| $, where the factor $2 $ appears as there are 2 actions. 

    Let 
        \begin{align}
        &R_{\mathrm{small}} = \left \{r \in S^R \setminus \hat S^R: \sum_{\theta}\phi^R(r |\theta) < p_{\min} \right  \} \\
        &R_0 = \left \{r \in S^R \setminus \hat S^R: \frac{\sum_{\theta \in \Theta_1}\phi^R(r |\theta)}{\sum_{\theta \in \Theta}\phi^R(r|\theta)} < z_{\min} \right\}  \\
        &R_1 = \left \{r \in S^R \setminus \hat S^R: \frac{\sum_{\theta \in \Theta_0}\phi^R(r |\theta)}{\sum_{\theta \in \Theta}\phi^R(r|\theta)} < z_{\min} \right \}.  \\
    \end{align}
    Remember that the algorithm does not distinguish between the signals in $R^0 $ but aggregates them into a special signal $r^0 $ which causes the receiver to always take the action $a_0 $. For any $r \in R^0 $, this approximation causes a (classification) error in the region given by 
        \begin{equation}
        W_r =\left \{\mu \in \Delta(\Theta): \sum_{\theta \in \Theta_1}\mu(\theta)\phi^R(r|\theta)|D_{\theta}| \geq \sum_{\theta \in \Theta_0 }\mu(\theta) \phi^R(r|\theta) |D_{\theta}| \right \}.
    \end{equation}
        If $\mu \notin W^r $, then the error in $|V(\mu) - \widehat V(\mu)| $ is at most $\epsilon_m $. On the other hand, if $\mu \in W_r $, we have 
        \begin{equation}
             \sum_{\theta \in \Theta}\mu(\theta)\phi^R(r|\theta) \leq (1 + \frac{d_{\max}}{d_{\min}})\sum_{\theta \in \Theta_1} \mu(\theta)\phi^R(r|\theta)
        \end{equation}
        
        and the contribution of $r $ to the error $|V(\mu) - \widehat V(\mu)| $ can be bounded as
    \begin{align}
        \sum_{\theta \in \Theta} \mu(\theta) \sum_{a \in A} \phi^R(r|\theta)    &= \sum_{\theta} 2\mu(\theta) \phi^R(r|\theta) \\
        &\leq 2(1 + \frac{d_{\max}}{d_{\min}}) \sum_{\theta \in \Theta_1} \mu(\theta)\phi^R(r|\theta) \\
    \end{align}
    It follows that the aggregate error caused by all signals in $R_0$ can be controlled by:
    \begin{align}
        2(1 + \frac{d_{\max}}{d_{\min}})\sum_{r \in R_0}\sum_{\theta \in \Theta_1}\mu(\theta)\phi^R(r|\theta) &\leq 2(1 + \frac{d_{\max}}{d_{\min}})z_{\min} \sum_{r \in R_0} \sum_{\theta \in \Theta} \phi^R(r|\theta) \\
        &\leq 2 (1 + \frac{d_{\max}}{d_{\min}})|\Theta|z_{\min}.
    \end{align}
    The same arguments apply to the signals in $R_1 $. Hence the error in $V(\mu) $ caused by the signals in $R_0 \cup R_1 $ is at most 
    \begin{equation}
        4(1 + \frac{d_{\max}}{d_{\min}})|\Theta|z_{\min}.
    \end{equation}
    
    The error caused by signals in $R_{\mathrm{small}} $ can be controlled as
    \begin{align}
        \sum_{\theta }\mu(\theta) \sum_{a} \left | p_{\mu}(a|\theta)-\hat p_{\mu}(a|\theta) \right |  &\leq \sum_{\theta}\sum_a\sum_{r \in R_{\mathrm{small}}} \mu(\theta)\phi^R(r|\theta) \\
        &\leq 2|S^R| p_{\min}.   
    \end{align}

    It follows that $\phi^*_{\mathrm{rob}} $ achieves in the estimated problem \eqref{eq:compute} a value of at least
    \begin{equation}
        \widehat V(\phi^*_{\mathrm{rob}}) \geq V^* -\frac{4\sqrt{2}\epsilon_h}{\gamma_{\min}\min_{\theta}\mu_0(\theta)}-\frac{8 \epsilon_h}{\gamma_{\min}} - 4(1 + \frac{d_{\max}}{d_{\min}})|\Theta|z_{\min} - 2|S^R|p_{\min}-2|S^R|\epsilon_m,
    \end{equation}
    which also provides a lower bound to the value of \eqref{eq:compute}. 

    By the same arguments, if one takes an optimal scheme for \eqref{eq:compute} and use it in the original problem \eqref{eq:persuasion},  the loss in utility due to estimation errors is at most
    \begin{equation}
        4(1 + \frac{d_{\max}}{d_{\min}})|\Theta|z_{\min} + 2|S^R| p_{\min} + 2|S^R|\epsilon_m.
    \end{equation}
    It follows that the signaling scheme $\hat \phi $ computed by the sender achieves a value of at least 
    \begin{equation}
        V^* -\frac{4\sqrt{2}\epsilon_h}{\gamma_{\min}\min_{\theta}\mu_0(\theta)}-\frac{16 \epsilon_h}{\gamma_{\min}} - 8(1 + \frac{d_{\max}}{d_{\min}})|\Theta|z_{\min} - 4|S^R| p_{\min}-4|S^R|\epsilon_m
    \end{equation}
    in the original problem \eqref{eq:persuasion}.

\end{proof}

\end{document}